\theoremstyle{plain}
\newtheorem{theorem}{Theorem}
\newtheorem{lemma}{Lemma}
\newtheorem{proposition}{Proposition}
\newtheoremstyle{note}{\topsep}{\topsep}{\slshape}{}{\scshape}{}{ }{}
\theoremstyle{note}
\numberwithin{equation}{section}
\numberwithin{theorem}{section}
\numberwithin{definition}{section}
\numberwithin{proposition}{section}
\numberwithin{corollary}{section}
\numberwithin{remark}{section}
\newcommand\scE{{\mathscr E}}
\newcommand\scM{{\mathscr M}}
\newcommand\scN{{\mathscr N}}
\newcommand\mvector{\boldsymbol}
\newcommand\vp{\mvector{p}}
\newcommand\vq{\mvector{q}}
\newcommand\vv{\mvector{v}}
\newcommand\vx{\mvector{x}}
\newcommand\vz{\mvector{z}}
\newcommand\vJ{\mvector{J}}
\newcommand\vX{\mvector{X}}
\newcommand\vY{\mvector{Y}}
\newcommand\vvarphi{\mvector{\varphi}}
\newcommand\field{\mathbb}
\newcommand\R{\field{R}}
\newcommand\C{\field{C}}
\newcommand\Z{\field{Z}}
\newcommand\N{\field{N}}
\newcommand\const{\operatorname{const}}
\newcommand\rmd{\mathrm{d}}
\newcommand\rmi{\mathrm{i}\mspace{1mu}}
\newcommand\rme{\mathrm{e}}
\newcommand\Dz{\frac{\mathrm{d}\phantom{z} }{ \mathrm{d}z}}
\begin{document}
	
\begin{frontmatter}
	\title{Dynamics and  non-integrability of the double spring pendulum}
	
	\author{Wojciech Szumi\'nski$^{1*}$}
	\ead{w.szuminski@if.uz.zgora.pl}
	\address{$^1$Institute of Physics, University of Zielona G\'ora, Licealna 9, PL-65-407, Zielona G\'ora, Poland}
	\author{Andrzej J. Maciejewski$^{2}$}
	\ead{a.maciejewski@ia.uz.zgora.pl}
	\address{$^2$Janusz Gil Institute of Astronomy, University of Zielona G\'ora, Licealna 9, PL-65-407, Zielona G\'ora, Poland}
	\begin{abstract}
	This paper investigates the dynamics and integrability of the double spring pendulum, which has great importance in studying nonlinear dynamics, chaos, and bifurcations. Being a Hamiltonian system with three degrees of freedom, its analysis presents a significant challenge. To gain insight into the system's dynamics, we employ various numerical methods, including Lyapunov exponents spectra, phase-parametric diagrams, and Poincaré cross-sections. The novelty of our work lies in the integration of these three numerical methods into one powerful tool. We provide a comprehensive understanding of the system's dynamics by identifying parameter values or initial conditions that lead to hyper-chaotic, chaotic, quasi-periodic, and periodic motion, which is a novel contribution in the context of Hamiltonian systems.
In the absence of gravitational potential, the system exhibits $S^1$ symmetry, and the presence of an additional first integral was identified using Lyapunov exponents diagrams. We demonstrate the effective utilization of Lyapunov exponents as a potential indicator of first integrals and integrable dynamics. 
The numerical analysis is complemented by an analytical proof regarding the non-integrability of the system. This proof relies on the analysis of properties of the differential Galois group of variational equations along specific solutions of the system. To facilitate this analysis, we utilized a newly developed extension of the Kovacic algorithm specifically designed for fourth-order differential equations. Overall, our study sheds light on the intricate dynamics and integrability of the double spring pendulum, offering new insights and methodologies for further research in this field.
	\end{abstract}
	
	\begin{keyword}
 Variable-length pendulum; Double spring pendulum; Ordinary differential
 equations; Chaos in Hamiltonian
 systems; Numerical simulation;  Integrability analysis  
 \paragraph{Declaration} The article has been published in~\cite{Szuminski:24::}, and the final version is available at: \textbf{\href{https://doi.org/10.1016/j.jsv.2024.118550}{https://doi.org/10.1016/j.jsv.2024.118550} }
 
\end{keyword}

\end{frontmatter}
\section{Introduction  and motivation}
	The dynamics of various types of pendulums have a long history in mechanics and physics. Since the time of Galileo Galilei, analysis of their complicated behavior is still in great scientific activity. It is caused by the simplicity of these systems and by many fundamental and not obvious phenomena exhibited by pendulums.  Indeed, one can find numerous papers, books, and video clips concerning their non-linear and chaotic dynamics. To exemplify it, let us mention the paradigm models such as the spring pendulum~\cite{Broucke:73::,Lee:97::,Maciejewski:04::c,Awrejcewicz:08::, Amer:23::}, the magnetic pendulum~\cite{WOJNA2018214,ZHANG2020115549,SKURATIVSKYI2022116710}, the double and the triple pendulums~\cite{Shinbrot:92::,Stachowiak:06::,mp:13::c,Stachowiak:15::,PUZYROV2022116699,Nigmatullin:14::,MR4191961,MR4412899,Puzyrov:22::,Dyk:24::}, the coupled pendulums~\cite{Huynh2010,Huynh2013,Elmandouh:16::,Szuminski:20::,Szuminski:23::} and the swinging Atwood machine~\cite{Tufillaro:84::,Tufillaro:85::,Tufillaro:90::,Szuminski:22::,Olejnik:23b::}. These models have been extensively studied by researchers both theoretically and practically~\cite{Levien:93::,Pujol:10::,Gomez:21::,ciezkowski:21::,Pilipchuk:22::,Chu:22::,Olejnik:23b::}. In fact, pendulum systems have many potential applications~\cite{Wojna:18::,Liu:19::,Sharghi:22::,Yang:22::}.  For instance, multiple pendulum models play a crucial role in engineering and in synchronization theory~\cite{Dilao:09::,KOLUDA2014977,DUDKOWSKI20181,PhysRevLett.72.2009}.  Analyzing the dynamics of the double pendulum system helps engineers design control algorithms, especially for bipedal robots where motion stability is a critical aspect~\cite{4153387,Sahin:17::}. 
However, a simple spring pendulum can be treated as a classical analog of the quantum phenomenon of Fermi resonance in the infrared spectrum of carbon dioxide~\cite{Vitt:33::,MR1751314}, and currently it has been treated as a system with potential applications in atmosphere modeling~\cite{MR1948160,MR2043791,DeShazer}. 

% Complex dynamics of triple pendulum systems were investigated in[cite], where authors studied %their dynamics both numerically and practically. 	 
 
 Dynamics and integrability analysis of pendulum-like systems with three or more degrees of freedom have not yet been fully explored. This is for several reasons. Namely, 
for Hamiltonian systems with two degrees of freedom, an essential tool for giving a quick insight into the system dynamics is the so-called   Poincar\'e cross-sections method. This technique by a simple cross-sections of the phase curves with a two-dimensional plane shows the coexistence of periodic, quasi-periodic, and chaotic orbits at the section plane giving a qualitative insight into system dynamics. However, for Hamiltonian systems with more than two degrees of freedom, this method is less practical.  One significant issue arises from the fact that the intersections of trajectories with a Poincar\'e section hyperplane which is of dimension higher than two.  This complicates the extraction of meaningful information, as the resulting section may exhibit intricate and convoluted structures that are difficult to interpret. Moreover, higher dimensionality introduces additional complexities, such as the presence of more intricate bifurcation scenarios and a richer set of possible trajectories.

 On the other hand, computations of Lyapunov exponents provide a quantitative description of chaos and its strength and can be effectively applied to a system with many degrees of freedom.
 The analysis of the nontrivial exponent values provides valuable information, such as the presence of new isolating integrals.   However,  in a regular regime, where all exponents are zero, periodic and quasi-periodic solutions are not distinguishable. Therefore, in a recent article~\cite{Szuminski:23::}, the authors combined these two methods by providing compressive information on the dynamics of a model considered.

Nevertheless, the integrability analysis of pendulum-like systems is difficult due to their dependence on several parameters, such as lengths of pendulum arms, masses of bobs, spring stiffness, gravity acceleration, etc. Therefore, to perform a complete integrability analysis of these models, one needs a powerful tool, which enables one to distinguish values of parameters for which a considered system is suspected to be integrable. Such an effective and strong tool is the so-called Morales--Ramis theory~\cite{Morales:99::,Morales:00::}. Let us recall the main theorem of this theory.

\begin{theorem}[Morales--Ramis (1999)]
 If a Hamiltonian system is integrable in the sense of Liouville in a neighborhood of a particular solution, then the identity component of the
  differential Galois group of the variational equations along this solution is Abelian.
\end{theorem}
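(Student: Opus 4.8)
The plan is to show that Liouville integrability endows the variational equations with enough commuting first integrals to pin down the structure of their differential Galois group. First I would linearize the Hamiltonian vector field $X_H$ along the particular solution $\Gamma(t)$: writing a nearby orbit as $\Gamma(t)+\xi$ and keeping the first-order terms gives the variational equations (VE) $\dot\xi = A(t)\,\xi$, where $A(t)$ is the Jacobian of $X_H$ restricted to $\Gamma$. Because the flow of $X_H$ is symplectic, $A(t)$ takes values in the symplectic Lie algebra $\mathfrak{sp}(2n,\C)$, and consequently the differential Galois group $G$ of the VE over the base field $K$ of meromorphic functions on (the Zariski closure of) $\Gamma$ is an algebraic subgroup of $\mathrm{Sp}(2n,\C)$. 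I will write $\mathfrak{g}=\operatorname{Lie}(G^0)\subseteq\mathfrak{sp}(2n,\C)$.

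Next I would transport the first integrals down to the VE. By Liouville integrability there are $n$ functionally independent meromorphic first integrals $F_1=H,F_2,\dots,F_n$ in involution. Since each $F_i$ is conserved by the full flow, every homogeneous component of its Taylor expansion along $\Gamma$ is separately conserved by the corresponding truncation of the flow; in particular the lowest-order non-vanishing homogeneous part $\phi_i$ is a polynomial first integral of the VE, and $H$ always contributes the quadratic form that is precisely the Hamiltonian generating the VE. A direct computation shows that the involution relations $\{F_i,F_j\}=0$ and the functional independence descend to the $\phi_i$ with respect to the constant symplectic structure $\omega$ on the $2n$-dimensional Picard--Vessiot solution space $V$. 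The delicate point is that lowest-order parts can degenerate or become dependent; guaranteeing genuine independence may require choosing $\Gamma$ generically and, if needed, reducing to the normal variational equations by quotienting the tangential and energy directions.

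By the differential Galois correspondence a rational first integral of the VE is exactly a $G$-invariant rational function on $V$, so $G$ now carries $n$ functionally independent rational invariants in involution. The heart of the argument --- and the step I expect to be the main obstacle --- is the purely algebraic lemma that such a $G$ has Abelian identity component. For the model case in which the $\phi_i$ are quadratic I would identify each $\phi_i$ with the endomorphism $M_i\in\mathfrak{sp}(2n,\C)$ defined through $\omega$; $G$-invariance means $M_i$ centralizes $G$, so $M_i\in\mathfrak{z}_{\mathfrak{sp}(2n,\C)}(\mathfrak{g})$, while the involution $\{\phi_i,\phi_j\}=0$ translates into $[M_i,M_j]=0$. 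The $n$ functionally independent, commuting elements $M_1,\dots,M_n$ span an $n$-dimensional abelian subalgebra $\mathfrak{a}$; since $\mathfrak{sp}(2n,\C)$ has rank $n$, generically $\mathfrak{a}$ is a Cartan subalgebra, hence self-centralizing, and then $\mathfrak{g}\subseteq\mathfrak{z}(\mathfrak{a})=\mathfrak{a}$ forces $\mathfrak{g}$ to be Abelian, i.e. $G^0$ is Abelian. The genuine difficulties are to run this argument when the $M_i$ are non-regular or non-semisimple (so that $\mathfrak{a}$ fails to be a Cartan subalgebra and its centralizer is larger) and to extend it from quadratic invariants to invariants of higher degree, where one replaces the endomorphism picture by $G$-invariant elements of the symmetric algebra of $V^{*}$ and argues on the associated invariant Lagrangian foliation. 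Combining Steps 1--2 with this lemma yields the theorem.
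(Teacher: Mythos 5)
You should first note that the paper does not prove this statement at all: it is quoted as a background theorem with a citation to Morales-Ruiz's monograph \cite{Morales:99::,Morales:00::}, and the appendix titled ``Proof of Theorem'' proves the paper's own non-integrability theorem, not this one. So your proposal must be judged against the standard proof in the literature. Its architecture you reproduce correctly (linearize, observe $G\subseteq\mathrm{Sp}(2n,\C)$, push the integrals down to polynomial first integrals of the variational equations, identify first integrals with $G$-invariants, conclude via an algebraic lemma), but the two load-bearing steps are precisely the ones you leave open, and the patches you propose for them do not work. First, the independence of the leading parts $\phi_i$: functional independence of $F_1,\dots,F_n$ does not descend to their lowest-order terms, and your remedy of ``choosing $\Gamma$ generically'' changes the theorem --- the statement asserts the conclusion along \emph{every} particular solution, and in applications (including this very paper, where $\Gamma$ lies on a special invariant manifold) genericity is exactly what one does not have. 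The missing ingredient is Ziglin's lemma, which permits replacing the $F_i$ by polynomial combinations whose leading terms along $\Gamma$ \emph{are} functionally independent; this is a substantive step of the Morales--Ramis proof, not a technicality one can wave at.

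Second, the key algebraic lemma. Your Cartan-subalgebra argument works only in the model situation you yourself flag: quadratic invariants whose matrices $M_i$ are semisimple and span a maximal torus. If the $M_i$ are nilpotent or non-regular (which does occur), the centralizer $\mathfrak{z}(\mathfrak{a})$ is strictly larger than $\mathfrak{a}$ and the inclusion $\mathfrak{g}\subseteq\mathfrak{z}(\mathfrak{a})$ yields nothing; for invariants of degree higher than two there are no $M_i$ at all. The standard proof is uniform and avoids any case analysis: for $X\in\mathfrak{g}=\operatorname{Lie}(G)\subseteq\mathfrak{sp}(2n,\C)$ the linear field $v\mapsto Xv$ is Hamiltonian with quadratic Hamiltonian $h_X(v)=\tfrac{1}{2}\omega(Xv,v)$; invariance of the $f_i$ under $G$ gives $\{h_X,f_i\}=0$, so $Xv$ is tangent to the generic level sets of $(f_1,\dots,f_n)$, which are Lagrangian by involutivity and independence; tangency of $Xv$ and $Yv$ to a Lagrangian foliation forces $\{h_X,h_Y\}(v)=\omega(Xv,Yv)=0$ on a Zariski-dense set, hence identically; and since $\{h_X,h_Y\}=h_{[X,Y]}$ with $Z\mapsto h_Z$ injective on $\mathfrak{sp}(2n,\C)$, this gives $[X,Y]=0$, i.e.\ $\mathfrak{g}$ is Abelian and so is $G^0$. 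Until you incorporate Ziglin's lemma and this foliation argument (or equivalents), your proposal is a correct outline of the known proof with genuine gaps, not a proof.
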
	
		 The Morales-Ramis theory has already been successfully applied to various important physical systems~\cite{Yagasaki:18::,Acosta:18::,Acosta:18b::,Huang:18::,Combot:18::,Mnasri:18::,Shibayama:18::, Maciejewski:18::} as well as non-Hamiltonian systems~\cite{Huang:18::,Szuminski:18::,Maciejewski:20e::,Szuminski:20b::}, to cite just a few. Because of this, many new integrable and super-integrable systems were found~\cite{	Elmandouh:18::,Szuminski:18a::,Szuminski:18b::}. To exemplify it, we mention two generalizations of the swinging Atwood machine model recently studied in~\cite{Szuminski:22::,Szuminski:23::}. In this paper, the authors performed a detailed integrability analysis and found integrable and super-integrable cases with additional first integrals quadratic and quartic in the momenta. These first integrals were later used in constructions of general solutions of nonlinear equations of motions. 
		 
		 In most cases, however, the Morales-Ramis theory has been applied to Hamiltonian systems with two degrees of freedom for which a system of variational equations splits into two subsystems of linear equations. Next, each of these subsystems can be transformed (in general) into an equivalent second-order equation with rational coefficients. For such equations, there exists an algorithm called the Kovacic algorithm~\cite{Kovacic:86::}, which can be used to determine the differential Galois group properties of rationalized variational equations. For Hamiltonian systems with three degrees of freedom, the normal variational equations form a four-dimensional subsystem in most cases and the analysis of its differential Galois group is considerably more complicated.   
Unfortunately, there is no equivalent of the Kovacic algorithm for linear differential equations with rational coefficients of higher order, although many partial results are known~\cite{Singer:95::,Ulmer:03::}. In recent work~\cite{Combot:18b::}, the authors present an algorithm to study the differential Galois group for symplectic differential operators of dimension four, which is perhaps the most comprehensive study available. We will be using this algorithm in our considerations.

Currently, there is great activity studying variable-length pendulum systems, such as the swinging Atwood machine~\cite{Tufillaro:90::} and its generalizatins~~\cite{Szuminski:22::}, the variable-length coupled pendulums recently studied in~\cite{Szuminski:23::}, or the double variable-length pendulum with counterweight mass~\cite{Yakubu:21::,Yakubu:22::,Olejnik:23::}.  
Variable-length pendulum systems are excellent examples for studying nonlinear dynamics, chaos, and bifurcations. 
 Moreover, such models are interesting due to their potential physical applications in crane models and lifting equipment, where understanding motion and stability is crucial for safe and efficient operation~\cite{JU2006376,MR4459645,Freundlich:20::,Shahbazi:16::,Hayati:18::}, The flexibility and maneuverability of the variable length pendulum system make it important in robotics, where dynamic stability is crucial~\cite{PLAUT20133768,YANG2022116727,SHARGHI2022117036}.    Furthermore, the variable lengths pendulum systems have applications in energy conversion and storage, where swinging can be used to generate electricity~\cite{MARSZAL2017251,doi:10.1177/14613484221077474,ABOHAMER2023377}. 

Finally, studying the dynamics of variable length pendulum system can find its possible application in active debris removal missions~\cite{Shi:18::,Aslanov:24::,Bourabah:22::}.  One of the most promising models is a tethered tug-debris system. 
Usually, it consists of the mother satellite of mass $m_1$ moving in an unperturbed Kepler circular orbit and the sub-satellite of mass $m_2$ attached to the mother satellite through an elastic massless tether of length $l(t)$.  Although active debris removal using a tethered tug-debris system is a relatively new topic, several books~\cite{Levin:07::,Troger:10::,Aslanov:12::} and hundreds of scientific articles have been already published~\cite{Aslanov:12::,Aslanov:13::,Ledkov:19::,Shahbazzadeh:22::}. However, many aspects of this problem remain unexplored.  For example, the equations governing the dynamics of tethered satellites are highly nonlinear. Hence the dynamical behavior is very rich and in some cases can be chaotic~\cite{Misra:01::,Jin:16::,Aslanov:16::}.   Because chaos in dynamical systems in space can be exceedingly dangerous, especially concerning the stability and predictability of artificial celestial body movements, it is crucial to detect values of parameters and initial conditions for which the motion of the system will be regular and predictable.  Indeed,  small changes in initial conditions or parameters can lead to drastically different trajectories along which the system will move.

  The aim of this work is to perform a comprehensive analysis of the dynamics and integrability of the double-spring pendulum system, which can be regarded as a mathematical model of the tethered satellite system.
 Because the proposed model is a Hamiltonian system with three degrees of freedom, its analysis is quite challenging. The equations of motion obtained are strongly nonlinear and require careful numerical analysis to obtain reliable results. However, there is a lack of literature on its dynamics and integrability analysis. This is the main topic of this paper. 
 To study their complex dynamics, we employ numerical methods such as Lyapunov exponent diagrams and phase-parametric diagrams. Moreover, for special cases, we use the Poincar\'e cross-section method.  We join these three numerical techniques to get a complete insight into the dynamics of the considered models. For their exhaustive integrability analysis, we utilize the differential Galois theory and the Kovacic algorithm of dimension four.

 The rest of the paper is organized as follows. In Sec.~\ref{sec:model_1},  we introduce the model under consideration and derive the corresponding equations of motion.  In Sec.~\ref{sec:numerical}, we give a qualitative and quantitative description of its chaotic motion by joining numeral methods, such as Lyapunov’s exponent spectra, phase-parametric diagrams, and Poincar\'e cross-sections. Next, in Sec.~\ref{sec:integrability},  we perform an effective integrability analysis of the model with the help of Morales–Ramis theory and the application of the Kovacic algorithm of dimension four. The proof of its non-integrability for all values of the parameters is given. 
The basic facts and theorems devoted to the integrability analysis of variational equations of dimension four are included in the appendix.  
	\section{Description of the system \label{sec:model_1}}

	\begin{figure}[t]
\centering{
\resizebox{70mm}{!}{\begingroup%
		  \makeatletter%
		\providecommand\color[2][]{%
			\errmessage{(Inkscape) Color is used for the text in Inkscape, but the package 'color.sty' is not loaded}%
			\renewcommand\color[2][]{}%
		}%
		\providecommand\transparent[1]{%
			\errmessage{(Inkscape) Transparency is used (non-zero) for the text in Inkscape, but the package 'transparent.sty' is not loaded}%
			\renewcommand\transparent[1]{}%
		}%
		\providecommand\rotatebox[2]{#2}%
		\newcommand*\fsize{\dimexpr\f@size pt\relax}%
		\newcommand*\lineheight[1]{\fontsize{\fsize}{#1\fsize}\selectfont}%
		\ifx\svgwidth\undefined%
		\setlength{\unitlength}{248.83760708bp}%
		\ifx\svgscale\undefined%
		\relax%
		\else%
		\setlength{\unitlength}{\unitlength * \real{\svgscale}}%
		\fi%
		\else%
		\setlength{\unitlength}{\svgwidth}%
		\fi%
		\global\let\svgwidth\undefined%
		\global\let\svgscale\undefined%
		\makeatother%
  \begin{picture}(1,0.61995447)%
	        \put(0,0){\includegraphics[width=\unitlength,page=1]{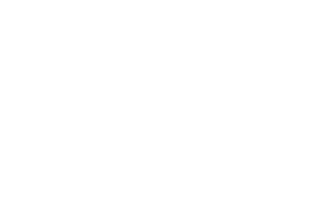}}%
    \put(0.3035064,0.17546961){\color[rgb]{0,0,0}\makebox(0,0)[lt]{ \smash{  \large{$m_1$} }}}%
    \put(0,0){\includegraphics[width=\unitlength,page=2]{Fig_1}}%
    \put(0.92636166,0.01347408){\color[rgb]{0,0,0}\makebox(0,0)[lt]{ \smash{  \large{$m_2$} }}}%
    \put(0.36861216,0.21843045){\color[rgb]{0,0,0}\makebox(0,0)[lt]{ \smash{   }}}%
    \put(0.13612637,0.44951251){\color[rgb]{0,0,0}\makebox(0,0)[lt]{ \smash{  \large{$\varphi_1$} }}}%
    \put(0.44236514,0.10692554){\color[rgb]{0,0,0}\makebox(0,0)[lt]{ \smash{  \large{$\varphi_2$} }}}%
    \put(0,0){\includegraphics[width=\unitlength,page=3]{Fig_1}}%
    \put(0.33073966,0.37193709){\color[rgb]{0,0,0}\makebox(0,0)[lt]{ \smash{  \large{$l_1$} }}}%
    \put(0.65695179,0.15667904){\color[rgb]{0,0,0}\makebox(0,0)[lt]{ \smash{  \large{$l_2(t)$} }}}%
    \put(0,0){\includegraphics[width=\unitlength,page=4]{Fig_1}}%		
      \end{picture}
    	\endgroup}
\caption{Geometry of the double spring pendulum. The system oscillates under the influence of the gravitational and Hooke’s potentials. Here $m_1$ and $m_2$ are masses of the bobs, while $l_1$ and $l_2(t)$ are lengths of the pendulum's arms. The Lagrange function describing the model is defined in Eq.~\eqref{eq:lagrangian}}
\label{fig:geometria1}
}
\end{figure}
In Fig.~\ref{fig:geometria1} the geometry of the system is presented. We study the double spring pendulum system.  It consists of a simple mathematical pendulum of mass $m_1$ and length $l_1$  and a weightless spring of length $l_2=l_2(t)$ with a mass $m_2$ attached at its end. The system moves under the influence of the gravitational and Hooke's potential forces. 

The Lagrange function of the system is as follows
\begin{equation}
\begin{split}
\label{eq:LL}
L&=T-V_g-V_k,\\
T&=\frac{1}{2}m_1\left(\dot x_1^2+\dot y_1^2\right)+\frac{1}{2}m_2\left(\dot x_2^2+\dot y_2^2\right),\\
V_g&=-g m_1 x_1-g m_2 x_2,\\
V_k&=\frac{1}{2}k\left(\sqrt{\left(x_2-x_1\right)^2+\left(y_2-y_1\right)^2}-l_{20}\right)^2
\end{split}
\end{equation}
where  $k\in \R^+$ is the Young modulus of the spring and $l_{20}$ is its natural length.
Following Fig.~\ref{fig:geometria1}, we utilize the polar coordinates \begin{equation}
\label{eq:polar}
\begin{split}
x_1(t)&=l_1\cos\varphi_1(t),\\ y_1(t)&=l_1\sin\varphi_1(t),\\
x_2(t)&=x_1(t)+l_2(t)\cos\varphi_2(t),\\ y_2(t)&=y_1(t)+l_2(t)\sin\varphi_2(t).
\end{split}
\end{equation}
Hence, we obtain a system of three degrees of freedom defined by the Lagrange function
	\begin{equation}
		\begin{split}
			\label{eq:lagrangian}
			L&= 
			\frac{1}{2}\left(m_1+m_2\right)l_1^2\dot \varphi_1^2 +\frac{1}{2}m_2\left(\dot l_2^2+l_2^2\dot\varphi_2^2\right)+m_2 l_1 \dot\varphi_1\left[l_2\dot\varphi_2\cos(\varphi_1-\varphi_2)-\dot l_2\sin(\varphi_1-\varphi_2)\right],\\
		&+(m_1+m_2)g l_1\cos\varphi_1-m_2g l_2\cos\varphi_2-\frac{1}{2}k\left(l_2-l_{20}\right)^2.
		\end{split}
	\end{equation}
To minimize the number of parameters and thus simplify our calculations as much as possible, we rescale $
	\ell(t)=l_2(t)/l_1$, and we introduce a new time as 
	\begin{equation}
	\label{eq:time}t\to\omega_k^{-1} \,t,\qquad \text{where}\qquad \omega_k=\sqrt{\frac{k}{m_2}}.
	\end{equation}
 The dimensionless form of Lagrangian~\eqref{eq:lagrangian} is as follows.
		\begin{equation}
	\label{eq:lag_{resc}}
	\begin{split}
		L&=\frac{1}{2}\left(\dot \ell^2+\ell^2\dot\varphi_2^2\right)+\frac{1}{2}\left(\mu+1\right)\dot\varphi_1^2+\ell\cos(\varphi_1-\varphi_2) \dot\varphi_1\dot\varphi_2-\sin(\varphi_1-\varphi_2)\dot\ell\dot\varphi_1\\ &+\omega[(\mu+1)\cos\varphi_1+\ell \cos\varphi_2]- \frac{(\delta-\ell)^2}{2}.
		\end{split}
	\end{equation}
Here $\mu,\delta,\omega\in \R^+$ are dimensionless parameters defined as
	\begin{equation*}
	\label{eq:parek}
		\mu:= \frac{m_1}{m_2},\quad \delta:=\frac{l_{20}}{l_1},\quad \omega :=\frac{\omega_g^2}{\omega_k^2},\quad  \text{where}\quad \omega_g=\sqrt{\frac{g}{l_1}}.
	\end{equation*}
In our analysis, we will use only Lagrange variables. One can introduce canonical variables in order to use Hamilton's formalism directly; however, in canonical variables, the equations of motion are more complicated.    We will use the following variables: the
angular velocities $\omega_1=\dot \varphi_1, \omega_2=\dot \varphi_2$ and
$v=\dot \ell$ instead of momenta.  Hence, a set of three second-order Lagrange
equations can be rewritten as a system of six first-order differential equations
of the form

	\begin{equation} \begin{cases}
			\label{eq:vv}
			\begin{split}
				&\dot \ell=v,\\
				&   \dot v=\ell\omega_2^2+\cos(\varphi_1-\varphi_2)\omega_1^2+\omega\cos\varphi_1\cos(\varphi_1-\varphi_2) +\frac{\delta-\ell}{2\mu}\left(1+2\mu-\cos[2(\varphi_1-\varphi_2)]\right),\\
				&\dot\varphi_1=\omega_1, \\
				& \dot\omega_1=\frac{\delta-\ell}{\mu}\sin(\varphi_1-\varphi_2)-\omega\sin\varphi_1,\\
				&\dot\varphi_2=\omega_2,\\
				&\dot \omega_2=\frac{1}{\ell}\left(\sin(\varphi_1-\varphi_2)\omega_1^2-2v \omega_2\right)+\frac{1}{\ell}\left(\frac{\ell-\delta}{2}\sin[2(\varphi_1-\varphi_2)]+\omega\cos\varphi_1\cos(\varphi_1-\varphi_2)\right).
		\end{split}  \end{cases}
	\end{equation}
	System~\eqref{eq:vv} has the following energy first integral
	\begin{equation}
	\begin{split}
	\label{eq:E}
	E&=\frac{1}{2}\left(\dot \ell^2+\ell^2\dot\varphi_2^2\right)+\frac{1}{2}\left(\mu+1\right)\dot\varphi_1^2+\ell\cos(\varphi_1-\varphi_2) \dot\varphi_1\dot\varphi_2-\sin(\varphi_1-\varphi_2)\dot\ell\dot\varphi_1\\ &-\omega[(\mu+1)\cos\varphi_1+\ell \cos\varphi_2]+\frac{(\delta-\ell)^2}{2}.
	\end{split}
	\end{equation}
	\section{Numerical analysis \label{sec:numerical}}

%	\begin{figure}[htp]
%	\centering
%	\includegraphics[width=0.42\linewidth]{mu_3;delta_0.5;omega_0.1;polarb}\quad 
%	\includegraphics[width=0.42\linewidth]{mu_3;delta_0.5;omega_1;polarb
%}	\\
%\vskip 20pt
%\includegraphics[width=0.42\linewidth]{mu_3;delta_0.5;omega_1.4;polarb}
%	\includegraphics[width=0.42\linewidth]{mu_3;delta_0.5;omega_100;polarb} \\
%	\caption{$\mu=3, \delta=0.5$ with zeroth values of remaining initial conditions}	%\label{fig:geometria3}
%\end{figure}

%\begin{figure}[htp]
%	\centering
%	\includegraphics[width=0.32\linewidth]{o =0.4m =3.d =0.5}
	%			\includegraphics[width=0.32\linewidth]{o =0.40.m =3.d =0.5}
	%	\includegraphics[width=0.32\linewidth]{o =0.8m =3.d =0.5}\\
%\includegraphics[width=0.32\linewidth]{o =1.4m =3.d =0.5}\includegraphics[width=0.32\linewidth]{o %=20.m =3.d =0.5}\includegraphics[width=0.32\linewidth]{o =1000.m =3.d =0.5}
	%\caption{$\mu=3, \delta=0.5$ with zeroth values of remaining initial conditions}
	%\label{fig:geometria}
%\end{figure}
	To visualize the dynamics of the system, we perform a numerical analysis with the help of Lyapunov's 	To visualize the dynamics of the system, we perform a numerical analysis with the help of Lyapunov's exponent diagrams (Lyapunov's diagrams in short), phase-parametric diagrams, and the Poincar\'e cross-sections method. In order, to get the most reliable results, we systematically join these three methods and we give a complete picture of the system's complex dynamics.
		\begin{figure}[t]
	\centering
		\includegraphics[width=0.45\linewidth]{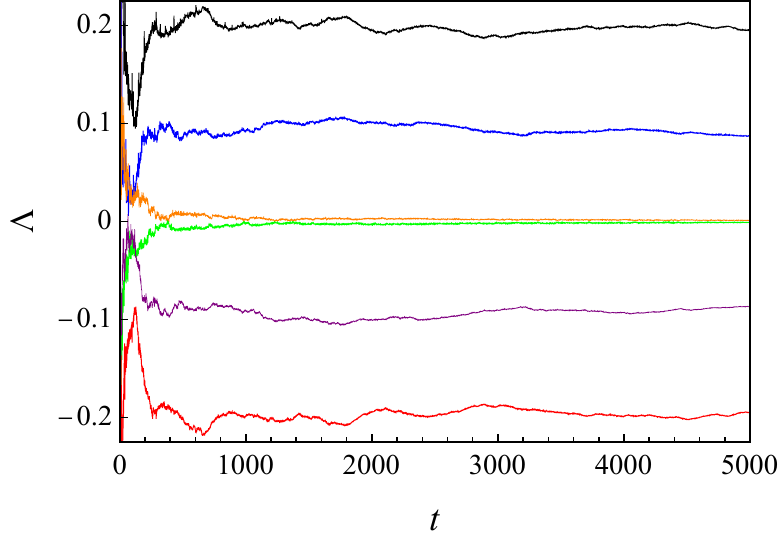}
	\caption{(Color online) The Lyapunov exponents spectrum $\Lambda=\{\lambda_1,\lambda_2,\lambda_3,\lambda_4,\lambda_5,\lambda_6\}$ of system~\eqref{eq:vv}, computed for the constant values of parameters~\eqref{eq:parki} and initial condition~\eqref{eq:ini}. 
For a sufficient amount of time steps, the convergence of the Lyapunov exponents is ensured. Because two Lyapunov exponents are positive, the system reveals hyperchaotic dynamics
	\label{fig:lyap_{time}}}
\end{figure}

		\begin{figure*}[t]
		\centering
		\includegraphics[width=0.37\linewidth]{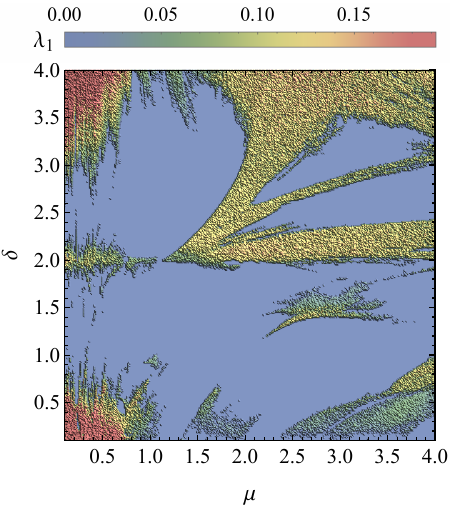}\hspace{1.5cm}
		\includegraphics[width=0.37\linewidth]{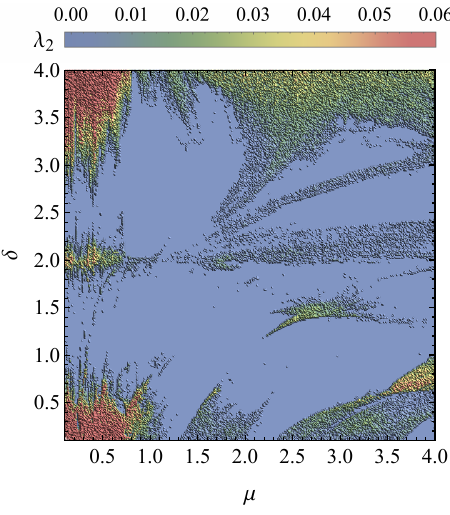}	
		\caption{(Color online)
The diagrams of two Lyapunov exponents  $\lambda_1$ and $\lambda_2$
computed for the grid of $400\times 400$ values of  $\mu,\delta\in[0.1,4]$  with $\omega=1$. The numerical integrations were performed successively for the initial condition~\eqref{eq:ini}.  	 Color scales are proportional to the magnitudes of $\lambda_1$ and $\lambda_2$, respectively.  The blue color represents the regular regions, while the remaining domain is responsible for the system's chaotic motion.  These two diagrams mostly coincide, which confirms the hyperchaotic nature of the system}
		\label{fig:lap_parki3}
	\end{figure*}
	\begin{figure*}[htp]
		\centering

				\includegraphics[width=0.37\linewidth]{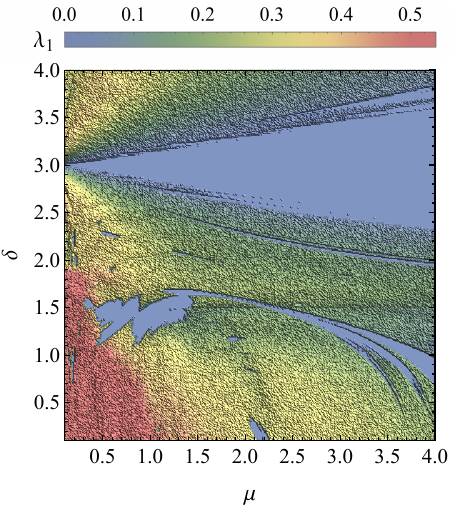}\hspace{1.5cm}
		\includegraphics[width=0.37\linewidth]{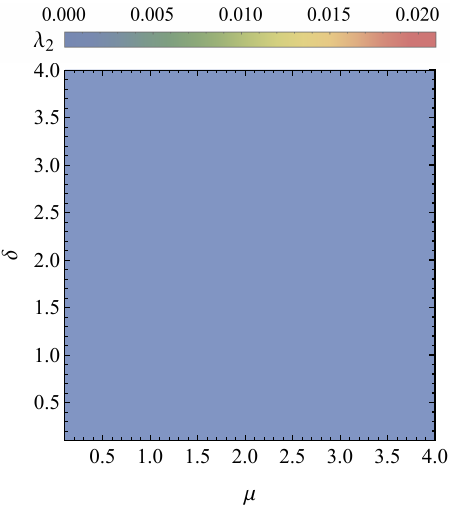}\\		\caption{(Color online) The diagrams of two Lyapunov exponents  $\lambda_1$ and $\lambda_2$
computed for the grid of $400\times 400$ values of   $\mu,\delta\in[0.1,4]$  with $\omega=0$. The numerical integrations were carried out successively for the initial condition~\eqref{eq:ini}.  	 The color scales are determined by the magnitudes of $\lambda_1$ and $\lambda_2$, respectively. Blue regions correspond to regular (non-chaotic) dynamics, while the remaining domain is responsible for complex dynamics. Due to the presence of the additional first integral, the exponent $\lambda_2=0$}
		\label{fig:lap_parki4}
	\end{figure*}
	\subsection{Diagrams of Lyapunov exponents}
	
	\begin{figure*}[htp]
	\centering
	\subfigure[\, \,$\omega=1$]{
	\includegraphics[width=0.43\linewidth]{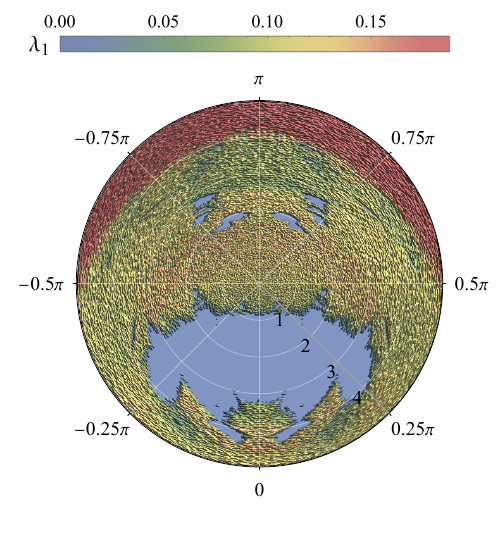}}	\hspace{1.3cm}
\subfigure[\, \,$\omega=1.6$]{
\includegraphics[width=0.43\linewidth]{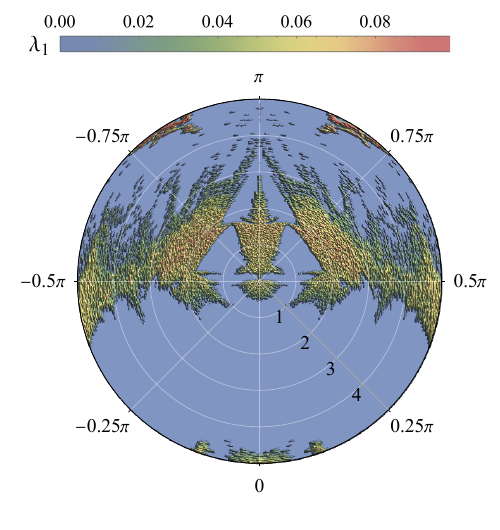}}\\
\subfigure[\, \,$\omega=6$]{
\includegraphics[width=0.43\linewidth]{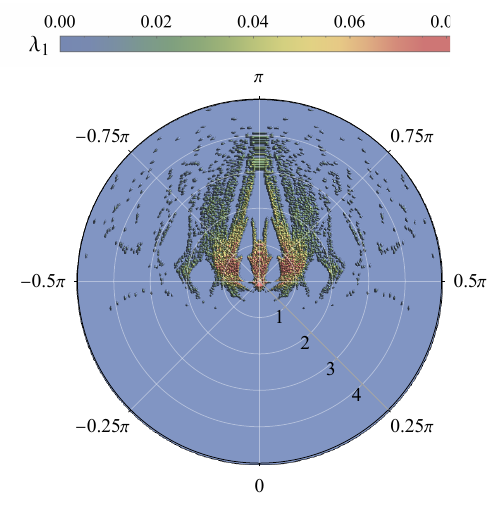}}\hspace{1.3cm}
\subfigure[\, \,$\omega=100$]{
\includegraphics[width=0.43\linewidth]{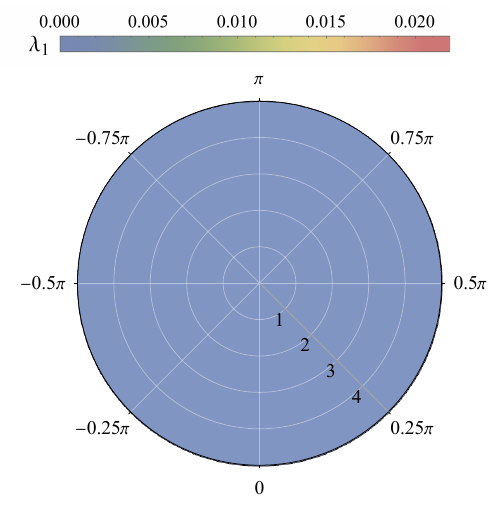}}
	\caption{(Color online)  The Lyapunov diagrams for the system~\eqref{eq:vv} in the polar plane  $(\ell(0),\varphi_2(0))$ constructed for the parameters  $\mu=3,\delta=1$ with varying $\omega$,  and initial conditions~\eqref{eq:ini_2}.
In the radial direction, we measure $\ell(0)\in[0.001,5]$; in angular direction, we measure $\varphi_2(0)\in[-\pi,\pi]$.  The color scale is proportional to the magnitude of the largest exponent $\lambda_1$. The plots visualize two zones: regular and chaotic. Blue regions indicate regular dynamics, while regions with $\lambda_1>0$ correspond to the system's chaotic behavior 	}\label{fig:lyap_polar1}
\end{figure*}
Lyapunov exponent is a measure of a system’s predictability and sensitivity to changes in its initial conditions.  Thus, it is commonly used in studying chaos in dynamical systems by 
quantifying the rate of exponential divergence of nearby trajectories in a phase space. According to chaos theory, if at least one Lyapunov exponent is positive, then a considered system
is sensitive to changes in its initial conditions and chaotic phenomena appear. 
Moreover, if two or more Lyapunov exponents are positive, then the system is treated as the hyperchaotic one~\cite{Sprott:10::}.  In our research, we use the standard algorithm introduced by Benettin et al.~\cite{Benettin:80::} to calculate two-parameter diagrams of the Lyapunov exponents. 	This technique is based on the integration of variational equations for $n$ initial conditions with successive applications of the Gram-Schmidt orthonormalization procedure.
In the presented study, we employ the standard algorithm implemented in Mathematica by Sandri~\cite{Sandri:96::}. However, in order to enhance the speed and precision of our results, we utilize the NDSolve solver with the "ExplicitRungeKutta" method  instead of Euler’s method.  After conducting a thorough and detailed analysis, we select a period between re-orthonormalizations of $T=1$, a maximum step size of $\tau=0.01$, and a number of steps typically set to $k=5000$ or more. The working precision for the entire numerical analysis is set to at least 12, ensuring the maintenance of precision up to 12 digits during internal computations. Moreover, the constancy of the energy first integral $E$, as given in~(2.7), is utilized to verify the numerical integrations. We maintain both relative and absolute errors within the range of~$10^{-11}$. These chosen parameters facilitate the efficient computation of Lyapunov exponents, ensuring their robust convergence and enabling possible fast calculations.

Fig.~\ref{fig:lyap_{time}} shows the Lyapunov exponents spectrum for system~\eqref{eq:vv}, computed for constant values of the parameters
\begin{equation}
\label{eq:parki}
\mu=3,\quad \delta=1,\quad \omega=1,
\end{equation}
under  the initial condition
\begin{equation}
\label{eq:ini}
\begin{split}
& \ell(0)=3,\quad \varphi_1(0)=0,\quad \varphi_2(0)=\frac{\pi}{4},\quad v(0)=\omega_1(0)=\omega_2(0)=0.
\end{split}
\end{equation}
As the considered system has a six-dimensional phase space,  there are six Lyapunov exponents~$\Lambda=\{\lambda_i\}$, $i=1,\ldots,6$, where $\lambda_1$ is the largest Lyapunov exponent. As we can notice, the integration time $~5000$ units was sufficient to ensure the convergence of the Lyapunov exponents. 
Moreover, we observe that they
 appear in additive inverse pairs, so they sum to zero. This is by Liouville's theorem~\cite{Liouville:83::}, which states that a conservative Hamiltonian system is volume-preserving. Therefore, if $\lambda$ is a Lyapunov exponent, then $-\lambda$ is also. Due to the existence of the first integral in the system~\eqref{eq:vv}, which is the conservation of total energy $E$, two Lyapunov exponents tend to zero.  The numerically computed values of $\lambda_3,\lambda_4$   are not exactly zero on finite time scales. Therefore, for practical purposes of our investigation, we take the zero value of the Lyapunov exponent whenever it is less than $0.003$.  
 Thus, the possible Lyapunov exponent spectrum of the system~\eqref{eq:vv} is $\Lambda=\{\lambda_1,\lambda_2,0,0,-\lambda_2,-\lambda_1\}$, where $ \lambda_1$ is the maximum Lyapunov exponent. As $\lambda_1$ and $\lambda_2$ are non-zero for the given values of the parameters~\eqref{eq:parki} and the initial conditions~\eqref{eq:ini}, the system dynamics is hyperchaotic.
 
By repeating the above procedure with varying parameter values or initial conditions, we can create Lyapunov diagrams on the parameter plane. 	Figs.~\ref{fig:lap_parki3}-\ref{fig:lap_parki4} present diagrams of Lyapunov exponents~$\lambda_1$ and $\lambda_2$ for the parameters pairs~$(\mu, \delta)$ with varying values of $\omega$. Color scales are proportional to the magnitudes of the exponents $\lambda_1$ and $\lambda_2$, respectively.  The colorful diagrams were obtained by numerical computations of Lyapunov exponents on a grid of $400\times 400$ values of parameters $\mu,\delta\in[0.1,4]$ with initial conditions~\eqref{eq:ini}. These diagrams show how changes in the values of the parameters $(\mu,\delta)$ affect the dynamics of the system. 
The blue region corresponds to regular, non-chaotic oscillations, and the rest of the domain is responsible for the chaotic motion with two positive Lyapunov exponents. As we can see in Fig.~\ref{fig:lap_parki3}, for $\omega=1$ the diagrams of exponents $\lambda_1$ and $\lambda_2$ mostly coincide confirming the hyperchaotic nature of the system. For relatively small values of the mass ratio $\mu\in[0.1,0.75]$ and for highly stretched $\delta\in [0.1,0.5]$, and compressed $\delta\in [3.5,4]$ spring, the strength of chaos is prominent with the largest Lyapunov exponent reaching its maximal value $\lambda_1\approx 0.19$. In the prescribed Lyapunov diagram, we can see that there are four large regular seas separated by chaotic regions. Inside these regular regions, one can find values of $\mu$ and $\delta$ for which the system's motion is periodic. We will explore this in more detail in the next subsection.   

The situation is completely different when we set $\omega=0$.  The diagram illustrated in Fig.~\ref{fig:lap_parki4}
shows complex dynamics for a wide range of values $(\mu, \delta)$, while the diagram of $\lambda_2$ presents a regular pattern with $\lambda_2\approx 0$ for every $(\mu, \delta)$. 
 It is caused by the presence of the additional first integral inside the system. Indeed, for $\omega=0$ the system possesses symmetry $\field{S}^1$ since Lagrangian~\eqref{eq:lag_{resc}} depends only on angle differences. Therefore, for $\omega=0$ the system can be reduced to a model of two degrees of freedom for which the Poincar\'e sections method can be adopted.  The above makes the Lyapunov exponents spectrum a possible indicator for searching for additional first integrals and integrable dynamics. Indeed, in a recent paper~\cite{Szuminski:23::}, the author algorithmically used the method of Lyapunov exponents in the systematic search for the first integrals of the systems. When examining the bottom left diagram presented in Fig.~\ref{fig:lap_parki4}, we can observe some kind of tunnel that leads to the regular sea through a chaotic area. As we shall discover, many periodic solutions exist within this region.

Lyapunov exponent diagrams can also be useful for understanding system dynamics and the strength of chaos by plotting the values of the largest Lyapunov exponent $\lambda_1$ as a function of initial conditions of state variables. Therefore, a qualitative and quantitative description of chaos is possible.  For our purpose, it is sufficient to plot the largest Lyapunov exponent to distinguish the regions with chaotic motion from the regular ones. % This is because if there exists an additional first integral inside the system, then it will be independent of initial conditions. 
In Fig.~\ref{fig:lyap_polar1}, we present the polar plots of Lyapunov exponents diagrams for fixed values of the parameters
\begin{equation}
\label{eq:parki_2}
\mu=3,\quad \delta=1,\quad  \text{with}\quad \omega\in\{1,1.6,6,100\},
\end{equation}
and initial conditions were chosen as
\begin{equation}
\label{eq:ini_2}
\begin{split}
\ell(0)\in[0.001,5],\quad \varphi_1(0)=0,\quad \varphi_2(0)\in[-\pi,\pi],\quad v(0)=\omega_1(0)=\omega_2(0)=0.
\end{split}
\end{equation}
Here, $\ell(0)$ and $\varphi_2(0)$ are treated as the control parameters. The colorful diagrams visible in Fig.~\ref{fig:lyap_polar1} were obtained by numerically computing Lyapunov’s exponents on a grid of $400\times 400$ values of $(\ell(0),\varphi_2(0))$ and then plotted in the polar-plane. Therefore, in the radial direction, we measure the ratio of the initial lengths of the pendulums, that is, $\ell(0)$, while in the angular direction values of $\varphi_2(0)$ are given.  The color scale is proportional to the magnitudes of the largest Lyapunov exponent $\lambda_1$.

 The diagrams presented in Fig.~\ref{fig:lyap_polar1} are very useful because they give information on how the change in the initial length of the spring $\ell(0)$ and the initial swing angle $\varphi_2(0)$
 affect the dynamics of the whole system. Moreover, we can easily estimate the range of initial conditions for which the motion of the system is regular. For example, for $\omega=1$, which indicates $gl_1=k/m_2$, the amount of area responsible for the chaotic motion of the system is prominent. For the initial values $\ell(0)\in(0,1)$, so the spring is highly compressed, the system performs complex dynamics for arbitrary values of $\varphi_2(0)\neq 0$. However, for $\ell(0)\in(1,3)$ and for sufficiently small amplitudes of $\varphi_2(0)\in(-0.4\pi,0.4\pi)$, the system performs regular, non-chaotic oscillations. 	On the other hand, for larger values of $\varphi_2(0)$, the chaotic motion occurs and has maximum strength around $\varphi_2(0)\approx \pm\pi$, as expected. For larger values of $\ell(0)$, so the spring is initially stretched, the motion of the system is highly complex even for very small initial amplitudes of $\varphi_2$. 
The rest of the Lyapunov diagrams visible in Fig.~\ref{fig:lyap_polar1}, show the relation between initial values of $(\ell(0),\varphi_2(0))$ and further increasing values of $\omega$.  We see that
for higher values of $\omega$, the system is less chaotic since $\lambda_1$ is decreasing.  Moreover, 
the growth of areas responsible for the regular oscillations is visible. In particular, for $\omega=100$, there are no signs of chaos since $\lambda_1\approx 0$, for every initial condition. The above suggests the integrability of the system for $\omega\to \infty$. This particular case will be examined in more detail in Section \ref{sec:poin}. 

  	\subsection{Bifurcation diagrams and Lyapunov exponents \label{sec:sec_bif}}
  \begin{figure*}[t]	\centering
	\subfigure[$\ell(0)=3$]{
	\includegraphics[width=0.35\linewidth]{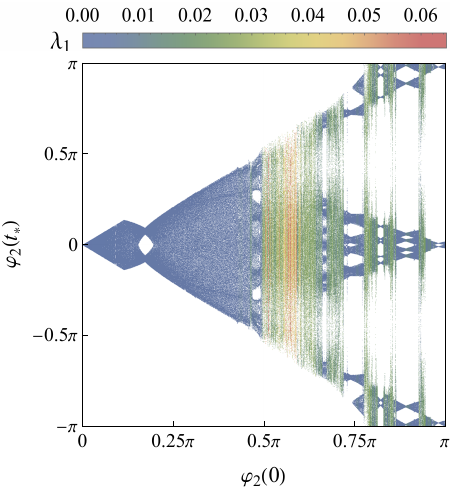}}	\hspace{1.5cm}\subfigure[$\vartheta_{2}(0)=0.75\pi$]{
	\includegraphics[width=0.35\linewidth]{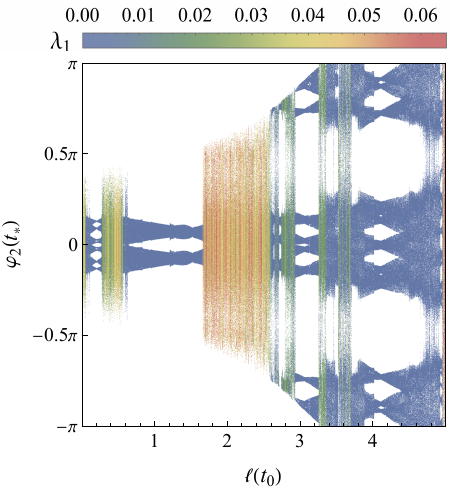}}
	\caption{(Color online)  Phase-parametric diagrams of system~\eqref{eq:vv} versus: a) the initial swing angle $\varphi_2(0)\in[0,\pi]$, b) the ratio of the initial lengths of the pendulums $\ell(0)\in [0.001,5]$. The initial conditions and values of the parameters are taken from Fig.~\ref{fig:lyap_polar1}(b). There are two cases to consider: a) we move in the angular direction of the Lyapunov diagram~\ref{fig:lyap_polar1}(b) for $\ell(0)=3$ and $\varphi_2(0)\in[0,\pi]$; b) we choose the initial swing angle to be $\varphi_2(0)=0.75\pi$ and move in the radial direction of the Lyapunov diagram~\ref{fig:lyap_polar1}(b) for $\ell(0)\in[0.001,5]$. Here $\varphi_2(t_\star)$ are the values of~$\varphi_2$, when the trajectory crosses the section plane $\ell=2.6$, for some~$t_\star$. The diagram is combined with the largest Lyapunov exponent $\lambda_1$ and the color scale is proportional to its magnitude. Very good agreement of the phase-parametric diagram with the Lyapunov diagram~\ref{fig:lyap_polar1}(b) is observed. 
 The coexistence of periodic, quasi-periodic, and chaotic orbits together with ,,periodic windows'' between chaotic layers is visible.   Exemplary periodic, quasi-periodic, and chaotic orbits are plotted in Fig.~\ref{fig:periodicki}}
	\label{fig:bifek}
\end{figure*}
  
\begin{figure*}[t]	\centering
	\subfigure[$\ell(0)=0.2$]{
	\includegraphics[width=0.32\linewidth]{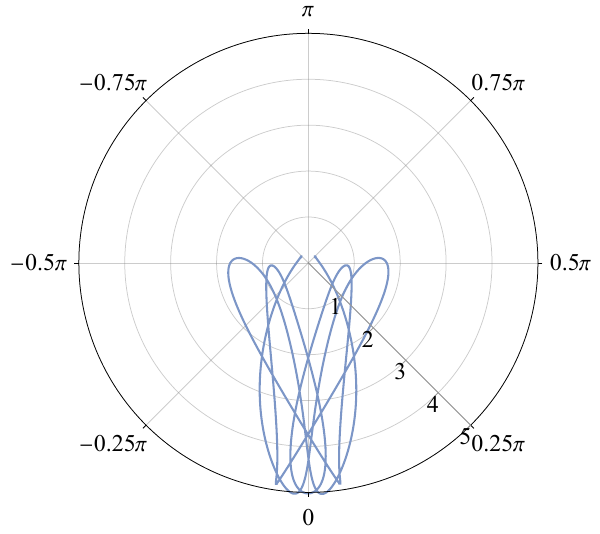}}
		\subfigure[$\ell(0)=4.5$]{
	\includegraphics[width=0.32\linewidth]{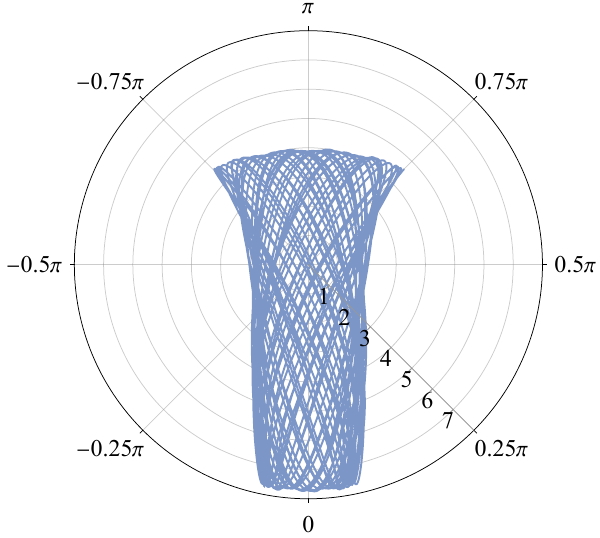}}
	\subfigure[$\ell(0)=2$]{
	\includegraphics[width=0.32\linewidth]{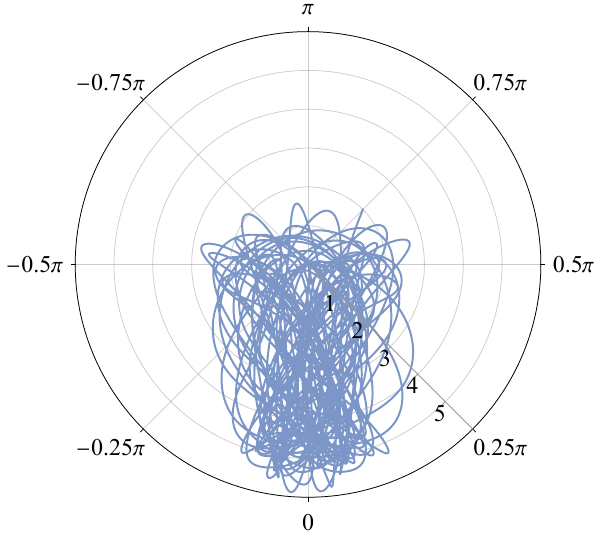}}
	\caption{(Color online)  Polar plots of exemplary a) periodic, b) quasi-periodic, and c) chaotic orbits of the double spring pendulum. The parameters values and initial conditions were taken from the phase-parametric diagram presented in Fig.~\ref{fig:bifek}(b). In the radial direction, we measure $\ell(t)$; in the angular direction, we measure $\varphi_2(t)$}
	\label{fig:periodicki}
\end{figure*}

\begin{figure}[http]
	\centering
	\includegraphics[width=0.9\linewidth]{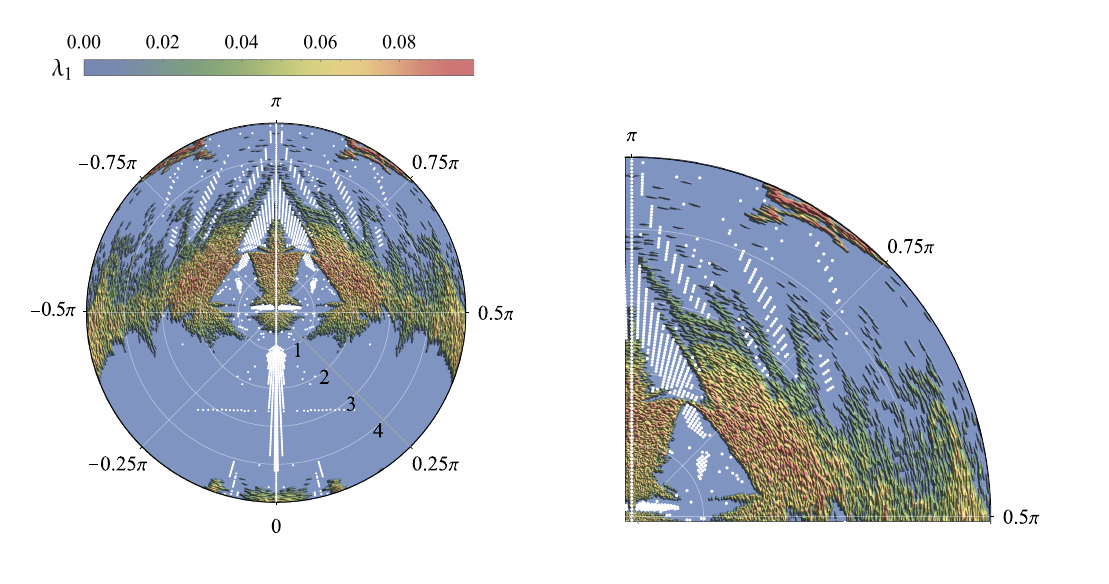}
	\caption{(Color online) The Lyapunov diagram visible in Fig.~\ref{fig:lyap_polar1}(b) with depicted periodic orbits. Each white point corresponds to distinct initial conditions	 $(\ell(0),\varphi_2(0))$  for which the motion of the spring pendulum is periodic with certain ratios of frequencies. The  polar plots of exemplary periodic orbits are depicted in Fig.~\ref{fig:periodicki_polar}	\label{fig:bif_pik}}
\end{figure}
The Lyapunov exponents is the essential tool for giving the quantitative description of chaos. However, in the context of the Hamiltonian system, this method has one weak point. Namely, in the blue regions of the Lyapunov diagrams as in Fig.~\ref{fig:lap_parki3}-\ref{fig:lyap_polar1}, where $\lambda_1\approx 0$, we are unable to distinguish values of the control parameters, for which the motion of the system is periodic or quasi-periodic.
This lack of distinction is a significant inconvenience since knowledge about the existence of resonance orbits in Hamiltonian dynamical systems is crucial. Nevertheless, constructing phase-parametric diagrams is an effective way to solve this problem.

 The phase-parametric diagram gives a
qualitative description of the system dynamics by plotting a state variable as a function of a suitably chosen control parameter~\cite{Szuminski:23::,Szuminski:20::}. However, in the context of Hamiltonian systems, it is convenient to show intersections of phase curves with a properly chosen surface of section. In this way, periodic, quasiperiodic, and chaotic orbits are distinguishable.  
 As the treated system is a spring-pendulum system, we use initial $\varphi_2(0)$ and $\ell(0)$ as bifurcation parameters. As the cross-section plane, we chose the length of the spring-mass at the equilibrium, that is, $\ell=\delta+\omega$.

\begin{figure*}[t]	\centering
	\subfigure[$\ell(0)=4.52, \varphi_2(0)=0.26$]{
	\includegraphics[width=0.32\linewidth]{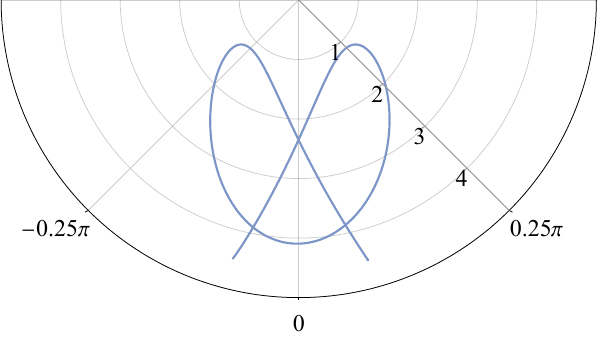}}
	\subfigure[$\ell(0)=1.01, \varphi_2(0)=1.52$]{
	\includegraphics[width=0.32\linewidth]{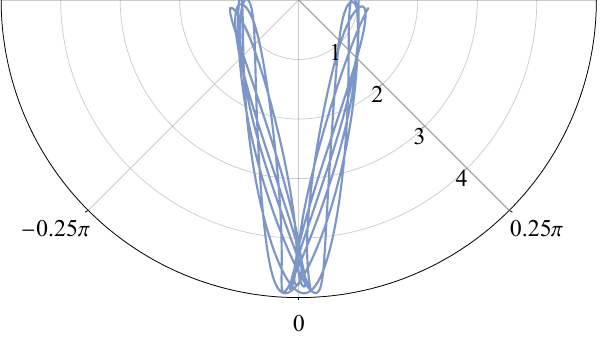}}
		\subfigure[$\ell(0)=3.37, \varphi_2(0)=0.7\pi$]{
	\includegraphics[width=0.32\linewidth]{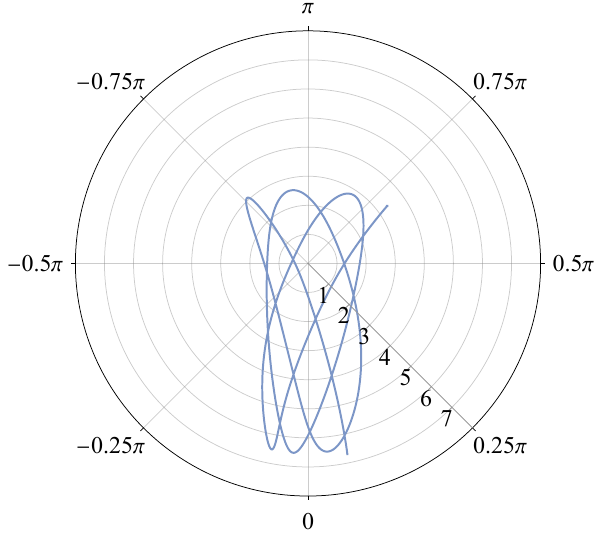}}
	\caption{(Color online)  Polar plots of exemplary periodic  orbits of the double spring pendulum. In the radial direction, we measure $\ell(t)$; in the angular direction, we measure $\varphi_2(t)$.  Each set of initial conditions $(\ell(0),\varphi_2(0))$ corresponds to a specific white dot marked on the Lyapunov diagram presented in Fig.~\ref{fig:bif_pik}}
	\label{fig:periodicki_polar}
\end{figure*}

\begin{figure}[http]
	\centering	
	\subfigure[$\omega=1$]{  \label{fig:per_1}
	\includegraphics[width=0.38\linewidth]{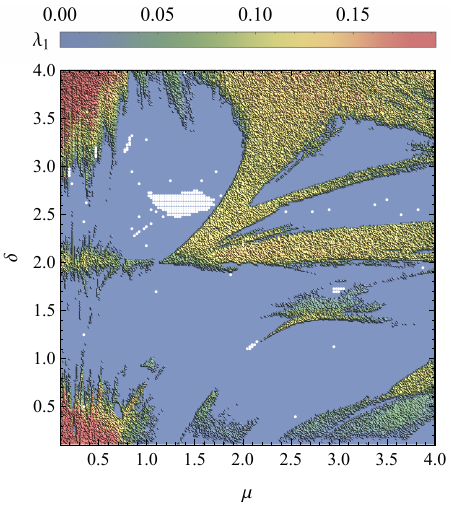}}
	 \hspace{1.5cm}
	 \subfigure[$\omega=0$]{ \label{fig:per_2}
	 	\includegraphics[width=0.38\linewidth]{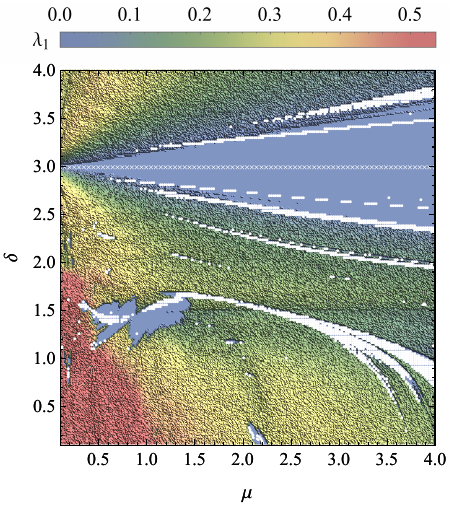} }
	\caption{(Color online) The Lyapunov diagrams presented in Figs.~\ref{fig:lap_parki3}-\ref{fig:lap_parki4} with depicted points corresponding to values of the parameters $(\delta,\mu)$ for which motion of the spring pendulum is periodic with certain ratios of frequencies. The polar plots of exemplary periodic orbits are shown in Figs.~\ref{fig:periodicki2}-\ref{fig:periodicki3} \label{fig:per}}
\end{figure}
\begin{figure*}[htp]	\centering
	\subfigure[$\mu=0.35,\ \delta=0.5$]{
	\includegraphics[width=0.32\linewidth]{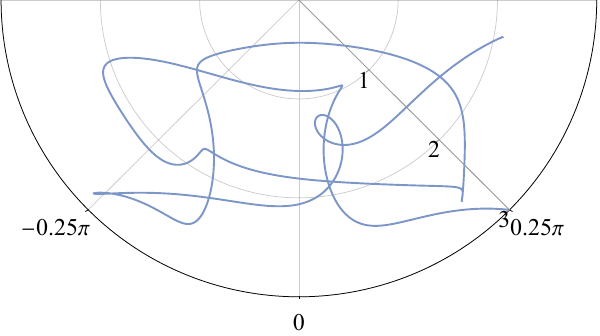}}
		\subfigure[$\mu=0.8,\ \delta=3.2$]{
	\includegraphics[width=0.32\linewidth]{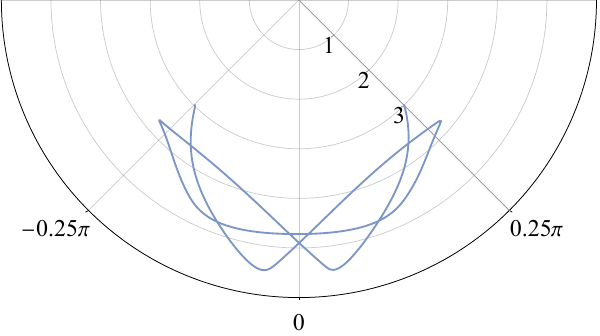}}
	\subfigure[$\mu=1.1,\ \delta=1.7$]{
	\includegraphics[width=0.32\linewidth]{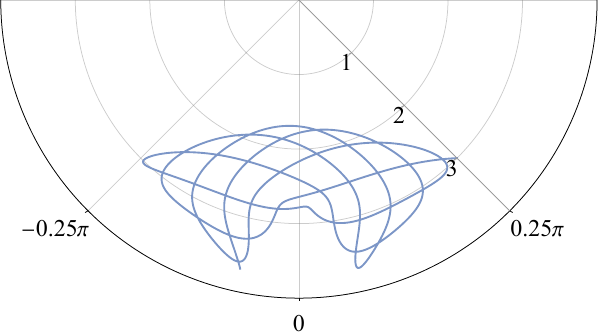}}\\
		\subfigure[$\mu=2.55,\ \delta=0.4$]{
	\includegraphics[width=0.32\linewidth]{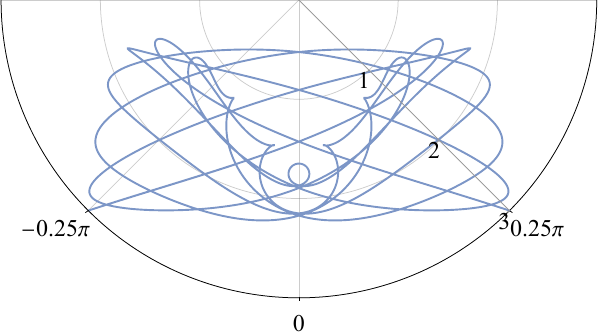}}
	\subfigure[$\mu=3,\ \delta=1.725$]{
	\includegraphics[width=0.32\linewidth]{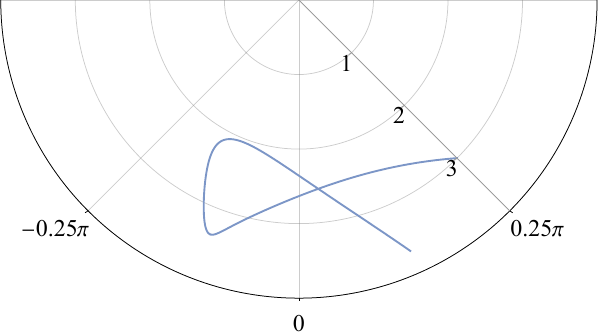}}
		\subfigure[$\mu=3,\ \delta=1.725$]{
	\includegraphics[width=0.32\linewidth]{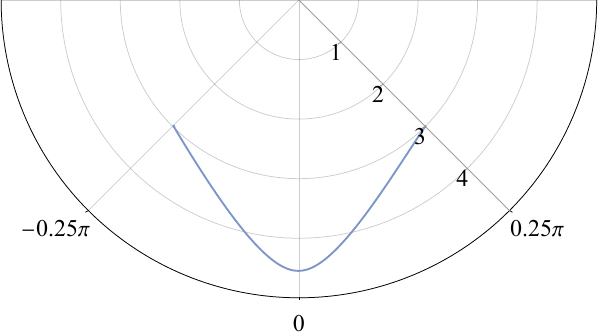}}
	\caption{(Color online)  Polar plots of exemplary periodic  orbits of the double spring pendulum. The parameters values $(\mu,\delta)$ and initial conditions were taken from the Lyapunov diagram presented in Fig.~\ref{fig:per_1}. In the radial direction, we measure $\ell(t)$; in the angular direction, we measure $\varphi_2(t)$}
	\label{fig:periodicki2}
\end{figure*}

Fig.~\ref{fig:bifek} presents the phase-parametric diagrams of the system computed for two one-parameter families of initial conditions $(\ell(0),\varphi_2(0))$ taken from Fig.~\ref{fig:lyap_polar1}(b). In Fig.~\ref{fig:bifek}(a), we move in the angular direction of the Lyapunov diagram~\ref{fig:lyap_polar1}(b) with $\ell(0)=3$ and $\varphi_2\in[0,\pi]$, while Fig.~\ref{fig:bifek}(b) corresponds to the case that we chose $\varphi_2(0)=0.75\pi$ and we move in the radial direction of the Lyapunov diagram~\ref{fig:lyap_polar1}(b) with $\ell(0)\in[0.001,5]$.  That is, for a given initial condition, we numerically integrate the equations of motion~\eqref{eq:vv} and build the diagram by collecting points $\varphi_2(t_\star)$ when $\ell(t)=2.6$, with varying values of initial $\varphi_2(0)$ and $\ell(0)$. Furthermore, to make the analysis more effective, we joined the phase-parametric diagrams with Lyapunov exponents computed previously in Fig.~\ref{fig:lyap_polar1}(b). The color scale is proportional to the magnitude of~$\lambda_1$.
At first,  we observe an excellent correspondence of the phase-parametric diagram with $\lambda_1$. The blue regions with $\lambda_1\approx 0$ correspond to regular oscillations, while other areas are responsible for the chaotic behavior of the system.  Moreover, while not visible via Lyapunov exponents diagrams, from the regular regimes, we can distinguish periodic orbits from quasi-periodic ones. 
  Fig.~\ref{fig:bifek}(a) starts with a regular regime corresponding to periodic and quasi-periodic oscillations of the spring near the equilibrium.  Further increasing values of the initial swing angle $\varphi_2$ indicate that the regular pattern divergences and in the neighborhood of the point $\varphi_2(0)=\pi/2$ the motion becomes chaotic with non-zero values of $\lambda_1$. However, even for very high values of $\varphi_2(0)$, we can still find periodic solutions located in the small gaps between completely chaotic regions.   
Fig.~\ref{fig:bifek}(b) presents a quite different picture.
The phase parametric diagram indicates the chaotic motion of the system for a very small initial length of the spring $\ell(0)\in[0.001,0.1]$. This was quite expected because the natural length of the spring is $\delta=1$. Therefore, for the initial length
$\ell(0)\in[0.001,0.1]$, the spring is highly compressed. Surprisingly moving further to the right, in the neighborhood of the point $\ell(0)=0.2$, the periodic solution appears. However, we claim that this periodic solution is unstable because, in the right neighborhood of this, we observe the rise of chaotic behavior of the system over the regime $\ell(0)\in[0.3,0.63]$.  For higher values of the control parameter $\ell(0)$, the motion is regular up to the point $\ell(0)\approx 1.7$, where the hyperchaos takes place in the wide range of $\ell(0)\in [1.7, 2.65]$. In addition, regular and chaotic regions coexist; As the initial length $\ell(0)$ increases, the appearance of periodic windows between chaotic layers is visible. 
For better understanding, in Fig.~\ref{fig:periodicki} polar plots of exemplary nonsingular orbits of the system with different initial conditions are taken from the phase-parametric diagram illustrated in Fig.~\ref{fig:bifek}(b).

As we have already seen, the Lyapunov exponents is an essential tool for giving a qualitative description of chaos. We used this method to specify areas on the two-parameter diagrams for which motion of the system is regular or chaotic. 
On the other hand, computations of phase-parametric diagram are effective in finding periodic orbits and their number or presenting the routes to the chaos of a given dynamical system. Therefore, it is reasonable to combine these two methods more systematically. 
Let us show this in the example of parameters corresponding to the Lyapunov diagram~\ref{fig:lyap_polar1}(b). First, we select from the diagram these pairs of $(\ell(0),\varphi_2(0))$, such that $\lambda_1=0$. Hence,  we obtain a grid of $n$ initial conditions for which the motion of the system is nonchaotic. Next,  for each element of the gird, we compute the phase parametric diagram by collecting points $\varphi_2(t_\star)$
 periodically, when $\ell(t_\star)=\delta+\omega=2.6$ for a wide range of $t$. As a result, we obtain $n$ lists of points $\varphi_2(t_\star)$.  Then,  in each list, we look for a scheme of repeating values of $\varphi_2(t_\star)$ in a certain order. The above can be effectively handled with the help of Mathematica. In this way, the rough distinction between periodic solutions from quasi-periodic ones is possible.  The result is presented in Fig.~\ref{fig:bif_pik}, where the periodic points are depicted in the Lyapunov diagram~\ref{fig:lyap_polar1}(b). Now, we have the complete picture of the system dynamics by specifying intervals where the motion is chaotic, quasi-periodic, or periodic.  In fact, each white point corresponds to different initial conditions $(\ell(0),\varphi_2(0))$ for which the spring pendulum motion is periodic with certain frequency ratios. 
 In Fig.~\ref{fig:periodicki_polar}, we show the polar graphs showcasing exemplary periodic orbits of the system. 
 
In Fig.~\ref{fig:per}, we present the Lyapunov diagrams previously shown in Figs.~\ref{fig:lap_parki3}-\ref{fig:lap_parki4}, depicting periodic points. Specifically, these diagrams illustrate the values of the parameters $(\delta, \mu)$ for which the system, under the initial condition~\eqref{eq:ini}, moves periodically with certain frequency ratios. In Fig.~\ref{fig:per_1}, a large cluster of periodic solutions can be observed in the left center of the diagram, along with several scattered periodic solutions throughout. The polar plots of periodic orbits for given values of the parameters $(\mu, \delta)$ are depicted in Fig.~\ref{fig:periodicki2}. Fig.~\ref{fig:per_2} exhibits a different pattern, where instead of one cluster of periodic solutions, we observe a myriad of periodic solutions forming resonance-like curves. The crosses in the diagram at $\delta=3$ represent the equilibrium points of the system. The abundance of periodic solutions can be attributed to the fact that, for $\omega=0$, the system is a two-degree-of-freedom model, which simplifies its complexity. Fig.~\ref{fig:periodicki3} presents polar plots of exemplary periodic orbits for prescribed values of the parameters $(\mu, \delta)$.

	\subsection{Poincar\'e cross-sections and Lyapunov exponents \label{sec:poin}}
	As evidenced in Fig.~\ref{fig:lap_parki4}, for $\omega=0$, one additional Lyapunov exponent is zero for arbitrary values of the remaining parameters $(\delta, \mu)$. This occurs because, for $\omega=0$, there exists an additional first integral
	\begin{equation}
		F=\partial_{\omega_1} L+\partial_{\omega_2} L=\const.
	\end{equation}
	Hence, the total angular momentum of the pendulums with respect
to the suspension point is the conserved quantity. Therefore, it is
reasonable to introduce new variables defined by
	\begin{equation}
	\begin{split}
		&\varphi(t)=\varphi_1(t),\quad \vartheta(t)=\varphi_1(t)-\varphi_2(t),\\
		&\omega(t)=\omega_{1}(t),\quad \Omega(t)=\omega_{1}(t)-\omega_{2}(t).
		\end{split}
	\end{equation}
At the level $F=c$, the system reduces to a system of two degrees of freedom with the following Lagrange function	\begin{equation}
		\begin{split}
		\label{eq:lag_red}
			L&=T-V_\text{eff},\\
			T&=\frac{1}{2}\left(1-\frac{\sin^2\vartheta}{1+\mu+2\ell\cos\vartheta+\ell^2}\right) v^2+\frac{1}{2}\left(\frac{\mu+\sin^2\vartheta}{1+\mu+2\ell\cos\vartheta+\ell^2}\right)\ell^2\Omega^2 - \left(\frac{(\cos\vartheta+\ell)\ell \sin\vartheta    }{1+\mu+2\ell\cos\vartheta+\ell^2}\right)v\Omega,\\ 
			V_\text{eff}&=\frac{c^2}{2(1+\mu+2\cos\vartheta+\ell^2)}+\frac{1}{2}\left(\delta-\ell\right)^2.
		\end{split}
	\end{equation}
%	The Lagrange equations of motion, generated by Lagrange function~\eqref{eq:lag_red}, form a four-dimensional system of first-order ordinary differential equations

	As the motion of the reduced system takes place in four-dimensional phase space, we can effectively compute the Poincar\'e cross-sections. 	The main idea of the Poincar\'e cross sections is very simple. We consider a surface (in our case $\ell_0=1$) in the phase space that is traversed by all trajectories. Together with the energy conservation $E= T+V_\text{eff}$,   we have a two-dimensional surface embedded in the  $(\vartheta,\Omega,v)$-space. This surface is formed by two connected components~$v_{\pm}=v_{\pm}(E,\vartheta,\Omega)$, which are two roots of the quadratic equation for $v$ imposed by energy conservation $E=\const$. For convenience, we choose~$v>0$, and as the coordinates on this surface, we take~$(\vartheta,\Omega)$.   As a result, we obtain a pattern in the plane, which can be easily visualized and interpreted. If the motion of the system is periodic, the trajectory passes through the plane only in a finite number of intersections. The quasi-periodic motion is manifested by continuous loops on the section plane, while a chaotic trajectory intersects the section plane in scattered, random-looking points. 
	\begin{figure*}[t]	\centering
	\subfigure[$\mu=0.3,\ \delta=1.26$]{
	\includegraphics[width=0.32\linewidth]{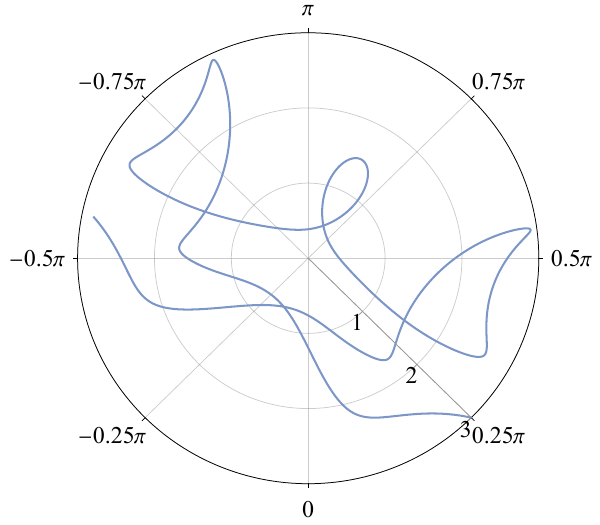}}
		\subfigure[$\mu=0.1,\ \delta=1.3$]{
	\includegraphics[width=0.32\linewidth]{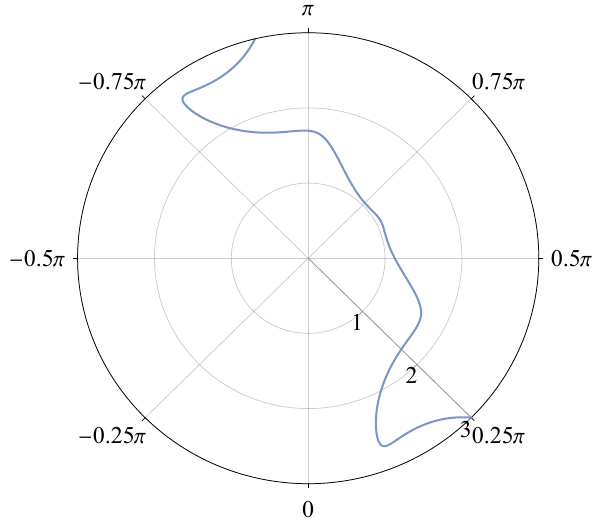}}
			\subfigure[$\mu=2.1,\ \delta=0.82$]{
	\includegraphics[width=0.32\linewidth]{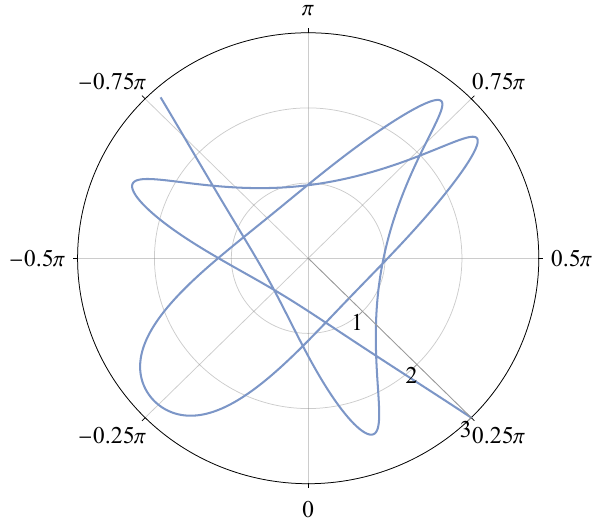}}\\
		\subfigure[$\mu=1.78,\ \delta=1.62$]{
	\includegraphics[width=0.32\linewidth]{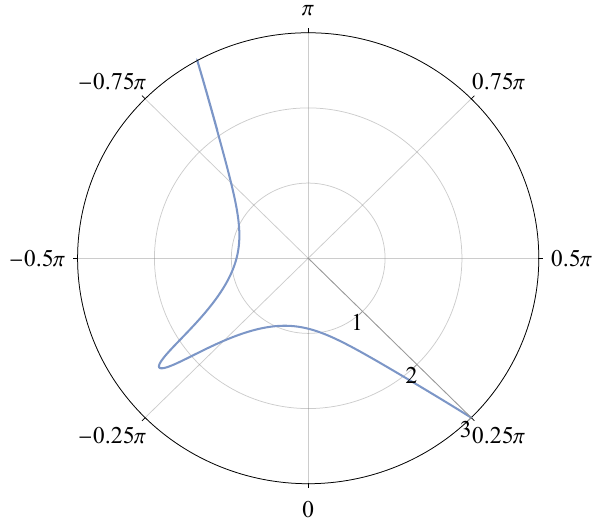}}
		\subfigure[$\mu=1.54,\ \delta=2.02$]{
	\includegraphics[width=0.32\linewidth]{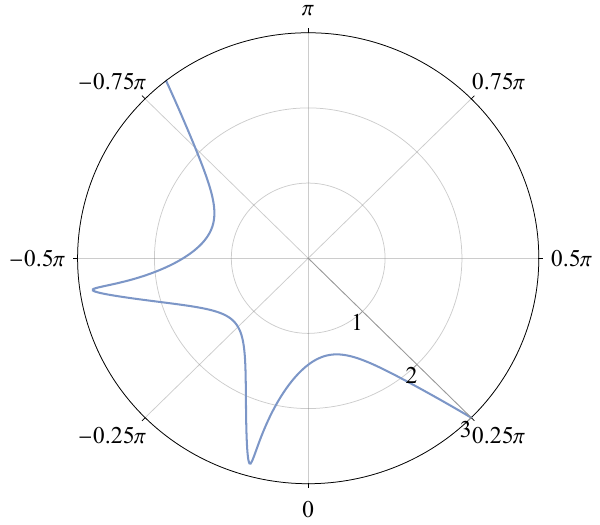}}
			\subfigure[$\mu=2.9,\ \delta=1.86$]{
	\includegraphics[width=0.32\linewidth]{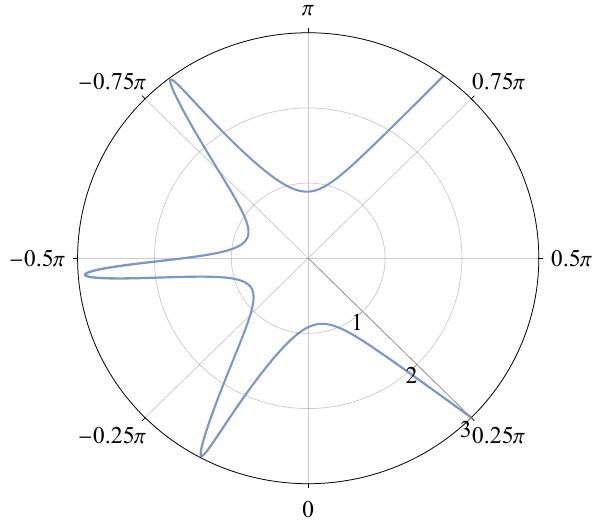}}
	\caption{(Color online)  Polar plots of exemplary  periodic  orbits of the double spring pendulum. The parameters values $(\mu,\delta)$ and initial conditions were taken from the Lyapunov diagram presented in Fig.~\ref{fig:per_2}. In the radial direction, we measure $\ell(t)$; in the angular direction, we measure $\varphi_2(t)$}
	\label{fig:periodicki3}
\end{figure*}
		\begin{figure*}[htp]
		\centering
		\subfigure[Energy $E=E_0+0.001$, Regular domain with the chaotic remains of a separatrix]{
		\includegraphics[width=0.85\linewidth]{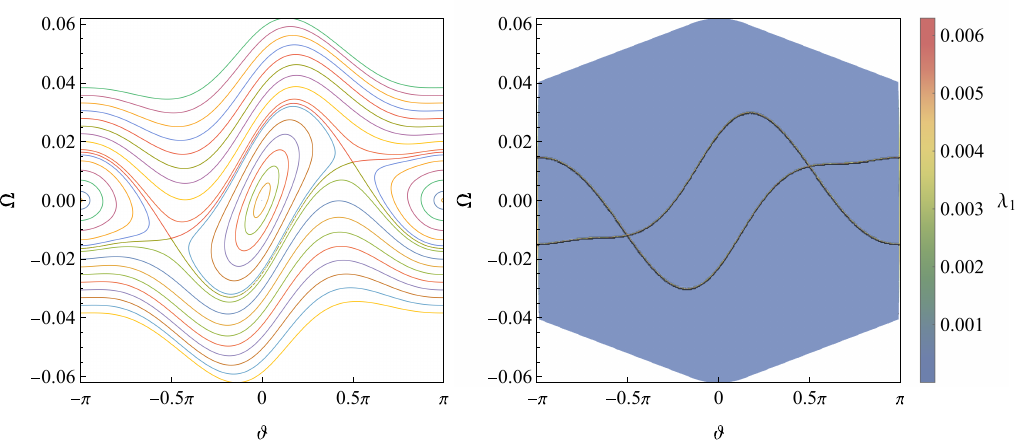}}\\
			\subfigure[Energy $E=E_0+0.017$, the beauty of the coexistence of periodic, quasi-periodic, and chaotic orbits]{
		\includegraphics[width=0.85\linewidth]{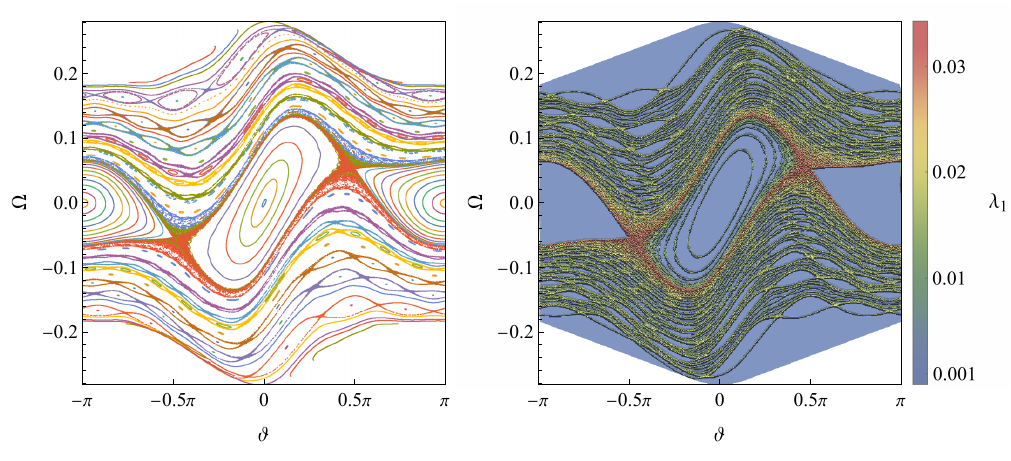}}\\	
			\subfigure[Energy $E=E_0+0.05$, the highly chaotic stage system's dynamics]{\includegraphics[width=0.85\linewidth]{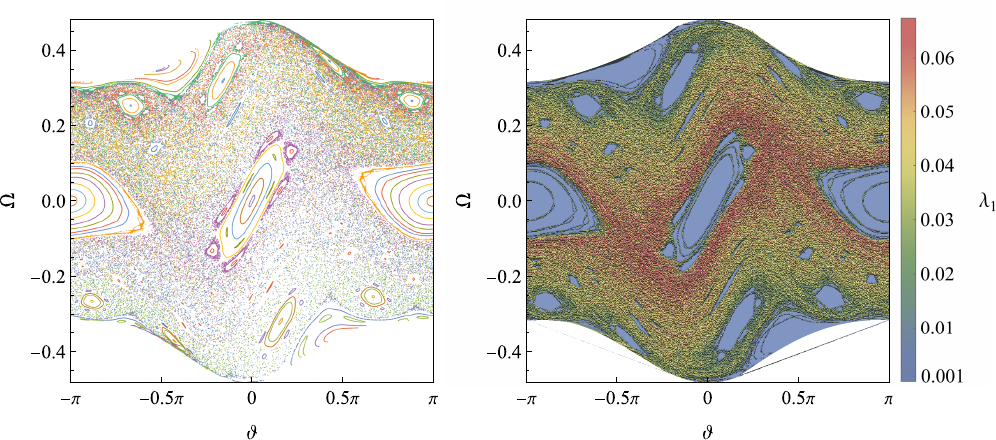}}\\ 		\caption{(Color online)  The Poincar\'e sections of the system~\eqref{eq:lag_red} and their corresponding Lyapunov diagrams, constructed for constant values of the parameters $\mu=3,\ \delta=1,$ and $c=0$ with a gradually increasing energy level $E$, where $E_0$ is the energy minimum. The cross-section plane was defined as $\ell=1$ with the direction $v>0$. 
In the Poincar\'e sections, the colors are associated with different classical trajectories. For the Lyapunov exponents, the color code is given on the bar.  Blue regions indicate regular dynamics, while regions with $\lambda_1>0$ correspond to the system's chaotic behavior}
		\label{fig:ciecie1}
	\end{figure*}
	
	\begin{figure}[htp]
		\centering
		\includegraphics[width=0.48\linewidth]{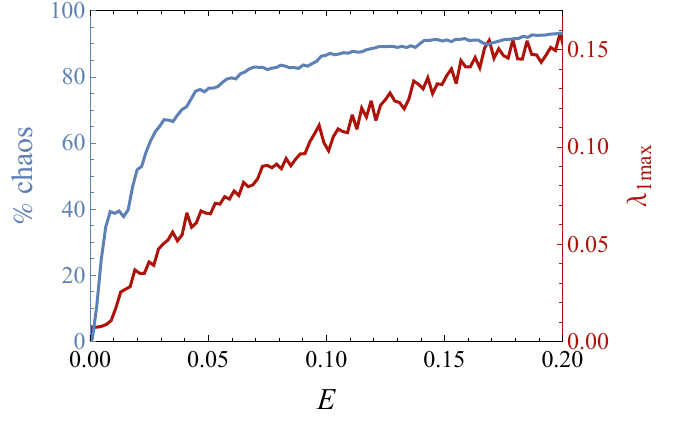}\hspace{0.5cm}
		\includegraphics[width=0.48\linewidth]{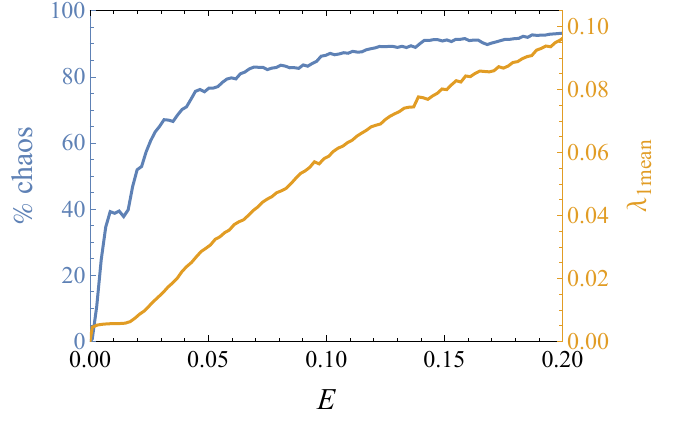}\\ 		\caption{(Color online)  The percentage of chaos against the maximal  and mean  values of the largest Lyapunov exponent $\lambda_1$ in phase space as a function of the energy }
		\label{fig:mean_max}
	\end{figure}

Fig.~\ref{fig:ciecie1} shows six Poincar\'e sections and their corresponding two-parameter Lyapunov diagrams of system~\eqref{eq:lag_red}. These were constructed with gradually increasing values of the energy $E$, for the following constant parameters
	\begin{equation}
	\label{eq:pp}
	\mu=3,\quad  \delta=1,\quad c=0.
	\end{equation}
 It is easy to verify that the minimum energy of the system~\eqref{eq:lag_red}, corresponding to the equilibrium position of the pendulums when $\ell_0=1$ is equal to $E_0=0$.  Therefore, we may expect regular behaviors of the system for energies close to $E_0$. The Poincar\'e section visible in Fig.~\ref{fig:ciecie1}(a) confirms our suspicion. At the energy level $E=0.001$ it presents a very regular image. 
 We observe three particular periodic solutions located at $\vartheta=0,\pm\pi$ with $\Omega=0$,  surrounded
by invariant tori. Each closed loop corresponds to a quasi-periodic motion. These three particular solutions are visible in the remaining Poincar\'e sections depicted in Fig.~\ref{fig:ciecie1}(b-c) as well. This is because 
the equations of motion governed by Lagrange function~\eqref{eq:lag_red} have an invariant manifold
$
\scM=\left\{(\ell,v_\ell, \vartheta, \Omega)\in \field{R}^4 \ \big{|}\ \vartheta=\Omega=0\right\}.
$
The Lagrange function~\eqref{eq:lag_red} constrained to $\scM$, reduces to a system of one degree of freedom that defines the harmonic oscillator with solutions $\ell(t)=\ell_0+\sqrt{2E}\cos t,$ and $  v=\dot \ell(t)$ with $\vartheta(t)=\Omega(t)=0$. 
We may conclude that for such a small energy value the motion of the system is almost integrable. However, 
what is not visible in the Poincar\'e plane, 	
the corresponding Lyapunov diagram indicates that the curve, which separates librational motion from the rotational one, is indeed, weakly'' chaotic. So, we claim that this curve is a source of the chaotic motion of the system since $\lambda_1>0$.

	The situation becomes more complex when we increase the energy to the value $E=0.017$. Fig.~\ref{fig:ciecie1}(b) confirms the chaotic nature of the reduced system. Most of the invariant tori, responsible for the rotational motion of the system, decay giving rise to periodic orbits bounded by chaotic remains of the separatrices.  Indeed, the corresponding Lyapunov diagram shows a variety of decayed separatrices. This is one of the main benefits of combining Poincaré cross sections with Lyapunov diagrams. For large values of initial conditions  (in our cases $500\times 500$)
uniformly distributed in the available area of the Poincar\'e plane, with the help of Lyapunov exponents, we can detect  ,,chaotic folds" responsible for weak chaotic motion.  	  Thus, the Lyapunov exponents diagram serves as a complementary tool to the Poincar\'e  sections, providing insights into chaotic dynamics that may not be apparent from the latter, which can be constructed for a much smaller grid of initial conditions from visual reasons.
Fig.~\ref{fig:ciecie1}(c) shows the Poincar\'e section and the Lyapunov diagram constructed for $E=0.05$. As the energy value is increased, the area at the Poincar\'e plane responsible for the chaotic motion of the system also increases, as expected.  These random-looking points correspond to the fact that trajectories can freely wander over large regions of the phase space.  The trajectories are no longer confined to the surfaces of nested tori in contrast to the integrable system. The tori are destroyed and the trajectories begin to move outside them.
 The corresponding  Lyapunov diagram finally confirms that for  the prescribed values of the parameters  the reduced system governed by Lagrange function~\eqref{eq:lag_red} is 
highly not integrable Hamiltonian system. 
	
With the help of Lyapunov exponents, we can easily estimate the percentage of chaos equipped in the available area of the Poincar\'e plane as a function of the energy.  Indeed,  for a  large number of initial conditions uniformly distributed in the available area of the Poincar\'e plane determined by the energy value, we compute repeatedly the largest Lyapunov exponentss Next, we calculate the ratio of the number of points with a Lyapunov exponent different from zero  (in practice larger than  0.003)  to the total amount of initial conditions.   Likewise, we can calculate the maximum and mean values of the Lyapunov exponent for a given energy.
 Results for $E\in[0,0.2]$ are shown in Fig.~\ref{fig:mean_max}. 
As expected, for values of the energy close to the energy minimum $E_0=0$, the percentage of chaos is almost zero. This is in accordance with the Poincar\'e section visible in Fig.~\ref{fig:ciecie1}(a), where the weak chaos appears only in the area where the separatrix was located.
Next,  we have the classical transition from a regular regime at low energies to a highly chaotic regime at larger ones. Indeed,  the percentage of chaos increases rapidly, and at $E=0.05$ it covers $75\%$ of the available area of the Poincar\'e section plane visible in Fig.\ref{fig:ciecie1}(c). Then, for larger values of the energy, the system monotonically goes to an almost fully ergodic one with the percentage of chaos very close to $95\%$. The maximal and average values of $\lambda_1$ show a similar behavior starting from zero value at the energy minimum and then increasing linearly up to their highest values reached at $E=0.2$. 
	\begin{figure}[t]
	\subfigure[The standard Poincar\'e cross-section]{
		\centering\includegraphics[width=0.48\linewidth]{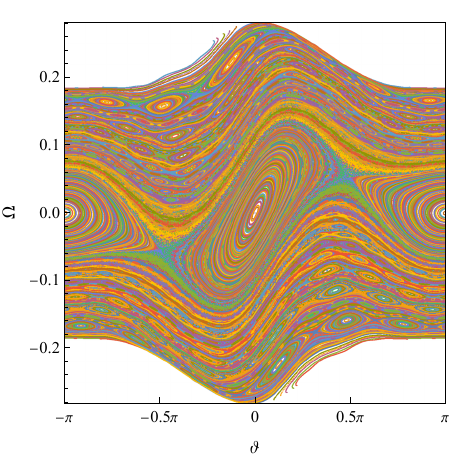}}\hspace{0.5cm}
	\subfigure[The filtered Poincar\'e cross-section]{
		\includegraphics[width=0.48\linewidth]{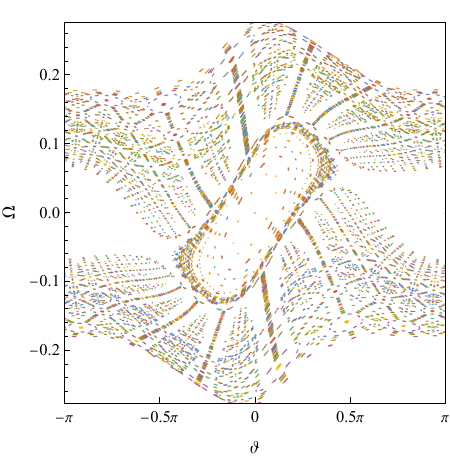}}
	\caption{(Color online) The  Poincar\'e sections of the system~\eqref{eq:lag_red}  constructed for:  a) $3\cdot 10^3$ and b)  $3\cdot 10^4$ initial conditions uniformly distributed at the available area. 
		The parameters were chosen by $\mu=3,\delta =1, c=0$, and
		the cross-section plane was defined as $\ell=1$ with the direction $v>0$. The left picture is unclear since periodic, quasi-periodic, and chaotic orbits overlap. The right Poincar\'e section is filtered from chaotic and quasi-periodic orbits, presenting the skeleton structure of stable periodic solutions only}
	\label{fig:ciecie_periodic}
\end{figure}		
As we have seen, the Poincar\'e cross-section method is an effective tool for giving a quick insight into the dynamics of the considered model. Indeed. Fig.~\ref{fig:ciecie1}(b) shows the beautiful co-existence of periodic, quasi-periodic, and chaotic motion.
To show the detailed structure of this co-existence, we have to increase the number of initial conditions and perform successive magnifications of the cross-section. 
However,  for a large number of initial conditions,  in the global cross-section, certain characteristic patterns are not visible. In Fig.~\ref{fig:ciecie_periodic}(a), we present the Poincar\'e sections constructed for $3\cdot 10^3$ initial conditions uniformly distributed in the available area of the Poincar\'e plane.  As we can see,   it is hard to distinguish in the global view periodic orbits from quasi-periodic and chaotic ones because the orbits are located very close to each other.  
Since Poincar\'e it is known that periodic solutions of the considered system form the skeleton of the structure of the phase space. Of course, there are many approaches for finding and studying periodic solutions; for instance, see the newest one~\cite{Lazarotto:24::}.  Here, we propose to locate stable periodic solutions in the global cross-section and use them as a kind of filter. 

To find periodic orbits and their numbers in the available area of the Poincar\'e plane,  we can use the analogous approach as in Section~\ref{sec:sec_bif}. We make the following.
We fix the value of the energy first integral $E$. Then, for a set $B$ containing a large number of initial conditions 
 uniformly distributed in the available area of the Poincar\'e plane, we compute the largest Lyapunov exponent $\lambda_1$. Elements in~$B$ with  $\lambda_1>0$ are  disregarded. In this way, we obtain a subset of  $n$ initial conditions for which the motion of the system is non-chaotic. Next, for each initial condition, we compute the Poincar\'e cross sections with a sufficiently large number of sections restricted to the plane $(\vartheta,\Omega)$. As a result, we obtain $n$ lists with section points $(\vartheta,\Omega)$. Then, we perform a similarity check. Namely, 
 in each list, we look for the scheme of repeated values of $(\vartheta,\Omega)$ in a certain order. If the motion is periodic, then we have a finite number of repeating points. The quasi-periodic motion is manifested by the infinite number of distinct points.    In this way, the rough but effective distinction between periodic and quasi-periodic orbits is possible. 	The results of exemplary computations for $E=0.017$ are presented in Fig.~\ref{fig:ciecie_periodic}(b). It is the Poincar\'e section plane, similar to the one shown in Fig.~\ref{fig:ciecie1}(b), but constructed for $3\cdot10^4$ initial conditions, where chaotic and quasi-periodic trajectories are eliminated, leaving only solutions close to stable periodic orbits. 
 This plot, which we call the filtered Poincar\'e cross section, shows the rich structure of the system dynamics which is hidden in a usual cross-section. We get
a vast number of periodic orbits of various rations of frequencies, even the ones corresponding to high-order resonances.  
This leads to the fact that the invariant tori, which are located near these high-order resonance orbits, are prone to decay with further increasing energy. This is visible in Fig.~\ref{fig:ciecie1}(c), where most of the tori decay into global chaos.  
 				\begin{figure}[t]
\begin{center}

				\includegraphics[width=0.48\linewidth]{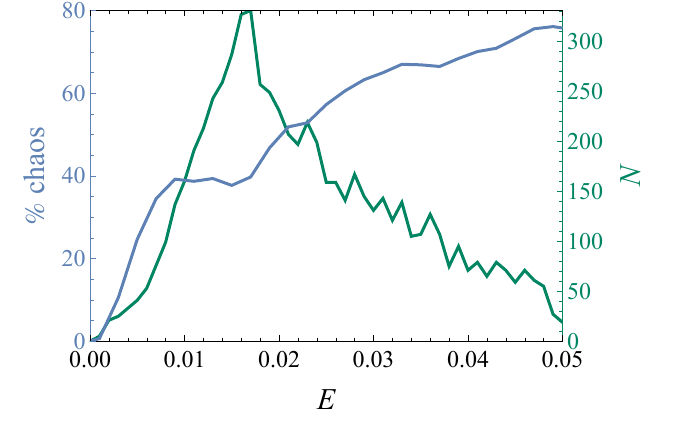}
				\caption{(Color online)  The percentage of chaos against the number of stable periodic orbits in the Poincar\'e section planes a function of energy }
		\label{fig:periodic}
		\end{center}
	\end{figure}
	\begin{figure*}[htp]
		\centering
						\subfigure[Mass ratio $\mu=5$, highly chaotic dynamics]{
		\includegraphics[width=0.85\linewidth]{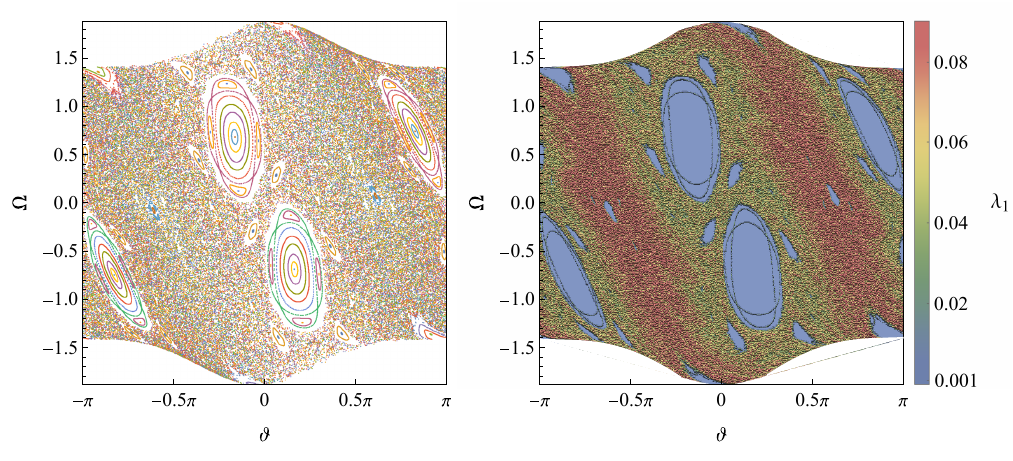}}
				\subfigure[Mass ratio $\mu=10$, the rise of regular islands between chaotic layers]{
		\includegraphics[width=0.85\linewidth]{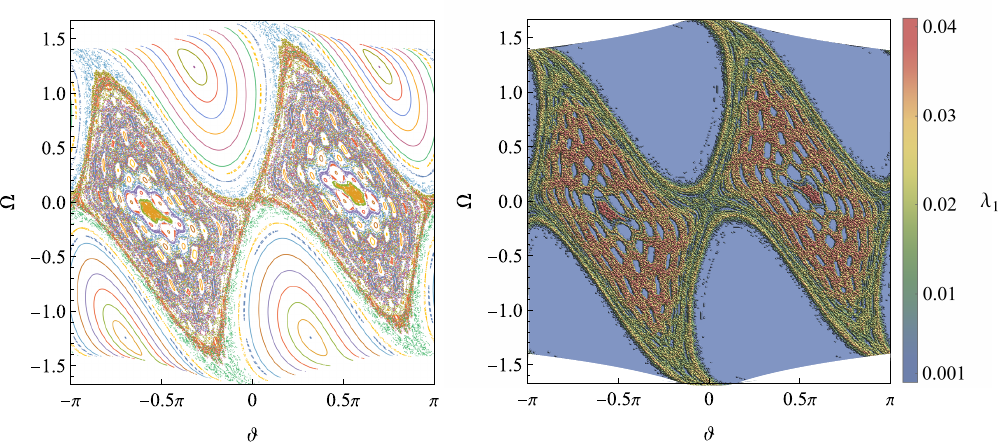}}\\
		\caption{(Color online) The Poincar\'e sections of the system~\eqref{eq:lag_red} and their corresponding Lyapunov diagrams, constructed for constant values of the parameters $\delta=1, \, c=0,$  at the energy level $E=E_0+0.5$  where $E_0=0$ is the energy minimum,  with gradually increasing values of the mass ratio $\mu$. The cross-section plane was defined as $\ell=1$ with the direction $v>0$. 
In the Poincar\'e sections, the colors are associated with different classical trajectories. For the Lyapunov exponents, the color code is given on the bar.  Blue regions indicate regular dynamics, while regions with $\lambda_1>0$ correspond to the system's chaotic behavior}
		\label{fig:ciecie2}
	\end{figure*}
		\begin{figure*}[htp]
		\centering				\subfigure[Mass ratio $\mu=15$, decreasing of chaotic area]{ 
		\includegraphics[width=0.85\linewidth]{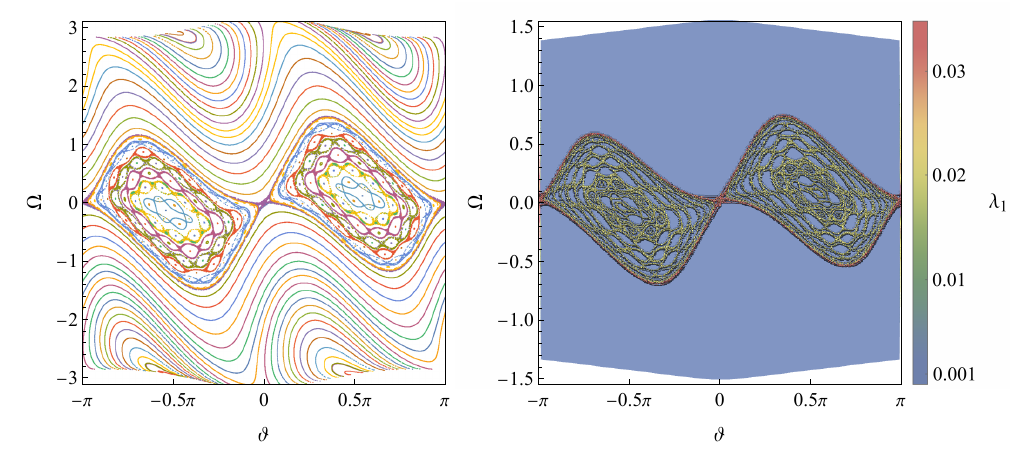}}\\
		\subfigure[Mass ratio $\mu=50$, the almost integrable dynamics
	]{	\includegraphics[width=0.85\linewidth]{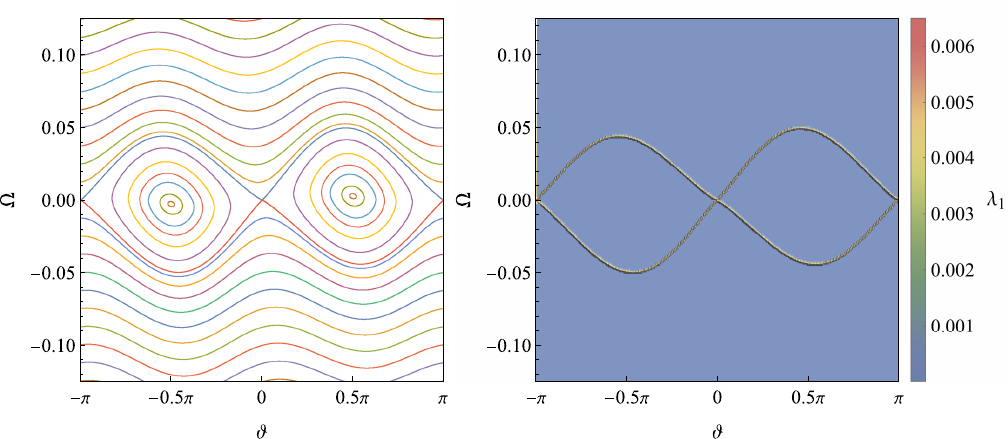}}
		\caption{(Color online)  The Poincar\'e sections of the system~\eqref{eq:lag_red} and their corresponding Lyapunov diagrams, constructed for constant values of the parameters $\delta=1, \, c=0,$  at the energy level $E=E_0+0.5$  where $E_0=0$ is the energy minimum,  with gradually increasing values of the mass ratio $\mu$. The cross-section plane was defined as $\ell=1$ with the direction $v>0$. 
In the Poincar\'e sections, the colors are associated with different classical trajectories. For the Lyapunov exponents, the color code is given on the bar.  Blue regions indicate regular dynamics, while regions with $\lambda_1>0$ correspond to the system's chaotic behavior}
		\label{fig:ciecie4}
	\end{figure*}

		\begin{figure}[htp]
		\centering
		\subfigure[The Poincar\'e section restricted to $(\varphi_2,\omega_2,\varphi_1)$-space]
		{
		\includegraphics[width=0.4\linewidth]{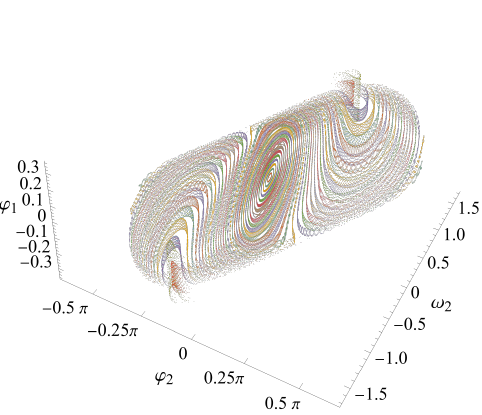}}\hspace{1.5cm}
		\subfigure[Projection to $(\varphi_2,\omega_2)$-plane]{
		\includegraphics[width=0.35\linewidth]{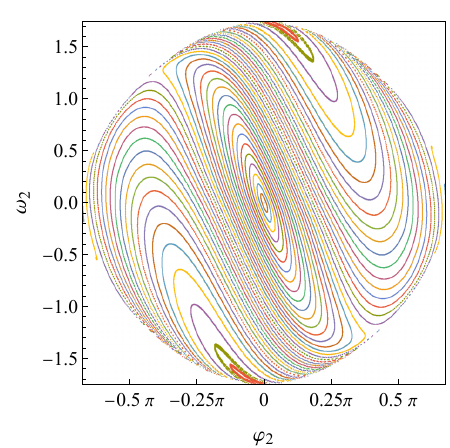}}\\ 		\caption{The Poincar\'e cross-section of the original system~\eqref{eq:vv} made for constant values of the parameters $\mu=3,\, \delta=1, \ \omega=100$.  The cross-section plane was defined as $\ell=1$ with the direction $v>0$. The plot indicates regular, almost integrable dynamics. Each color at the Poincar\'e plane corresponds to distinct initial conditions }
		\label{fig:ciecie5}
	\end{figure}
Repeating the above computations for a certain range of energy, we can estimate the number of stable periodic solutions as a function of the energy. In Fig.~\ref{fig:periodic}, we can see a representation of the relationship between the percentage of chaos and the number of stable periodic orbits available in the phase space. The left axis shows the percentage of chaos, while the right axis displays stable periodic orbits as a function of energy $E$. When the energy is close to the energy minimum, the percentage of chaos is almost zero, and we have several periodic orbits. However, for slightly larger values of the energy, the invariant tori broke up giving rise to periodic solutions. We observe that the number of periodic orbits and the percentage of chaos increase rapidly as energy increases as well. At the energy level $E=0.017$, the number of distinct periodic orbits is maximal with  $N=331$, while the chaos occupies  $40\%$ of the available   area of the Poincar\'e plane, see the corresponding Poincar\'e sections illustrated in Figs.~\ref{fig:ciecie1} and~\ref{fig:ciecie_periodic}.  
As the energy increases, the periodic orbits successively diverge into global chaos, making the available phase space almost fully ergodic.

Up to now, we have examined the behavior of the model by using Pincar\'e sections while altering the system's energy. However, it would be worthwhile to conduct a similar analysis keeping the energy constant and gradually increasing the mass ratio $\mu$.
In Figs.~\ref{fig:ciecie2}-\ref{fig:ciecie4}, we present four pairs of Poincar\'e sections and corresponding Lyapunov diagrams created for $\omega=\delta=1$ with varying values of $\mu$ at a constant energy level $E=E_0+0.5$, where $E_0=0$  represents the minimum energy at rest. As we can notice, for the relatively high value of the energy level, the values of $\vartheta$ span the whole range $\vartheta\in[-\pi, \pi]$ and the angular velocity $\Omega$ reaches high values in comparison to Fig.~\ref{fig:ciecie1}.  For the mass ratio $\mu=5$, the system reveals highly chaotic dynamics visible in terms of scattered points at the Poincar\'e section plane. Indeed, almost all invariant tori decay into the global chaos and the maximal Lyapunov exponent reaches $\lambda_1\approx 0.086$. The situation changes when we increase the value of $\mu$, for instance,  up to $\mu=10$. At the corners of the plane, illustrated in Fig.~\ref{fig:ciecie2}(b), we observe the emergence of four stable periodic solutions bounded by quasi-periodic loops. In the central region, a striking coexistence of periodic and chaotic chains becomes apparent. This is particularly evident in the associated Lyapunov diagram, where each chaotic fold is identified. 
As we increase the mass ratio up to $\mu=15$, the dynamics of the system become less chaotic which is clearly visible in Fig.~\ref{fig:ciecie4}(a). The region responsible for chaotic motion at the Poincar\'e plane also decreases with decreasing values of $\lambda_1$. This trend becomes even clearer when we further increase the mass ratio up to $\mu=50$. 
As we can notice, even for such a high value of energy, the dynamics of the system is very regular. In the  Poincar\'e section plane, illustrated in Fig.~\ref{fig:ciecie4}(b), we can observe the shapely elegant invariant tori suggesting the system's integrability. There are no signs of chaotic motion at all. Instead of that, we can detect two stable particular periodic solutions enclosed by the separatrix, which separates the librational and rotational motion.
However, looking at the corresponding Lyapunov diagram,  we notice that this remains of the separatrix is the source of chaos since $\lambda_1>0$, precluding the system's integrability.

After completing the numerical analysis, it is worth revisiting the Lyapunov exponents diagram shown in Fig.~\ref{fig:lyap_polar1}(d). For the parameters~\eqref{eq:parki} and $\omega=100$, the original system~\eqref{eq:vv} has zero Lyapunov exponents for any initial condition. In Fig.~\ref{fig:ciecie5}, we show the Poincar\'e cross-section of six-dimensional system~\eqref{eq:vv} restricted to  $(\varphi_2,\omega_2,\varphi_1)$-space and its projection to    $(\varphi_2,\omega_2)$-plane. As we can notice, for the sufficiently high value of $\omega$, which implies that the gravitational potential is much stronger than Hooke's potential, the double spring pendulum possesses integrable dynamics. In contrast to previously examined Poincar\'e sections, all six Lyapunov exponents are zero up to a value of $\lambda_1$ less than $0.002$. Additionally, due to the small amplitude of oscillations of the first pendulum, i.e., $\varphi_1\in[-0.06\pi, 0.06\pi]$, the projection of 3D  Poincar\'e section to $(\varphi_2,\omega_2)$-plane
 appears to be that 
of a like standard Poincar\'e section for a Hamiltonian system with two degrees of freedom. 
 			\section{Integrability analysis \label{sec:integrability}} The numerical analysis exemplified by the bifurcation diagrams, the Poincar\'e sections as well as by the Lyapunov exponent diagrams show complex and, in fact, chaotic dynamics of the considered model. However, it is important to note that the numerical analysis was conducted only for fixed parameter values. For different parameters, the results can be significantly different, and in some cases, the system may have a first integral and be integrable. To fully analyze the integrability of the system~\eqref{eq:vv}, we employ the Morales-Ramis theory~\cite{Morales:99::}. This theory examines the differential Galois group of variational equations derived from the linearization of equations of motion  along a particular solution. The main theorem of this theory states that if the Hamiltonian system is integrable in the Liouville sense, then the identity component of the differential Galois group of variational equations must be Abelian.		
 For this article, we mention only the basic facts of this theory concerning the connection between the integrability of Hamiltonian systems with the differential Galois theory. 
For a detailed description of the Morales-Ramis theory and the Kovacic algorithm, see the papers~\cite{Kovacic:86::, Morales:99::,Morales:00::,mp:13::c}. We also refer to the recent results obtained by Combot and Sanabria in the paper~\cite{Combot:18b::}, where the extension of the Kovacic algorithm to the fourth-order linear differential equations is given. 

	Below we formulate the main theorem of this paper.
	\begin{theorem}
	Let $\mu, \delta$ and $\omega$ are positive parameters. Then, the double spring pendulum system governed by Lagrange function~\eqref{eq:lag_{resc}} is not integrable in the class of functions meromorphic in coordinates and velocities.
	\end{theorem}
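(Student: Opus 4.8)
The plan is to invoke the Morales--Ramis theorem stated above in contrapositive form: I will produce one particular solution of~\eqref{eq:vv} along which the identity component of the differential Galois group of the variational equations fails to be Abelian, which rules out meromorphic Liouville integrability for every admissible $\mu,\delta,\omega>0$. The natural candidate is the ``purely radial'' motion in which the two arms stay aligned with the field while the spring oscillates. The potential of~\eqref{eq:lag_{resc}} has its relative equilibrium at $\varphi_1=\varphi_2=0$, $\ell=\delta+\omega$, and one checks directly that
\[
\scN=\defset{(\ell,v,\varphi_1,\omega_1,\varphi_2,\omega_2)}{\varphi_1=\varphi_2=0,\ \omega_1=\omega_2=0}
\]
is invariant under the flow of~\eqref{eq:vv}. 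On $\scN$ the dynamics reduces to the harmonic oscillator $\ddot\ell+\ell=\delta+\omega$, so the phase curve along which I linearise is
\[
\ell(t)=\delta+\omega+A\cos t,\qquad v(t)=-A\sin t,
\]
with the amplitude $A$ (equivalently the energy) left free, to be fixed later at a value that makes the Galois computation most transparent.

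Next I would set up the variational equations along this curve. The variations $(\delta\ell,\delta v)$ tangent to $\scN$ merely reproduce the oscillator and give an Abelian contribution, so the decisive object is the normal system governing $(\delta\varphi_1,\delta\omega_1,\delta\varphi_2,\delta\omega_2)$. Linearising~\eqref{eq:vv} at $\scN$ and writing $\xi_1=\delta\varphi_1$, $\xi_2=\delta\varphi_2$, these take the form
\begin{equation*}
\begin{aligned}
&\ddot\xi_1+\omega\,\xi_1-\tfrac{\delta-\ell(t)}{\mu}\,(\xi_1-\xi_2)=0,\\
&\ddot\xi_2+\tfrac{2v(t)}{\ell(t)}\,\dot\xi_2-\tfrac{\ell(t)-\delta+\omega}{\ell(t)}\,(\xi_1-\xi_2)=0,
\end{aligned}
\end{equation*}
a pair of second-order equations coupled solely through $\xi_1-\xi_2$. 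Crucially they do not split into two independent second-order blocks, so the classical Kovacic algorithm does not apply and the genuinely fourth-order treatment of Combot and Sanabria~\cite{Combot:18b::} recalled in the appendix is needed.

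I would then rationalise the coefficients. The substitution $z=\rme^{\rmi t}$, under which $\cos t=\tfrac12(z+z^{-1})$, $\sin t=\tfrac{1}{2\rmi}(z-z^{-1})$ and $\tfrac{\rmd}{\rmd t}=\rmi z\,\tfrac{\rmd}{\rmd z}$, converts the normal system into a fourth-order linear system with rational coefficients on $\CPOne$, its singular points lying at $z=0,\infty$ and at the two reciprocal roots of $\ell(z)=0$. The local exponents at these four points can be written in closed form in terms of $\mu,\delta,\omega,A$, and a convenient choice of $A$ keeps the configuration generic. On this rationalised equation I would run the fourth-order algorithm: test irreducibility, discard the reducible, imprimitive, and finite cases, and extract the Lie algebra of the Galois group, using that it is a subgroup of the symplectic group $\mathrm{Sp}(4,\C)$ because the system is Hamiltonian.

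The hard part will be the final step, namely proving \emph{uniformly in the three positive parameters} that the identity component is non-Abelian. One must show that the fourth-order operator is irreducible and not gauge-equivalent to a symmetric power or a tensor product of lower-order operators, so that its Galois group is a genuinely non-commutative subgroup of $\mathrm{Sp}(4,\C)$ (in practice the full group, or at least one with non-Abelian identity component). Since all coefficients carry $\mu,\delta,\omega$ and $A$, the exponent and invariant computations have to be carried through symbolically, and the finitely many degenerate parameter loci on which the group might collapse to something Abelian must be excluded separately. It is this parameter-uniform exclusion, rather than any single determinant or exponent calculation, that constitutes the real obstacle; once it is settled, the Morales--Ramis theorem delivers the non-integrability claimed in the statement.
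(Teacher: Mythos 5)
Your setup coincides with the paper's: the same invariant manifold $\scN$ with harmonic-oscillator dynamics on it, the same normal variational subsystem for $(\Phi_1,\Omega_1,\Phi_2,\Omega_2)$ (up to a slip: the coupling coefficient in your second equation should be $\frac{\ell-\delta+\mu\omega}{\mu\ell}$, not $\frac{\ell-\delta+\omega}{\ell}$; cf.~\eqref{eq:normalne}), and the same appeal to Morales--Ramis combined with the Combot--Sanabria dimension-four algorithm. Your rationalisation $z=\rme^{\rmi t}$ with free amplitude $A$ is legitimate, since it is a rational parametrisation of the complexified phase curve, though it produces more singular points than the paper's choice.

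The genuine gap is that the decisive step is never carried out --- you name it yourself as ``the real obstacle'' and leave it open. A proof must actually exclude, uniformly in $(\mu,\delta,\omega)$, the reducible and imprimitive subgroups of $\mathrm{Sp}(4,\C)$, and your proposal contains no mechanism for doing so: ``test irreducibility \dots\ and extract the Lie algebra'' is a description of a goal, not an argument. The paper settles this with a concrete, checkable criterion (Lemma~\ref{lem:5}): if the fourth-order equation admits no hyperexponential solution while its second exterior power (the order-six equation~\eqref{eq:ex2}) admits exactly one, then the differential Galois group contains $\mathrm{Sp}(4,\C)$, whose identity component is non-Abelian. Verifying these two conditions is where all the work lies, and the paper's specific choices are what make it feasible: fixing the energy at $E=-\tfrac12\,\delta(\delta+2\omega)$ and substituting $z=1/(\ell(t)-\delta)$ collapses the singularities to four points (one irregular at $z=0$, three regular), so the exponents and the exponential parts of formal solutions can be written in closed form in the parameters; a degree count then shows that any putative hyperexponential solution would require a polynomial factor of negative degree (Lemma~\ref{l1}), and a parallel analysis of the exterior square --- including the degenerate loci where singular points collide, e.g.\ $\delta=\mu\omega$ --- shows that exactly one hyperexponential solution survives. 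With your generic-$A$, $z=\rme^{\rmi t}$ coordinates the singularity structure (roots of $\ell=0$ and of $\ell=\delta$, plus irregular points at $0$ and $\infty$) is strictly more complicated, so the parameter-uniform exclusion you defer would be harder, not easier, than the computation the paper actually performs.
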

	\subsection{Outline of the proof}
	System~\eqref{eq:vv} possesses the following invariant manifold
	\begin{equation}
		\label{eq:invariant}
		\scN=\left\{(\ell,v,\varphi_1,\omega_1,\varphi_2,\omega_2)\in \R^6 \, | \, \varphi_1=\omega_1=0=\varphi_2=\omega_2\right\}.
	\end{equation}
	Restricting the right-hand sides of~\eqref{eq:vv} to $\scN$, we get equations for  the harmonic oscillator
	\begin{equation}
		\label{eq:rs0}
		\dot \ell=v,\qquad \dot v=\delta+\omega-\ell,\qquad \ \dot\varphi_1=\dot \omega_1=\dot\varphi_2= \dot \omega_2=0,
	\end{equation}
	with the energy first integral
	\begin{equation}
		\label{eq:e0}
		E=\frac{v^2}{2}+\frac{\ell^2}{2}-\delta \ell-\omega\ell.
	\end{equation}
	Hence, solving equations~\eqref{eq:rs0} and taking into account~\eqref{eq:e0}, we obtain 
a family of particular solutions $\vvarphi(t)=(\ell(t,E),v(t,E),0,0,0,0)$. 	

	Let $\vX=[L, V,\Phi_1,\Omega_1, \Phi_2, \Omega_2]^T$ denote the variations of $\vx=[\ell, v,\varphi_1,\omega_1,\varphi_2,\omega_2]^T$. Then, the variational equations of system~\eqref{eq:vv} along  $\vvarphi(t)$, are as follows
	\begin{equation}
		\dot \vX=\vJ(t)\cdot \vX,\quad \text{where}\quad \vJ(t)=\frac{\partial \vv}{\partial\vx}(\vvarphi(t)).
	\end{equation}
	Here $\vv$ denotes the right-hand side of system~\eqref{eq:vv}. The explicit form of the non-constant matrix $\vJ$ is given by
	\begin{equation}
		\vJ(t)=\begin{pmatrix}
			0&1&0&0&0&0\\
			-1&0&0&0&0&0\\
			0&0&0&1&0&0\\
			0&0&\frac{\delta-\mu\omega-\ell(t)}{\mu}&0&\frac{-\delta+\ell(t)}{\mu}&0\\
			0&0&0&0&0&1\\
			0&0&\frac{-\delta+\mu\omega+\ell(t)}{\mu\ell(t)}&0&\frac{\delta-\mu\omega-\ell(t)}{\mu\ell(t)}&-\frac{2 v(t)}{\ell(t)}
		\end{pmatrix}.
	\end{equation}
	We notice that variational equations separate into
blocks: normal variational equations for variables $[\Theta_1,\Omega_1,\Theta_2,\Omega_2]^T$ and tangential equations for variables $[L,V]^T$. Because the tangential system is trivially solvable, for further integrability analysis we take the normal part, which has the form
	\begin{equation}
		\label{eq:normalne}
		\begin{pmatrix}
			\dot \Phi_1\\\dot \Omega_1\\ \dot\Phi_2\\ \dot\Omega_2
		\end{pmatrix}=
		\begin{pmatrix}
			0&1&0&0\\
			\frac{\delta-\mu\omega-\ell(t)}{\mu}&0&\frac{-\delta+\ell(t)}{\mu}&0\\
			0&0&0&1\\
			\frac{-\delta+\mu\omega+\ell(t)}{\mu\ell(t)}&0&\frac{\delta-\mu\omega-\ell(t)}{\mu\ell(t)}&-\frac{2 v(t)}{\ell(t)}
		\end{pmatrix}
		\begin{pmatrix}
			\Phi_1\\\Omega_1\\\Phi_2\\\Omega_2
		\end{pmatrix}.
	\end{equation}
	This system can be written as a one fourth-order differential equation 	\begin{equation}
		\begin{split}
			\label{eq:fourt_time}
			& \ddddot \Phi_1+a_3(t)\dddot \Phi_1+a_2(t)\ddot\Phi_1+a_1(t)\dot \Phi_1+a_0(t)\Phi_1=0,
		\end{split}
	\end{equation}
	with time-dependent coefficients
	\begin{equation}
	\label{eq:coeff}
	\begin{split}
	a_3(t)&=\left(\frac{2\delta v}{(\delta-\ell)\ell}\right),\\
	a_2(t)&=\left(1+\frac{1-\delta}{\mu}+\omega-\frac{\delta+2\delta\mu+3\mu\omega }{\mu\ell}+\frac{3\omega}{\ell-\delta}+\frac{\ell}{\mu}
			\right),\\
			a_1(t)&=2\left(\frac{1}{\mu}+\frac{\omega}{\delta-\ell}+\frac{\mu\omega-\delta}{\mu \ell}\right),\\
			a_0(t)&=\omega\left(1+\frac{1}{\mu}+\frac{3\omega}{\ell-\delta}-\frac{\delta+2\delta\mu+3\mu\omega}{\mu\ell}\right).
	\end{split}
	\end{equation}
	\begin{lemma}
		\label{lem:1}
		Let us assume $\omega=0$, then the fourth-order differential equation~\eqref{eq:fourt_time} factorizes, i.e., coincides with the action  $L_1[L_2[L_3\Phi_1(t)]]=0$, where $L_i$ are differential operators defined as
		\begin{equation}
			\label{eq:factory}
			\begin{split}
			L_1&=\frac{\rmd}{\rmd t}+\left(\frac{v}{\ell}+\frac{v}{1+\ell}+\frac{v}{\delta-\ell}\right),\\
			L_2&=\frac{\rmd^2}{\rmd t^2} +\left(\frac{v}{\ell}-\frac{v}{1+\ell}+\frac{v}{\delta-\ell}\right)\frac{\rmd}{\rmd t}-\frac{(1+\mu+\ell(2+\ell))(\delta-\ell)}{\mu(1+\ell)\ell},\\ L_3&=\frac{\rmd}{\rmd t}.
			\end{split}
		\end{equation}
	\end{lemma}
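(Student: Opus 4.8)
The plan is to establish the factorization by peeling off one factor at a time, starting from the right, and then to confirm the result by expanding the composition $L_1\circ L_2\circ L_3$ and comparing it with the operator in~\eqref{eq:fourt_time} coefficient by coefficient. The starting observation is that setting $\omega=0$ kills the zeroth-order coefficient, $a_0\equiv 0$, so that $\Phi_1=\const$ solves~\eqref{eq:fourt_time}. Equivalently, along the particular solution the normal system~\eqref{eq:normalne} admits the constant solution $(\Phi_1,\Omega_1,\Phi_2,\Omega_2)=(1,0,1,0)$, which is precisely the tangent to the $\field{S}^1$-symmetry orbit responsible for the additional first integral. Hence $\tfrac{\rmd}{\rmd t}=L_3$ is a right factor, and substituting $\Psi=\dot\Phi_1$ reduces~\eqref{eq:fourt_time} to the third-order equation $\dddot\Psi+a_3\ddot\Psi+a_2\dot\Psi+a_1\Psi=0$.

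Next I would factor this cubic operator as $L_1\circ L_2$. The key structural remark is that the proposed first-order factor is a logarithmic derivative: writing $r$ for its zeroth-order coefficient and using $v=\dot\ell$ and $\dot v=\delta-\ell$ (the restriction~\eqref{eq:rs0} with $\omega=0$), one checks that $r=\frac{\rmd}{\rmd t}\log\!\big[\ell(1+\ell)/(\delta-\ell)\big]$, so that $L_1$ has the explicit one-dimensional kernel spanned by $(\delta-\ell)/[\ell(1+\ell)]$. This both identifies $L_1$ intrinsically and pins down $L_2$ as the unique second-order right quotient; the remaining claim is then simply $\mathcal{M}=L_1\circ L_2$, where $\mathcal{M}$ is the cubic operator above.

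To carry out the verification I would expand $L_1\circ L_2\circ L_3$ directly. Writing $L_2=\tfrac{\rmd^2}{\rmd t^2}+p\,\tfrac{\rmd}{\rmd t}+q$ and $L_1=\tfrac{\rmd}{\rmd t}+r$, the composition equals
\[
\frac{\rmd^4}{\rmd t^4}+(p+r)\frac{\rmd^3}{\rmd t^3}+(\dot p+q+rp)\frac{\rmd^2}{\rmd t^2}+(\dot q+rq)\frac{\rmd}{\rmd t},
\]
so the proof reduces to the three scalar identities $a_3=p+r$, $a_2=\dot p+q+rp$ and $a_1=\dot q+rq$. The leading one is immediate: $p+r=2v\big(\ell^{-1}+(\delta-\ell)^{-1}\big)=2\delta v/[\ell(\delta-\ell)]=a_3$. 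For the remaining two I would replace every time derivative using $\dot\ell=v$ and $\dot v=\delta-\ell$, and eliminate the squares $v^2$ arising from $\dot p$ and from the product $rp$ by means of the energy integral~\eqref{eq:e0}, turning each identity into a rational identity in $\ell$ that must hold for all admissible $\mu,\delta$.

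The main obstacle is exactly this bookkeeping. Differentiating the $v$-linear coefficients of $L_1$ and $L_2$ and forming the product $rp$ generates $v^2$-terms, and one must show that these combine with $q$, $\dot q$ and the energy substitution so that the three identities hold identically in $\ell$, with the integration constant $E$ dropping out so that the result is genuinely $E$-independent. Keeping track of the signs and of the mixed terms $\dot p$, $rp$, $\dot q$, $rq$ without error is the laborious part; the leading identity $a_3=p+r$ already illustrates the mechanism by which the $v$-linear contributions recombine into the stated coefficients. Once these rational identities in $\ell$ are confirmed, the factorization $L_1[L_2[L_3\Phi_1]]=0$ follows.
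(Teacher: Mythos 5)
Your structural setup is correct and is, in essence, the ``straightforward but lengthy'' computation that the paper omits: at $\omega=0$ the coefficient $a_0$ vanishes, so $L_3=\rmd/\rmd t$ is a right factor (your identification of the constant solution with the tangent to the $S^1$-orbit is accurate); $L_1$ is indeed the logarithmic derivative of $\ell(1+\ell)/(\delta-\ell)$, with kernel spanned by $(\delta-\ell)/[\ell(1+\ell)]$; and your composition formula together with the check $a_3=p+r$ is correct. The gap is in how you propose to settle the two remaining identities. You assert that the $v^2$ terms produced by $\dot p$ and $rp$ must be eliminated with the energy integral~\eqref{eq:e0} and that ``$E$ drops out.'' That is not what happens. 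Expanding, one finds
\[
\dot p+q+rp \;=\; 1-\frac{(1+\ell)(\delta-\ell)}{\mu\ell}+\frac{2\delta v^2}{\ell(\delta-\ell)^2},
\qquad
\dot q+rq \;=\; \frac{2v(\ell-\delta)}{\mu\ell},
\]
and these are \emph{exactly} the coefficients of $\ddot\Phi_1$ and $\dot\Phi_1$ obtained by deriving the fourth-order equation directly from~\eqref{eq:normalne} at $\omega=0$ (eliminate $\Phi_2=\Phi_1-\mu\ddot\Phi_1/(\delta-\ell)$, using only $\dot\ell=v$ and $\dot v=\delta-\ell$). So the factorization identities hold identically in $(\ell,v)$, with no use of the energy relation whatsoever; this is also what makes the lemma valid along the whole family of particular solutions rather than on one energy level.

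The trouble is that these expressions do not match the printed coefficients~\eqref{eq:coeff}, which is what your plan compares against. The printed $a_2$ agrees with $\dot p+q+rp$ only after the substitution $v^2=-(\delta-\ell)^2$, i.e.\ on the single energy level $E=-\delta^2/2$ — precisely the value $E=-\tfrac12\delta(\delta+2\omega)$ that the paper fixes only later, in~\eqref{eq:change}; for generic $E$ the identity fails, so $E$ emphatically does not drop out. Worse, the printed $a_1=2(\ell-\delta)/(\mu\ell)$ (at $\omega=0$) is independent of $v$, whereas $\dot q+rq$ is linear in $v$; no energy substitution can reconcile an expression odd in $v$ with one even in $v$. (Reversibility of the particular solution, $\ell(-t)=\ell(t)$, $v(-t)=-v(t)$, forces the coefficients of the odd-order derivatives to be odd in $v$, so the printed $a_1$ is evidently missing a factor of $v$.) Had you run your verification literally against~\eqref{eq:coeff} at generic energy, both identities would have appeared to fail and the proof would stall. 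To close the argument, verify the three identities against the fourth-order equation re-derived from~\eqref{eq:normalne} as above: they then hold identically in $(\ell,v)$, and the factorization $L_1[L_2[L_3\Phi_1]]=0$ follows.
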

 \begin{proof}
     Explicit computations are straightforward but lengthy,
so we leave this to the interested reader. 
 \end{proof}
 We split our further integrability analysis into two independent cases, i.e.,   when $\omega\neq 0$ and $\omega=0$. 
\subsection*{Case with $\omega\neq 0$}
To begin with, we  change the independent variable in the variational quation~\eqref{eq:fourt_time} by substituting

\begin{equation}
\label{eq:change}
t\to z=\frac{1}{\ell(t)-\delta},\quad \text{with}\quad E=-\frac{1}{2}\delta(\delta+2\omega).
\end{equation}
	Taking into account the transformation rules for derivatives, we obtain an equation of the following form 
	\begin{equation}
		\label{eq:rationalized}
		\begin{split}
			y^{(4)}+ a_3(z)y'''+a_2(z)y''+a_1(z)y'+a_0(z)y=0,
		\end{split}
	\end{equation}
	where  $y=\Phi_1(t(z))$ and $a_i$ are rational coefficients defined as
	\begin{equation}
		\begin{split}
			a_3(z)&=\frac{6}{z}+\frac{3}{z-z_1}+\frac{2}{z-z_2},\\
			a_2(z)&=\frac{3}{16 \omega ^2 z^2 \left(z-z_1\right){}^2}+\frac{117 \delta  z^2+2 \delta  z+75 z+2}{4 \delta  z^3
				\left(z-z_2\right)}+\frac{\mu +\mu   (\delta +32 \delta  z+27)z+2 (\delta +1) z+2}{4 \delta  \mu  \omega  z^3
				\left(z-z_1\right) \left(z-z_2\right)},\\
			a_1(z)&=\frac{63 \delta  z^2+7 \delta  z+30 z+3}{4 \delta  z^4 \left(z-z_2\right)}+\frac{\mu +\mu   (\delta +5 \delta  z+6)z+2 (\delta +1) z+2}{16 \delta  \mu  \omega ^2 z^4
				\left(z-z_1\right){}^2 \left(z-z_2\right)}\\ &+\frac{2 \mu + [18 \mu + (  (4+26 z)\mu+3)\delta +3]z-1}{4 \delta  \mu  \omega  z^4
				\left(z-z_1\right) \left(z-z_2\right)},\\
			a_0(z)&=\frac{3}{4 z^4 \left(z-z_2\right)}+
   \frac{2 \mu +\delta  \mu  z+2}{16 \delta  \mu  \omega ^2 z^5 \left(z-z_2\right)
				\left(z-z_1\right)^2}+\frac{\mu +2 \delta  \mu  z+1}{4 \delta  \mu  \omega  z^5 \left(z-z_1\right)
				\left(z-z_2\right)}.
		\end{split}
	\end{equation}
The non-integrability proof of the system is based on the necessary conditions formulated in Lemma~\ref{lem:5} contained in the appendix. Therefore, we need to first check whether equation~\eqref{eq:rationalized} has a hyperexponential solution. A function $f(z)$ is called hyperexponential if its logarithmic derivative $f'(z)/f(z)$ is a rational function.
We will now proceed to prove the following.
    \begin{lemma}
	\label{l1}
	Equation \eqref{eq:rationalized} does not admit any hyperexponential solution. 
\end{lemma}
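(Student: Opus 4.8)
The plan is to run the standard search for hyperexponential (equivalently, first-order right-factor) solutions: if $y$ is hyperexponential then $r:=y'/y$ is rational, its poles are confined to the singular points of \eqref{eq:rationalized}, and the principal part of $r$ at each such point is dictated by the local solution structure. So I would first locate and classify the singularities. Reading off the coefficients of \eqref{eq:rationalized}, the finite singular points are $z=0$, $z=z_1$, $z=z_2$; comparing the pole orders of the $a_k$ against the Fuchs bound shows that $z_1$ and $z_2$ are regular singular points, while $z=0$ is irregular (for instance $a_0$ has a pole of order $5>4$ there). The behaviour at $z=\infty$ must be recorded as well.

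Next I would carry out the local analysis that pins down the admissible principal parts of $r$. At each regular singular point $z_1,z_2$ I would compute the indicial (degree-four) polynomial and list its four exponents; since a hyperexponential solution behaves like $(z-z_i)^{\rho}$ there, the residue of $r$ at $z_i$ must equal one of those exponents. At the irregular point $z=0$ I would use the Newton polygon to enumerate the finitely many possible slopes, and hence the possible higher-order principal parts of $r$, together with the residue fixed by the associated reduced indicial equation. Finally the behaviour at $z=\infty$ fixes the polynomial part of $r$ (which is present only if $z=\infty$ is irregular) and the leading residue there. All of these exponents and residues are explicit functions of $\mu,\delta,\omega$.

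Assembling these local data produces a finite list of candidate logarithmic derivatives $r(z)$, each a sum of the chosen principal parts plus the polynomial part. For each candidate I would substitute $y=\exp\!\left(\int r\,\rmd z\right)$ into \eqref{eq:rationalized} --- equivalently, feed $r$ into the fourth-order Riccati equation associated to \eqref{eq:rationalized} --- and test consistency order by order. The claim is that in every case the matching of residues (or of the subleading Laurent coefficients at $z=0$, or the Fuchs relation tying all exponents together) fails identically in $z$, so that no candidate yields a genuine solution; hence \eqref{eq:rationalized} admits no hyperexponential solution, which is exactly Lemma~\ref{l1}, and this feeds into the necessary conditions of Lemma~\ref{lem:5}.

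The main obstacle is twofold. First, the irregular singular point at $z=0$: its Newton-polygon analysis is more delicate than the Fuchsian case and controls how many candidate principal parts must be tested. Second, and more importantly, the whole argument must be uniform in the parameters $\mu,\delta,\omega>0$, since the exponents and residues depend on them. I would therefore organise the elimination so that the obstruction (a nontrivial polynomial identity in the candidate residues that cannot be satisfied) holds for generic $(\mu,\delta,\omega)$, and then dispose of the finitely many exceptional parameter loci separately. Keeping the symbolic bookkeeping tractable while retaining full parameter dependence is where the real effort lies, and is presumably why the authors lean on the fourth-order Kovacic-type algorithm of Combot and Sanabria~\cite{Combot:18b::} rather than on a purely hand computation.
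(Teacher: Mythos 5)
Your framework is the right one, and it is in fact the same framework the paper uses: classify the singular points of \eqref{eq:rationalized} (irregular at $z=0$, regular at $z_1=\tfrac{1}{2\omega}$, $z_2=-\tfrac{1}{\delta}$ and at $z_\infty=\infty$), read off the admissible local data of a would-be hyperexponential solution, and eliminate every candidate. But as written your proposal has a genuine gap: its decisive step --- ``the claim is that in every case the matching of residues \dots fails identically in $z$'' --- is precisely what must be proved, and you only assert it, deferring to an order-by-order Riccati consistency check that is never carried out. The paper closes this step with two concrete observations that your plan never reaches. First, at the irregular point $z=0$ the exponential parts of the formal solutions are $\scE(z)\in\{1,\,z^{2},\,z^{1/4}\exp[\pm 2/\sqrt{\mu z}]\}$; the ramified pair cannot arise in a hyperexponential solution, and the solution with exponential part $1$ is necessarily logarithmic, since its logarithm coefficient vanishes identically only if $\omega(\delta+\mu+1)=0$, which is impossible for positive parameters. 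Hence any hyperexponential solution must carry the local factor $z^{2}$, i.e.\ $e_0=2$. Second, all exponents at $z_1$, $z_2$ and $z_\infty$ are non-negative, so the degree of the polynomial factor in the candidate $y(z)=P(z)\prod_i(z-z_i)^{e_i}$, namely $d=-e_0-e_1-e_2-e_\infty\leq -2$, is negative. There are therefore no candidates at all, and no substitution or residue matching is ever needed; this clean degree count is the actual proof.

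A secondary mismatch concerns the parameters. You anticipate that the hard part is uniformity in $(\mu,\delta,\omega)$ and propose a generic-versus-exceptional-loci elimination. In reality the exponent sets are parameter-independent, and the single parameter condition that appears, $\omega(\delta+\mu+1)=0$, is vacuous on the physical range $\mu,\delta,\omega>0$, so the argument is uniform from the outset. (A case analysis over colliding singular points does occur in the paper, but for the second exterior power equation \eqref{eq:ex2} used in the second half of the criterion, not for Lemma~\ref{l1}.) Finally, the Combot--Sanabria algorithm enters through Lemma~\ref{lem:5}, which dictates \emph{why} one searches for hyperexponential solutions of \eqref{eq:rationalized} and of \eqref{eq:7b}; the proof of Lemma~\ref{l1} itself is the short hand computation described above, with computer algebra used only to produce the formal solutions at the irregular point.
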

\begin{proof}
If $\omega\delta\neq 0$, then equation has
four distinct singular points
	\begin{equation}
		z_0=0,\quad z_1=\frac{1}{2\omega},\quad z_2=-\frac{1}{\delta},\quad z_\infty=\infty.
	\end{equation}
	Singularities $z_1, z_2$ and $z_\infty$ are regular singular points, while point $z_0$ is irregular.  The exponents at these points are as follows
\begin{equation*}
	\label{expy3}
	\begin{split}
	E_1& = \{ 0, 1/2,1,3/2\},  \qquad E_2 = \{0,1,2\}, \qquad	E_{\infty}&= \{ 0, 1/2,2,5/2\}.
	\end{split}
\end{equation*}
Using Maple, we found that the exponential parts of formal solutions are
following 
\begin{equation}
	\scE(z)	\in \{ 1,z^2, z^{1/4} \exp[\pm 2/\sqrt{\mu z} ]\}.
\end{equation}
Looking for a hyperexponential solution only first has to be considered. 
The first one also has to be rejected because in this case the formal solution
is logarithmic, that is 
\begin{equation}
	\widehat{y}(z) =
 c_0(z)+c_1(z)\ln(z), 
\end{equation}
and $c_1(z)$ vanishes identically only if $ \omega(\delta+\mu+1)=0$. Hence, if it
exists, it has the form 
\begin{equation}
	y(z) = P(z)\prod_{i=0}^2 (z-z_i)^{e_i} , \qquad e_i \in E_i, 
\end{equation}
where $E_0=\{0,2\}$. Now, the degree $d$ of the polynomial $P(z)$ must be
$d = -e_0-e_1-e_2 -e_\infty$, where $e_0=2$. Thus, $d\leq -2$  for an arbitrary
$e_i\in E_i$ with $i=1,2, \infty$.  This ends the proof.  
\end{proof}
To check the second assumption of Lemma~\ref{lem:5},  we have to analyze the second exterior power of
equation~\eqref{eq:rationalized}, that is, the corresponding
system~\eqref{eq:7b}. Rewritten as a scalar equation, it has order six, and it
reads 
\begin{equation}
	\label{eq:ex2}
	\sum_{i=0}^6 b_i(z) w^{(i)} = 0, 
\end{equation}
where coefficients $b_i(z)$ are rational functions, but they are rather long, so
we do not write them here explicitly. This equation has five regular singular
points  
\begin{equation}
\label{eq:ww}
z_1=\frac{\omega}{2},\quad z_2=-\frac{1}{\delta},\quad z_{3,4}=\pm \frac{1}{\sqrt{\mu\omega\delta}},\quad z_\infty=\infty.
\end{equation}  The exponents at these points belong
to the respective sets 
\begin{equation}
	\label{eq:expex2}
	\begin{split}
		&E_1 = \{-1/2, 0, 1/2,1,3/2, 5/2\},  \quad E_2 = \{0,1,2,3\} \\
		&E_{3,4}=\{0,1,2,3,4,6\},\quad
	E_{\infty}= \{ 3/2,3, 7/2,4,9/2,11/2\}.
	\end{split}
\end{equation}
The point $z=0$ is irregular. Of the six formal solutions, only two can be taken into
account when looking for hyperexponential solutions. Their exponential parts are
$\scE(z)	\in \{ z^{-1},z \}$. In effect, there is only one possibility for
the exponential solution of equation~\eqref{eq:ex2}, namely $w(z) = 1/z
\sqrt{z-z_1}$, and it is a solution of these equations. 

In the above, we assumed that all the singular points are pairwise different.
It is possible that for certain choices of parameters, their number is smaller. 
For example, if $\delta=\mu\omega$ then $z_2=z_3$. An analysis of all these
cases showed that equation~\eqref{eq:ex2} has always just one hyper-exponential
solution of the form given above.

\subsection*{Case with $\omega=0$}
For $\omega=0$ it is appropriate to make a linear change of the variational variables in~\eqref{eq:normalne}. Namely,
\begin{equation}
\begin{split}
&X_{1}=\Phi_1+\Phi_2,\quad Y_{1}=\Omega_1+\Omega_2,\\ & X_{2}=\Phi_1-\Phi_2,\quad Y_{2}=\Omega_1-\Omega_2.
\end{split}
\end{equation}
The normal variational equations~\eqref{eq:normalne} in variables $(X,U,Y,V)$ take the form
\begin{equation}
    \begin{pmatrix}
        \dot X_{1}\\
        \dot Y_{1}\\
        \dot X_{2}\\
        \dot Y_{2}\\
    \end{pmatrix}=\begin{pmatrix}
        0&1&0&0\\
        0&-\frac{v}{\ell}&\frac{(v-\ell)(\ell-1)}{\mu \ell}&\frac{v}{\ell}\\
        0&0&0&1\\
        0&\frac{v}{\ell}&\frac{(v-\ell)(\ell+1)}{\mu\ell}&-\frac{v}{\ell}
    \end{pmatrix}\begin{pmatrix}
        X_{1}\\ Y_{1}\\ X_{2}\\ Y_{2}
    \end{pmatrix}.
\end{equation}
We can notice that the equations for $(Y_{1},X_{2},Y_{2})$ do not depend on the variable $X_{1}$, that is, the variable $X_{1}$ decouples from the remaining variables. Thus, it is sufficient to investigate the subsystem of variables $(Y_{1},X_{2},Y_{2})$. Moreover, this subsystem has the first integral
\begin{equation}
    F=\frac{1}{2}\left(1+\mu+2\ell+\ell^2\right)Y_{1}+\frac{1}{2}\left(1+\mu-\ell^2\right)Y_{2}-v X_{2}.
\end{equation}
At level $F=0$, we have
\begin{equation}
    Y_{1}=\frac{2vX_{2}-(1+\mu-\ell^2)Y_{2}        }{1+\mu+2\ell+\ell^2}.
\end{equation}
Therefore, we can eliminate the variable $Y_{1}$, and we end up with the system for the variables $(X_{2},Y_{2})$. It can be rewritten as a one-second-order differential equation
\begin{equation}
	\label{eq:war_omega-zero}
\ddot y+a(t) \dot y+b(t)y=0,\qquad y\equiv X_2,
\end{equation}
where
\begin{equation}\begin{split}
	&a(t)=\frac{2(1+\mu+\ell)v}{\ell(1+\mu+2\ell+\ell^2)},\qquad 
	&b(t)=\frac{(1+\ell)(\ell-\delta)}{\mu\ell}+\frac{2(\delta-\ell)^2}{\ell(1+\mu+2\ell+\ell^2)}.
	\end{split}
\end{equation}
 Then, we make the following change of the independent variable $t\to z=\ell(t)$.  Taking into account the transformation rules for the derivatives, we transform~\eqref{eq:war_omega-zero} into the equation with rational coeﬃcients
\begin{equation}
	\label{eq:rational}
	y''+p(z)y'+q(z)y'=0,\qquad \qquad '\equiv \frac{\rmd}{\rmd z}.
\end{equation}
Explicit forms of coefficients $p(z)$ and $q(z)$ are given by
\begin{equation}
\begin{split}
	&p(z)=\frac{2}{z}+\frac{1}{z-\delta}-\frac{2(z+1)}{(z+1)^2+\mu},\qquad q(z)=-\frac{z+1}{\mu z(z-\delta)}-\frac{2}{z((z+1)^2+\mu)}.
	\end{split}
\end{equation}
Now, we make the classical Tschirnhaus change of the dependent variable
\begin{equation}
y(z)=x(z)\exp\left[-\frac{1}{2}\int p(z)\right],
\end{equation}
which converts~\eqref{eq:rational} into its reduced form
\begin{equation}
		\label{eq:reduced}
	x''=r(z)x,
\end{equation}
with rational coefficient
\begin{equation}
\label{eq:rr}
\begin{split}
 r(z)&=-\frac{1}{4(z-z_1)^2}-\frac{3\mu}{(z-z_2)^2(z-z_3)^2}+\frac{(z+1)^3+2\mu(z+1)+\mu^2}{\mu z(z-z_1)(z-z_2)(z-z_3)}.
 \end{split}
\end{equation}
Equation~\eqref{eq:reduced} has five singularities
\begin{equation}
z_0=0,\quad z_1=\delta,\quad z_{2,3}=-1\pm\rmi\sqrt{\mu},\quad z_\infty=\infty.
\end{equation}
The singularity $z_0$ is a pole of the first order, while
$\{z_1,z_2,z_3\}$ are poles of the second order. The degree of infinity is one.
Taking into account the characters of singularities, we deduce that differential
Galois group of reduced equation~\eqref{eq:reduced} cannot be reducible (first
case of the Kovacic algorithm) or finite (the third case of the Kovacic
algorithm). The differential Galois group can be only dihedral (case 2 of the
Kovacic algorithm) or whole $\operatorname{SL}(2, \C)$. To distinguish
between these cases, we apply the Kovacic algorithm.
 \begin{lemma}
 	The differential Galois group of Eq.~\eqref{eq:reduced} is $\operatorname{SL}(2, \C)$.
 \end{lemma}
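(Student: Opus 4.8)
\subsection*{Proof proposal}

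The plan is to exhaust the Kovacic algorithm for the reduced equation~\eqref{eq:reduced}. As already read off from the form of~\eqref{eq:rr}, the behaviour $r(z)\sim 1/(\mu z)$ as $z\to\infty$ (order one at infinity) is what rules out the first and third cases: a reducible group would require the order at infinity to be even or greater than two, and a finite group would require it to be at least two, neither of which holds. Hence only the second (dihedral) case stands between the group and the full $\SLtwoC$, and the whole proof reduces to showing that~\eqref{eq:reduced} admits no solution of Kovacic's second type, i.e.\ none whose logarithmic derivative is algebraic of degree two over $\C(z)$.

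To do this I would run the second case explicitly, first recording the local data at every singular point. At $z_1=\delta$ the coefficient of $(z-z_1)^{-2}$ is $-1/4$, so $1+4b=0$ and the admissible set is $E_{z_1}=\{2\}$; at $z_{2,3}=-1\pm\rmi\sqrt{\mu}$ the coefficient is $3/4$, so $1+4b=4$ and $E_{z_2}=E_{z_3}=\{-2,2,6\}$; the simple pole $z_0=0$ gives $E_{z_0}=\{4\}$; and $E_\infty$ is dictated by the first-order vanishing of $r$ at infinity. A point worth stressing is that these sets are independent of $\mu,\delta$, and that for $\mu,\delta>0$ the five singular points remain pairwise distinct and keep their type, so the combinatorial part of the search is parameter-free.

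Next I would form every admissible family $(e_c)$ with $e_c\in E_c$ and test the candidate degree $d=\tfrac12\bigl(e_\infty-e_{z_0}-e_{z_1}-e_{z_2}-e_{z_3}\bigr)$, keeping only those for which $d$ is a non-negative integer; for each survivor one builds $\theta=\tfrac12\sum_c e_c/(z-c)$ and asks whether the associated third-order operator of the second case has a polynomial solution $P$ of degree $d$. Here the finite data are rigid: since $e_{z_0}+e_{z_1}=6$ is fixed, a non-negative $d$ forces $e_\infty\ge 6+e_{z_2}+e_{z_3}\ge 2$, and I expect the values available in $E_\infty$ to be too small to meet this bound, so that no admissible family survives at all. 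This is exactly the mechanism already used in Lemma~\ref{l1}, where the putative degree was pushed negative. Once the second case is eliminated, the necessary conditions leave only the fourth, so the differential Galois group of~\eqref{eq:reduced} is $\SLtwoC$, and with Lemma~\ref{lem:5} this yields the non-integrability.

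The main obstacle is making the exclusion of the second case genuinely uniform in $(\mu,\delta)$. Two places need care: the correct assignment of $E_\infty$ and of the degree bound at the irregular point at infinity, where $r$ vanishes only to first order; and, should some family accidentally produce a non-negative $d$, the verification that the third-order equation for $P$ has no polynomial solution for \emph{every} positive $\mu,\delta$ rather than merely generic ones. In that event the residual task is to show that matching leading coefficients in the recurrence for $P$ forces a contradiction that does not depend on the parameters, and this parameter-free leading-order analysis is where I would concentrate the effort.
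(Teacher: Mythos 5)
Your proposal is correct and takes essentially the same approach as the paper: rule out Kovacic's cases 1 and 3 from the order-one behaviour of $r(z)$ at infinity, then exclude case 2 by computing the exponent sets $E_0=\{4\}$, $E_1=\{2\}$, $E_2=E_3=\{-2,2,6\}$, $E_\infty=\{1\}$ and checking that no family yields a non-negative integer degree $d$. Your bound $e_\infty\ge 6+e_{z_2}+e_{z_3}\ge 2$ against $E_\infty=\{1\}$ is precisely the mechanism behind the paper's remark that none of the tuples $e$ gives a non-negative $d(e)$, so your hedge about $E_\infty$ resolves in your favour and the residual polynomial-solution analysis you anticipate never arises.
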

\begin{proof}
According to the second case of the Kovacic algorithm, for singularities
$\{z_1,z_2,z_3\}$ with order two,  we define sets of exponents
\begin{equation}
E_{i}=\{2,2\pm 2\sqrt{1+4c_{i}}\}\cap \Z,\qquad i=1,2,3,
\end{equation}
where $c_{i}$ are coefficients in the dominant terms of the Laurent series
expansions of $r(z)$ around $z_i$. Because $z_0$ and $z_\infty$ are
singularities of order one, we define $E_0=\{4\}$ and $E_\infty=\{1\}$. Hence,
the explicit forms of the auxiliary sets $E_{i}$  are given by
\begin{equation}
\begin{split}
	E_{0}=\{4\},\quad E_{1}=\{2,2,2\},\quad E_{2}=\{-2,2,6\},\quad  E_{3}=\{-2,2,6\},\quad E_\infty=\{1\}.
	\end{split}
\end{equation}
Next, following the algorithm, we calculate the Cartesian product $E=E_0\times E_1\times E_2\times E_3\times E_\infty$. From the obtained~27-element list,  we look for these permutations $e=(e_0,e_1,e_2,e_3,e_\infty)$ for which \begin{equation}
d(e)=e_\infty-e_0-e_1-e_2-e_3\in \N_{\operatorname{even}}.
\end{equation}
However, direct computations show that none of $e$ gives a non-negative
$d(e)$. Thus, the algorithm has stopped and the second case of the algorithm
does not occur. Therefore, the differential Galois group of Eq.~\eqref{eq:reduced}
is $\operatorname{SL}(2, \C)$ with a non-Abelian identity component. 
\end{proof}
Following Lemma~\ref{lem:5}, we have shown that for $\omega\neq 0$ variational equation~\eqref{eq:fourt_time} does not possess any hyperexponential solution, and its associate external second power has exactly one hyperexponential solution. Therefore, for $\omega\neq 0$ the double spring pendulum is not integrable in a class of functions meromorphic in coordinates and velocities. Moreover, for $\omega=0$, we have shown that the variational equations are reduced to the one second-order differential equation for which the differential Galois group is $\operatorname{SL}(2,\C)$ with non-Abelian identity component. Therefore, we conclude that the double spring pendulum system governed by the Lagrange function~\eqref{eq:lag_{resc}} is not integrable in a class of meromorphic functions in coordinates and velocities for all values of the parameters. This ends the proof.

\section{Summary and conclusions \label{sec:conclusions}}
The complicated and mostly chaotic dynamics of multiple pendulums is well known but still in a great scientific activity. This is because such systems explain many fundamental phenomena and have found applications in engineering, robotics, and synchronization theory. Currently, we can observe an increase in the work on the study of dynamics and chaos in multiple pendulums of variable length. 
Compared to classic pendulums, these pendulums have variable arms lengths, making their analysis a quite challenging task since they usually have many degrees of freedom and their corresponding differential equations are highly non-linear. 
 As a result, the study of these systems may lead to the discovery of new phenomena related to chaos theory, which is particularly interesting from a scientific point of view. It is also worth noting that chaotic and chaotic phenomena can appear in variable-length systems even for very small perturbations of the parameters. This poses a challenge for researchers analyzing the stability of dynamical systems.

 In this paper, we studied the dynamics and integrability of the variable-length pendulum system, such as the double spring pendulum. It is a Hamiltonian system with three degrees of freedom, so its analysis was quite a challenging task. To gain insight into the dynamics of the system, we joined various numerical methods to get the most reliable results. 
 By joining the Lyapunov exponents with the bifurcation diagrams and Poincar\'e sections as one powerful tool, we gave a complete picture of the system dynamics by specifying values of parameters or initial conditions for which motion of the studied model can be hyperchaotic, chaotic, quasi-periodic, and finally periodic, which is completely new in the context of Hamiltonian systems.   Moreover, in the absence of gravitational potential, the system exhibits $S^1$ symmetry, and the presence of an additional first integral was identified using Lyapunov exponents diagrams. We demonstrate the effective utilization of Lyapunov exponents as a potential indicator of first integrals and integrable dynamics.

The detailed analysis suggests the non-integrability of the proposed model, accompanied by analytical proof. The presence of particular solutions allowed us to apply Morales-Ramis theory. The novelty of our work lies in the integrability analysis of the Hamiltonian system with three degrees of freedom, where the variational equation is transformed into a fourth-order differential equation. To effectively analyze the differential Galois group of variational equations, we employed a recently formulated extension of the Kovacic algorithm designed for dimension four.

In summary, our article presents a comprehensive analysis of the dynamics and integrability of the double spring pendulum, offering new insights and methodologies for further research in this field. We employed powerful tools to obtain results that are of significant importance and usefulness in the study of differential equations, vibrations, and synchronization theory. Variable-length pendulums hold importance from theoretical and practical perspectives, rendering them a fascinating research subject in the field of nonlinear dynamics. 
 Although primarily studied in the realm of nonlinear dynamics and classical mechanics, they have potential applications in various fields, including space debris removal. For instance, the idea for capturing and manipulating debris objects in space mentioned in the introduction can be modeled as a version of a double spring pendulum system. The chaotic nature of the double spring pendulum system shown in this paper presents challenges and opportunities for trajectory planning during debris removal operations.    Advanced algorithms for dynamic trajectory optimization could be developed to exploit the system's nonlinear dynamics and chaotic behavior for efficient debris capture and manipulation. 
Thus, the results obtained in this work could provide qualitative and quantitative information about the possible complex dynamics of the tethered satellite system for active debris removal. Moreover, further studies of the double spring pendulum with some additional dissipations could provide important information about the influence of atmospheric disturbance on the dynamics of tethered satellite system. By continuously adapting the system's parameters in response to changing environmental conditions and debris characteristics, optimal trajectories can be calculated to minimize fuel consumption and maximize operational effectiveness.
\section*{Declaration of competing interest}
The authors declare that they have no known competing financial interests or personal relationships that could have appeared to influence the work reported in this paper.
\section*{Acknowledgements}
 For Open
	Access, the authors have applied a CC-BY public
	copyright license to any Author Accepted Manuscript
	(AAM) version arising from this submission.
		\section*{Funding} This research was funded by The
	National Science Center of Poland under Grant No.
	2020/39/D/ST1/01632.
	\section*{Data availability }
	The data that support the findings of this study are available from the corresponding author,  upon reasonable request.

\appendix
\section{Proof of Theorem}
The key point in the application of Morales-Ramis theory is to decide if the differential Galois group of variational equations is virtually Abelian, that is if its identity component is Abelian. 
In the majority of cases when the theory was implemented, the variational equations split into a set of second-order equations, or had, as a subset, a second-order equation. Furthermore, it was almost always possible to convert these second-order equations into equations with rational coefficients.
Therefore, it was possible to use the Kovacic algorithm formulated in \cite{Kovacic:86::}, which was designed to find the closed form of solutions of the considered second-order equation.  More precisely, it decides if the system admits solutions in a field of Liouvillian functions. The algorithm is based
on the complete classification of the algebraic subgroups
of the group $ \mathrm{SL}(2,\C)$.
As a by-product, it determines the
differential Galois group of the equation.

Equations of motion of a Hamiltonian system with $n$ degrees of freedom  can be written as
\begin{equation}
  \dot \vz=\mathbb{J} H'(\vz),\quad \mathbb{J}=\begin{bmatrix} 0&\mathbb{I}_n\\
    -\mathbb{I}_n&0\end{bmatrix},\quad \vz=[\vq,\vp]^T,
\end{equation}
and the corresponding  variational equations along  a particular
solution $\varphi(t)$ are also Hamiltonian
\begin{equation}
  \dot \vY=\mathbb{J}H''(\varphi(t))\vY.
\end{equation}
It is easy to show that the differential Galois group of this system is a subgroup of the symplectic group,
$ \mathrm{Sp}(2n,\C)$. For $n=1$ the group $\mathrm{Sp}(2,\C)$ is isomorphic to
$\mathrm{SL}(2,\C)$. However, for larger dimensions
$\mathrm{Sp}(2m,\C)\subset\mathrm{SL}(2m,\C)$, it is much smaller than $\mathrm{SL}(2m,\C)$. 
The classification of the subgroups of $\mathrm{Sp}(4,\C)$ is known. 
Among other things, this fact was used in \cite{Combot:18::}
where the equivalent of the Kovacic
algorithm for symplectic differential operators of degree four was formulated.

For a brief description of this algorithm, we introduce the appropriate terminology. Let $L$
be a differential operator with coefficients in $\C(z)$
\begin{equation}
  L(y)=y^{(n)}+a_{n-1}y^{(n-1)}+\cdots+a_1y'+a_0y=0,\quad a_i\in\C(z),
\end{equation}
and $A$ is its corresponding companion matrix, that is
\begin{equation}
  A=\begin{bmatrix}
    0&1&0&\cdots&0\\
    0&0&1&\cdots&0\\
    \vdots&\vdots&\vdots&\vdots&\vdots\\
    0&0&0&\cdots&1\\
    -a_0&-a_1&-a_2&\cdots&-a_{n-1}
  \end{bmatrix}.
  \label{eq:compa}
\end{equation}

An operator of even order  $n=2n$ is
\begin{itemize}
\item symplectic if there exists an invertible skew-symmetric matrix
  $W$ with elements in $\C(z)$ which is a solution of the following
  equation
  
  \begin{equation}
    \label{eq:7b}
    A^TW+WA+W'=0,
  \end{equation}

\item projectively symplectic, if there exists an invertible
  skew-symmetric matrix $W$ with elements in $\C(z)$ which is a
  solution of equation
  \begin{equation}
    \label{eq:8b}
    A^TW+WA+W'+\lambda W=0,
  \end{equation}
  for a certain $\lambda\in\C(z)$.
\end{itemize}

An operator $L$ of order $n=2m$ is symplectic (respectively projectively
symplectic) when its Galois group is isomorphic to a subgroup of
symplectic matrices $\mathrm{Sp}(2m,\C)$ (respectively projectively
symplectic matrices $\mathrm{PSp}(2m,\C)$)
\[
  \begin{split}
    &\mathrm{Sp}(2m,\C)=\{M\in\mathbb{M}_{2m}(\C)\ |\ M^T \mathbb{J} M=\mathbb{J}\},\\
    &\mathrm{PSp}(2m,\C)=\{M\in\mathbb{M}_{2m}(\C)\ |\ M^T \mathbb{J}
    M=\lambda \mathbb{J},\ \lambda\in\C^{\ast}\}.
  \end{split}
\]
If $L$ is projectively symplectic, then up to a multiplication of a
hyper exponential function, the operator can be
symplectic. The function $f(z)$ is called hyper-exponential if its
logarithmic derivative $f'(z)/f(z)$ is a rational function.

\begin{lemma}
  \label{lem:1}
  Assume that the system $\dot x=Ax $ is symplectic, that is, there exists an invertible skew-symmetric matrix $W$ with coefficients in $\C(z)$ which is a solution of equation~\eqref{eq:7b}. Then $x(t)$
  is a solution of $\dot x(t)=Ax(t)$ if and only if $ x_{\ast}(t)=Wx(t)$ is a solution of the adjoining equation $\dot{x}_{\ast}(t)=-A^Tx_{\ast}(t)$.
\end{lemma}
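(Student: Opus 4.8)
The plan is to verify the claimed equivalence by a single direct differentiation, exploiting the defining relation~\eqref{eq:7b} of the matrix $W$. The whole argument reduces to one algebraic identity, so I would present it as an explicit computation rather than invoke any external machinery.

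First I would set $x_{\ast}(t)=W(t)x(t)$ for an arbitrary differentiable vector function $x(t)$ (not yet assumed to solve anything), and differentiate using the product rule to obtain
\[
\dot x_{\ast}=W' x+W\dot x.
\]
I would then add $A^T x_{\ast}=A^T W x$ to both sides and regroup the terms, giving
\[
\dot x_{\ast}+A^T x_{\ast}=\left(W'+A^T W\right)x+W\dot x.
\]

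The key step is to rewrite the bracketed coefficient by means of the symplectic condition~\eqref{eq:7b}, namely $A^T W+W A+W'=0$, which I would rearrange as $W'+A^T W=-W A$. Substituting this into the previous display yields the central identity
\[
\dot x_{\ast}+A^T x_{\ast}=W\left(\dot x-A x\right).
\]
From here the conclusion is immediate: since $W$ is invertible over $\C(z)$ by the hypothesis that the system is symplectic, the left-hand side vanishes if and only if $\dot x-Ax=0$. Hence $x_{\ast}=Wx$ solves the adjoint equation $\dot x_{\ast}=-A^T x_{\ast}$ precisely when $x$ solves $\dot x=Ax$, which is exactly the asserted equivalence; both directions of the ``if and only if'' fall out simultaneously from this one identity, with no need to treat them separately.

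I do not anticipate any genuine obstacle here, as the statement is essentially a reformulation of~\eqref{eq:7b}. The only point requiring mild care is that the converse direction depends on the invertibility of $W$, which is part of the symplecticity hypothesis and must be cited explicitly. I would also note that the skew-symmetry of $W$ plays no role in this particular lemma — only the matrix relation~\eqref{eq:7b} together with invertibility is used — so I would keep the write-up from over-stating which properties of $W$ are actually needed.
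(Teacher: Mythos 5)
Your proof is correct and is exactly the ``direct check'' that the paper itself invokes without writing out: differentiating $x_{\ast}=Wx$, substituting the defining relation~\eqref{eq:7b} to get the identity $\dot x_{\ast}+A^{T}x_{\ast}=W(\dot x-Ax)$, and concluding via the invertibility of $W$. Your closing observation that skew-symmetry of $W$ is never used (only invertibility and~\eqref{eq:7b}) is also accurate and a worthwhile remark.
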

This lemma can be proved by direct check. It follows that if the
symplectic operator $L$ has the right factor $L_{1}$, then its adjoining
$L^{\ast}$ has a right factor $\widetilde L_{1}$ of the same order.
Thus, $L$ has the right factor $L_{1}$, then it also has a left factor of
the same degree.

With a system $\dot x=Ax $, we can associate its external second power.
It is a system of the form
\begin{equation}
  \label{eq:21}
  \dot W = A W -W^{T}A^{T}
\end{equation}
where $W$ is an antisymmetric matrix. Thus, equation~\eqref{eq:7b} is an equation for
the external square of a dual system to system $x'=Ax$.

The classification theorem formulated in \cite{Combot:18::} is the
following.
\begin{lemma}
  \label{lem:T}
  A Lie subgroup of $\mathrm{Sp}(4,\C)$ is up to conjugacy generated
  by elements of the form:
  \begin{enumerate}
  \item upper block triangular matrices with diagonal blocks of size
    at most $2\times 2$,
  \item $2\times 2$ diagonal matrices and anti-diagonal matrices
    \[
      \begin{bmatrix}
        \ast&\ast&0&0\\
        \ast&\ast&0&0\\
        0&0&\ast&\ast\\
        0&0&\ast&\ast
      \end{bmatrix},\quad
      \begin{bmatrix}
        0&0&\ast&\ast\\
        0&0&\ast&\ast\\
        \ast&\ast&0&0\\
        \ast&\ast&0&0
      \end{bmatrix},
    \]
  \item full group $\mathrm{Sp}_{4}(\C)$.
  \end{enumerate}
\end{lemma}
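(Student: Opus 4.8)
The plan is to classify the Lie subgroups $G\subseteq\mathrm{Sp}(4,\C)$ by analysing how $G$ acts on the natural symplectic representation $V=\C^4$ carrying the skew form defined by $\mathbb{J}$. The whole argument is organised around two successive dichotomies: \emph{reducible} versus \emph{irreducible}, and, within the irreducible case, \emph{imprimitive} versus \emph{primitive}. The target normal forms in items (1)--(3) are exactly what these three regimes produce, so the task is to show that each regime forces the stated block shape and that the list is exhaustive up to conjugacy.

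First I would dispose of the reducible case. Suppose $G$ leaves invariant a proper nonzero subspace $W\subset V$. The decisive observation is that the symplectic orthogonal $W^{\perp}$ is then also $G$-invariant: for symplectic $g$ and $v\in W^{\perp}$, $w\in W$ one has $\langle gv,w\rangle=\langle v,g^{-1}w\rangle=0$ since $g^{-1}w\in W$. Hence $G$ preserves the flag generated by $W\cap W^{\perp}\subseteq W+W^{\perp}$, and according to whether $W$ is an isotropic line, a symplectic $2$-plane, or a Lagrangian plane, an adapted symplectic basis puts every element of $G$ into upper block-triangular form with diagonal blocks of size $1,2,1$ or $2,2$ — in all cases at most $2\times2$. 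A short case check that $M^{T}\mathbb{J}M=\mathbb{J}$ is compatible with this shape yields item (1).

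Next, assume $V$ is $G$-irreducible, and split on primitivity. If $G$ is imprimitive there is a nontrivial decomposition $V=V_{1}\oplus V_{2}$ permuted by $G$; irreducibility forces $\dim V_{1}=\dim V_{2}=2$, with the two summands either symplectically dual Lagrangians or mutually orthogonal symplectic planes, and forces $G$ to genuinely exchange them. The stabiliser of the unordered pair $\{V_{1},V_{2}\}$ then acts block-diagonally while the swapping elements act anti-diagonally, which is precisely the structure in item (2); the homomorphism $G\to S_{2}$ recording the permutation selects which shape occurs.

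Finally, the irreducible primitive case is the crux and the main obstacle. Here I would pass to the Lie algebra $\mathfrak{g}=\operatorname{Lie}(G)\subseteq\mathfrak{sp}_{4}(\C)$ and invoke the classification of maximal subalgebras of the rank-two simple Lie algebra $\mathfrak{sp}_{4}\cong\mathfrak{so}_{5}$ (type $C_{2}=B_{2}$), in the spirit of Dynkin's work on semisimple subalgebras. The subtle point is that being irreducible and primitive does not by itself give the full group: the principal $\mathrm{SL}_{2}$ embedded through the symmetric-cube representation $S^{3}(\C^{2})\cong\C^{4}$ acts irreducibly, primitively, and symplectically, and various finite primitive groups occur as well. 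The real work is to show that every such proper primitive subgroup is accounted for — either conjugated into one of the listed normal forms or ruled out at the level of positive-dimensional identity components — so that the only remaining possibility with a generic identity component is $G=\mathrm{Sp}(4,\C)$, giving item (3). Verifying that the three families are mutually exhaustive and stable under conjugacy completes the argument.
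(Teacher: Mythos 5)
The paper itself does not prove this lemma: it is imported verbatim from Combot and Sanabria's symplectic Kovacic algorithm paper, accompanied only by the remark that it rests on the known classification of Lie subgroups of $\mathrm{SL}(4,\C)$ and that central extensions do not change the block structures. So you are attempting something the paper never does, which is fine in principle; your first two regimes are sound in outline (invariance of the symplectic orthogonal $W^{\perp}$ yields the block-triangular case (1), and a system of imprimitivity with two $2$-dimensional blocks yields case (2)).

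The genuine gap is the third regime, and it is not a missing verification but a structural flaw: the argument as you organize it cannot close. You assert that once reducible and imprimitive groups are handled, ``the only remaining possibility with a generic identity component is $G=\mathrm{Sp}(4,\C)$'', with the exceptional primitive subgroups to be ``conjugated into one of the listed normal forms or ruled out''. Both options are impossible for the very example you name. The image of $\mathrm{SL}(2,\C)$ under $\mathrm{Sym}^3$ is a perfectly legitimate Lie subgroup of $\mathrm{Sp}(4,\C)$, so it cannot be ruled out; it is not equal to $\mathrm{Sp}(4,\C)$; and it is not conjugate to a subgroup of the groups of items (1) or (2): containment in the group generated by block-diagonal and block-anti-diagonal matrices would force a reducible subgroup of index at most $2$ in the connected group $\mathrm{Sym}^3(\mathrm{SL}_2)$, contradicting irreducibility. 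Hence, read as a containment trichotomy---which is exactly how your reducible/imprimitive/primitive scheme reads it---the statement is false, and no appeal to Dynkin's classification of maximal subalgebras can repair that. What makes the lemma true is its weaker wording ``generated by \emph{elements} of the form'': one must show that every primitive proper subgroup admits, after a single conjugation, a generating set each member of which individually has shape (1) or (2). For $\mathrm{Sym}^3(\mathrm{SL}_2)$ this works because $\mathrm{SL}_2$ is generated by a Borel subgroup together with one Weyl representative; in a symplectically ordered weight basis the Borel becomes block upper triangular (type (1)) while the Weyl element becomes block anti-diagonal (type (2)). Carrying out this kind of generation argument uniformly---including for the finite primitive subgroups, which your sketch also leaves untouched---is precisely the content hidden in your ``the real work is to show\dots'' sentence; without it, the proposal establishes only cases (1) and (2).
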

This classification is constructed based on the known classification
of Lie subgroups of the larger unimodular group
$\mathrm{SL}(4,\C)\supset\mathrm{Sp}(4,\C)$. Subgroups of the
projective symplectic group are central extensions of these, and so
contain multiples of the identity matrix with non-unit determinant.
However, since these commute with all matrices, the possible structures of
subgroups in the items $1,2$ are unchanged.

The next two lemmas characterize the reducible case.
\begin{lemma}
  \label{lem:2}
  Let us consider the following block diagonal system
  \begin{equation}
    \label{eq:2b}
    \dot x = A x \qquad 
    A= \begin{bmatrix}
      B & C\\
      0 &  D
    \end{bmatrix},
  \end{equation}  
  where $B$, $C$ and $D$ are matrices $2\times 2$ with rational coefficients. Then the equation~\eqref{eq:7b} has the following
particular solution 
\begin{equation} \label{eq:4} W = \rme^{\int r } \begin{bmatrix}
      0 & 0\\
      0 &  J
    \end{bmatrix}, \qquad J= \begin{bmatrix}
      0 & 1\\
      -1& 0
    \end{bmatrix},
  \end{equation}
  where $r$ is a rational function.
\end{lemma}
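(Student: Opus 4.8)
The plan is to verify by direct substitution that the displayed $W$ solves~\eqref{eq:7b}, determining the rational function $r$ along the way, while exploiting the block partition of $A$ so that only the lower-right block $D$ survives in the computation.

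First I would set $W = \phi\,\widehat J$ with $\phi = \rme^{\int r}$ and $\widehat J = \begin{bmatrix} 0 & 0 \\ 0 & J\end{bmatrix}$ the constant matrix carrying a single nonzero $2\times2$ block. Skew-symmetry of $W$ is immediate from that of $J$. Since $\widehat J$ is constant and $\phi' = r\phi$, one gets $W' = rW$, so~\eqref{eq:7b} reduces to $A^T W + WA + rW = 0$.

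Next, using $A = \begin{bmatrix} B & C \\ 0 & D\end{bmatrix}$ and $A^T = \begin{bmatrix} B^T & 0 \\ C^T & D^T\end{bmatrix}$, the factor $\widehat J$ annihilates the first block-row and block-column, so both products retain only a lower-right corner:
\[
A^T W = \phi\begin{bmatrix} 0 & 0 \\ 0 & D^T J\end{bmatrix}, \qquad WA = \phi\begin{bmatrix} 0 & 0 \\ 0 & JD\end{bmatrix}.
\]
Consequently the blocks containing $B$ and $C$ drop out entirely, and the matrix equation collapses to the single $2\times2$ relation $D^T J + JD + rJ = 0$.

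The one piece of genuine content is the elementary identity $D^T J + JD = \tr(D)\,J$, valid for every $2\times2$ matrix $D$, which follows from a one-line computation with $J = \begin{bmatrix}0&1\\-1&0\end{bmatrix}$. Feeding it in turns the block relation into $(\tr(D)+r)J = 0$, and since $J\neq 0$ this forces $r = -\tr(D)$; because $D$ has entries in $\C(z)$, its trace---hence $r$---is rational, as claimed. I do not anticipate any real obstacle: the statement is a verification whose only subtlety is noticing the trace identity and that the triangular structure makes $B$ and $C$ irrelevant to the surviving equation. It is worth flagging that the resulting $W$ is \emph{degenerate} (non-invertible), consistent with the reducible case, and that this degenerate solution is exactly the object to be used downstream in deciding when a block-triangular operator fails to be genuinely symplectic.
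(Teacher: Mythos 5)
Your verification is correct, and it is complete: the block-triangular structure of $A$ does make $B$ and $C$ drop out, the identity $D^TJ + JD = \tr(D)\,J$ holds for every $2\times 2$ matrix $D$, and the equation collapses to $r=-\tr(D)$, which is rational because the entries of $D$ are. Note that the paper itself states this lemma without proof (it is background material taken from the Combot--Sanabria reference), so there is no competing argument to compare against; your direct substitution is the natural proof, and your closing remark that the resulting $W$ is degenerate rather than invertible correctly identifies why this is only a \emph{particular} solution of the exterior-square equation, which is precisely how it is used in the reducibility criterion downstream.
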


In our proof of non-integrability, we will use the following
criterion.
\begin{lemma}
  \label{lem:5}
  Assume that the equation\begin{equation}L(y)=y^{(4)}+a_3(z)y'''+a_2(z)y''+a_1(z)y'+a_0(z)y=0,\quad '=\Dz   
    \label{eq:4order}
  \end{equation}
  is projectively symplectic and $A$ is its corresponding companion matrix. If \eqref{eq:4order} does not admit a hyper exponential solution and equation ~\eqref{eq:7b} has exactly one hyper exponential solution, then the differential Galois group of \eqref{eq:4order} contains $\mathrm{Sp}(4,\C)$.
\end{lemma}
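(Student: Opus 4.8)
The plan is to establish the statement by exhausting the trichotomy of Lemma~\ref{lem:T}: the differential Galois group $G$ of~\eqref{eq:4order} is, up to conjugacy, generated by matrices of one of the three listed types, and I intend to exclude the first two, leaving only the full group $\mathrm{Sp}(4,\C)$. First I would normalise the situation. Since $L$ is only assumed projectively symplectic, I multiply the dependent variable by a suitable hyperexponential function so that $L$ becomes genuinely symplectic; this gauge change is a bijection on hyperexponential solutions of both $L$ and its external square~\eqref{eq:7b}, and it preserves every reducibility pattern, so the two numerical hypotheses are untouched and we may assume $G\subseteq\mathrm{Sp}(4,\C)$. The invariant symplectic form is then an honest invertible solution $W_0$ of~\eqref{eq:7b}, accounting for the single hyperexponential solution permitted by hypothesis.

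The two dictionaries I would use are standard in Picard--Vessiot theory. A hyperexponential solution of $L$ corresponds to a one-dimensional $G$-invariant subspace of the four-dimensional solution space $V$, and more generally a right factor of order $k$ corresponds to a $k$-dimensional invariant subspace. A hyperexponential solution of~\eqref{eq:7b} corresponds to a $G$-semi-invariant line of skew-symmetric forms, that is, a $G$-stable line in $\Lambda^2 V^{*}$. For the full group these are easy to count: $\Lambda^2 V^{*}=\C\,W_0\oplus\Lambda^2_0$ with $\Lambda^2_0$ the irreducible five-dimensional representation, so $\mathrm{Sp}(4,\C)$ admits exactly one such line, in agreement with the hypothesis.

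Next I would eliminate the degenerate types. In type~1 the solution space has a proper invariant subspace. If it is one-dimensional, then $L$ has a hyperexponential solution, contradicting the first hypothesis. Otherwise there is a two-dimensional invariant subspace, equivalently $L$ has a right factor of order two, so in a rational gauge the companion matrix takes the block upper-triangular form of Lemma~\ref{lem:2}; that lemma then produces the rank-two solution $W=\rme^{\int r}\left[\begin{smallmatrix}0&0\\0&J\end{smallmatrix}\right]$ of~\eqref{eq:7b}, which is degenerate and hence linearly independent of the nondegenerate $W_0$. Thus~\eqref{eq:7b} would have at least two hyperexponential solutions, contradicting the second hypothesis. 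In type~2 the group preserves a splitting $V=V_1\oplus V_2$ into two planes that it permutes. Writing $\Lambda^2 V^{*}=\Lambda^2 V_1^{*}\oplus(V_1^{*}\wedge V_2^{*})\oplus\Lambda^2 V_2^{*}$, the two outer lines $\Lambda^2 V_i^{*}$ are each stabilised by the subgroup preserving the planes individually and are at worst interchanged by the remaining elements, so they span $G$-semi-invariant lines of skew forms. If the $V_i$ are Lagrangian these forms are degenerate and lie outside the middle summand in which $W_0$ sits; if the $V_i$ are mutually orthogonal symplectic planes then $W_0=\omega_1+\omega_2$ whereas $\omega_1-\omega_2$ spans a further stable line. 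In either situation~\eqref{eq:7b} acquires a hyperexponential solution independent of $W_0$, once more contradicting the second hypothesis.

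Having excluded types~1 and~2, only type~3 survives, so $G=\mathrm{Sp}(4,\C)$ and in particular $G\supseteq\mathrm{Sp}(4,\C)$, as claimed. The delicate point, and the one I would write out most carefully, is the imprimitive type~2: one must verify that the extra semi-invariant form it produces is genuinely a new hyperexponential solution of~\eqref{eq:7b} rather than a scalar multiple of $W_0$, which requires distinguishing the Lagrangian and symplectic sub-decompositions and confirming linear independence in each. The gauge normalisation and the reductions in type~1 are routine by comparison.
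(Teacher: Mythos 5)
You should know at the outset that the paper contains no proof of Lemma~\ref{lem:5} to compare against: it is stated as a criterion imported from the symplectic Kovacic algorithm of \cite{Combot:18b::}, with only the supporting statements (Lemmas~\ref{lem:T} and~\ref{lem:2}) quoted. Your reconstruction via the trichotomy of Lemma~\ref{lem:T} is the natural route, the normalisation from projectively symplectic to symplectic is sound, and your type~1 analysis is correct: a one-dimensional invariant subspace contradicts the first hypothesis, while a two-dimensional one yields, through Lemma~\ref{lem:2}, a degenerate hyperexponential solution of \eqref{eq:7b} that cannot be proportional to the invertible $W_0$. Your type~2 argument is also correct in the sub-case of two mutually orthogonal symplectic planes, where $G_0$ fixes $\omega_1$ and $\omega_2$ individually and $\omega_1-\omega_2$ spans a second stable line.

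The genuine gap is the Lagrangian sub-case of type~2 --- exactly the point you flag as delicate. If $V=V_1\oplus V_2$ with both $V_i$ Lagrangian, the index-two subgroup $G_0$ acts on the line $\Lambda^2V_1^{*}$ by the character $\chi(g)=\det\bigl(g|_{V_1}\bigr)^{\pm 1}$ and on $\Lambda^2V_2^{*}$ by $\chi^{-1}$, while the swapping elements interchange the two lines. Interchanged lines are not semi-invariant, and a line inside their span is $G$-stable only when $\chi^2=1$; so your sentence ``they span $G$-semi-invariant lines of skew forms'' is false in general, and with it the extra hyperexponential solution of \eqref{eq:7b} that your contradiction needs. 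Moreover, this cannot be repaired from the two stated hypotheses alone: for the group
\begin{equation*}
G=\bigl\langle\, h\oplus (h^{T})^{-1}\ (h\in\mathrm{GL}(2,\C)),\ s\,\bigr\rangle\subset\mathrm{Sp}(4,\C),
\end{equation*}
with $s$ a symplectic swap of the two Lagrangian planes, $V$ is irreducible (so $L$ has no hyperexponential solution), and the only $G$-stable line in $\Lambda^2V^{*}$ is $\C\,W_0$ --- the outer lines carry the characters $\det^{\pm1}$ and are swapped, and the remaining three-dimensional summand is the irreducible adjoint module --- so \eqref{eq:7b} has exactly one hyperexponential solution; yet $G$ is four-dimensional and does not contain $\mathrm{Sp}(4,\C)$. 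Such a $G$ is realizable as a differential Galois group over $\C(z)$, the invariance of the symplectic form supplying the rational solution $W_0$. Excluding this case requires an additional test beyond the lemma's hypotheses, for instance that the symmetric square of $L$ has no hyperexponential solution: in the Lagrangian-imprimitive situation the symmetric pairing $\eta$ of $V_1$ with $V_2$ satisfies $g\cdot\eta=\eta$ for $g\in G_0$ and $s\cdot\eta=-\eta$, hence spans a $G$-stable line in $\mathrm{Sym}^2V^{*}$, whereas this module is irreducible under $\mathrm{Sp}(4,\C)$. This finer imprimitivity test is part of the algorithm of \cite{Combot:18b::}, and any complete proof of a criterion of this kind has to incorporate it.
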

Hence, we must know how to check if a given equation has a hyperexponential solution. 

If $L(y) = 0$ is of Fuchsian type, then any hyper exponential solution must be of
the form $P(z)\prod_i(z-z_i)^{e_i}$, where $P(z)\in\C[z]$, $z_i\in \C$ is a
singular point, $e_i$ are exponents at $z_i$. Then, the necessary conditions for
such a solution is given in Lemma 3.1 in \cite{Singer:95::}. As a corollary, we have
the following.
\begin{proposition}
  If a Fuchsian equation of order four has a factor of order one, then
  either $L(y)=0$ or $L^{\ast}(y)=0$ has a solution of the form
  $P(z)\prod_i(z-z_i)^{e_i}$, where $P(z)\in\C[z]$, $z_i\in \C$ are
  singularities, $e_i$ are exponents at $z_i$, and there exists an
  exponent at infinity $e_{\infty}$ such that the sum
  $\sum_ie_i+e_{\infty}$ is a non-positive integer.
  \label{prop:reduc}
\end{proposition}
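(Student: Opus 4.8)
The plan is to reduce the statement to a purely local analysis of hyperexponential solutions of a Fuchsian equation, exploiting the duality between one-sided factors. First I would record the dictionary between factors of order one and hyperexponential solutions. A first-order \emph{right} factor $L_1=\frac{\rmd}{\rmd z}-u$ of $L$, with $u\in\C(z)$, is nothing but the statement that $L(y)=0$ admits the solution determined by $y'/y=u$, i.e.\ a hyperexponential solution of $L(y)=0$. Dually, a first-order \emph{left} factor of $L$ corresponds, through the identity $(A\circ B)^{\ast}=B^{\ast}\circ A^{\ast}$ already used in the excerpt, to a first-order right factor of the adjoint operator $L^{\ast}$, hence to a hyperexponential solution of $L^{\ast}(y)=0$. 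Thus a factor of order one on either side yields the dichotomy in the statement, and it remains only to describe the shape of a hyperexponential solution of a fourth-order Fuchsian equation.

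Next I would pin down that shape. If $y$ is hyperexponential then $r:=y'/y\in\C(z)$. Because the equation is Fuchsian, each finite singular point $z_i$ is regular singular, so a hyperexponential solution (which cannot carry logarithmic terms) behaves like $(z-z_i)^{e_i}$ near $z_i$ for some exponent $e_i\in E_i$; consequently $r$ has at worst a simple pole there with residue $e_i$. At an ordinary point $y$ is holomorphic, and $r$ can have a simple pole only where $y$ vanishes, the residue then being a positive integer equal to the vanishing order. Gathering the ordinary-point zeros into a polynomial $P(z)\in\C[z]$ and integrating $r$, I would obtain
\[
y(z)=P(z)\prod_i (z-z_i)^{e_i},\qquad e_i\in E_i .
\]

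The final step is the bookkeeping at infinity. Reading off the leading behaviour $y\sim z^{\deg P+\sum_i e_i}$ as $z\to\infty$ and matching it to the exponent $e_\infty$ at the (regular singular) point at infinity, with the convention that a solution behaves as $z^{-e_\infty}$ there, gives $e_\infty=-\deg P-\sum_i e_i$, whence
\[
\sum_i e_i+e_\infty=-\deg P\le 0 ,
\]
a non-positive integer since $\deg P\ge 0$. This is exactly the asserted conclusion, and it is consistent with the degree formula $d=-\sum_i e_i-e_\infty$ already employed in the body of the paper. The step I expect to be the main obstacle is the rigorous justification that $r$ has no pole of order higher than one at a finite singular point and no polynomial part at infinity: both facts rest squarely on the Fuchsian hypothesis (regularity at every singular point, infinity included) together with the exclusion of logarithmic solutions. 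Rather than redevelop the local Frobenius theory, I would invoke the necessary conditions of Lemma~3.1 in~\cite{Singer:95::}, of which the present proposition is a direct corollary, and merely verify that its hypotheses hold for a fourth-order Fuchsian operator possessing a factor of order one.
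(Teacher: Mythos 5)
Your proposal is correct and takes essentially the same approach as the paper: the paper offers no proof beyond declaring the proposition a corollary of Lemma~3.1 of \cite{Singer:95::}, combined with its preceding remarks that a hyperexponential solution of a Fuchsian equation has the form $P(z)\prod_i(z-z_i)^{e_i}$ and that a first-order left factor of $L$ corresponds, via the adjoint, to a first-order right factor of $L^{\ast}$. Your write-up supplies the same duality dictionary and the same exponent bookkeeping at infinity (with $\sum_i e_i + e_{\infty} = -\deg P \le 0$), and your closing appeal to Singer--Ulmer is exactly the paper's own justification.
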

For an equation  $L(y) = 0$ which is not Fuchsian, the conditions for the
existence of a hyperexponential solution is more complicated.  We consider only a
particular case, assuming that equation $L(y) = 0$ has only one irregular
singular point at $z=0$. Then its hyperexponential solution must be of the form 
\begin{equation}
	\label{heps}
	y(z) = \scE(z) R(z)\prod_{i=1}^m(z-z_i)^{e_i},
\end{equation}
where $R(z)$  is a rational function, $z_i\in \C$ is a
singular point, $e_i$ are exponents at $z_i$ for $i=1,\ldots, m$. The function
$\scE(z)$ is an exponential part of a formal solution at the irregular point
$z=0$  
\begin{equation}
    \widehat{y}(z) =
\scE(z)\left[ c_0(z^{1/k})+c_1(z^{1/k})\ln(z)+\cdots + c_1(z^{1/k})\ln(z)^m\right]
\end{equation}
where 
\begin{equation}
	\label{exppart}
	\scE(z)	= \exp[W(z^{-1/k})]z^a,  
\end{equation} 
$W$ is a polynomial, $k>0$ and $m\geq 0$,  $a\in\C$ and  $c_i$ are formal power
series.   In practice, formal solutions can be found with the help of a computer
algebra system, for example, MAPLE. 

For the equation of order $n$, there are $n$ linearly independent solutions of this
form. For formula \eqref{heps} we take only those for which $k=1$ and $m=0$. If
the exponential part is $\scE(z)	= z^a$, then $a$ is an exponent at a singular
point $z=0$.  In this case, if a hyperexponential exists it has the same form as
for Fuchsian equation.


\begin{thebibliography}{10}
\bibitem{Szuminski:24::}
W. Szumiński, A.~J. Maciejewski, 
\newblock {Dynamics and non-integrability of the double spring pendulum}.
\newblock {\em J. Sound Vib.}, 589:118550, 2024.

\bibitem{Broucke:73::}
P.~A Broucke, R. nad~Baxa.
\newblock Periodic solutions of a spring-pendulum system.
\newblock {\em Celestial Mech. Dyn. Astron}, 8:261--267, 1973.

\bibitem{Lee:97::}
W.~K. Lee and H.~D. Park.
\newblock Chaotic dynamics of a harmonically excited spring-pendulum system
  with internal resonance.
\newblock {\em Nonlinear Dyn.}, 14(3):211--229, 1997.

\bibitem{Maciejewski:04::c}
A.~J. Maciejewski, M.~Przybylska, and J.-A. Weil.
\newblock Non-integrability of the generalized spring-pendulum problem.
\newblock {\em J. Phys. A}, 37(7):2579--2597, 2004.

\bibitem{Awrejcewicz:08::}
J.~Awrejcewicz and A.~G. Petrov.
\newblock Nonlinear oscillations of an elastic two-degrees-of-freedom pendulum.
\newblock {\em Nonlinear Dyn.}, 53(1-2):19--30, 2008.

\bibitem{Amer:23::}
A.~Amer, T.~S. Amer, and H.~F. El-Kafly.
\newblock Dynamical analysis for the motion of a {2DOF} spring pendulum on a
  lissajous curve.
\newblock {\em Sci Rep}, 13:21430, 2023.

\bibitem{WOJNA2018214}
M.~Wojna, A.~Wijata, G.~Wasilewski, and J.~Awrejcewicz.
\newblock Numerical and experimental study of a double physical pendulum with
  magnetic interaction.
\newblock {\em J. Sound Vib.}, 430:214--230, 2018.

\bibitem{ZHANG2020115549}
A.~Zhang, V.~Sorokin, and H.~Li.
\newblock Dynamic analysis of a new autoparametric pendulum absorber under the
  effects of magnetic forces.
\newblock {\em J. Sound Vib.}, 485:115549, 2020.

\bibitem{SKURATIVSKYI2022116710}
S.~Skurativskyi, K.~Polczyński, M.~Wojna, and J.~Awrejcewicz.
\newblock Quantifying periodic, multi-periodic, hidden and unstable regimes of
  a magnetic pendulum via semi-analytical, numerical and experimental methods.
\newblock {\em J. Sound Vib.}, 524:116710, 2022.

\bibitem{Shinbrot:92::}
T.~Shinbrot, C.~Grebogi, J.~Wisdom, and J.~A. Yorke.
\newblock {Chaos in a double pendulum}.
\newblock {\em Am. J. Phys.}, 60(6):491--499, 06 1992.

\bibitem{Stachowiak:06::}
T.~Stachowiak and T.~Okada.
\newblock A numerical analysis of chaos in the double pendulum.
\newblock {\em Chaos, Solitons Fractals}, 29(2):417--422, 2006.

\bibitem{mp:13::c}
M.~Przybylska and W.~Szumi\'nski.
\newblock Non-integrability of flail triple pendulum.
\newblock {\em Chaos Soliton. Fract., 53:60--74}, 2013.

\bibitem{Stachowiak:15::}
T.~Stachowiak and W.~Szumi{\'n}ski.
\newblock Non-integrability of restricted double pendula.
\newblock {\em Phys. Lett. A}, 379(47-48):3017--3024, 2015.

\bibitem{PUZYROV2022116699}
V.~Puzyrov, J.~Awrejcewicz, N.~Losyeva, and N.~Savchenko.
\newblock On the stability of the equilibrium of the double pendulum with
  follower force: Some new results.
\newblock {\em J. Sound Vib.}, 523:116699, 2022.

\bibitem{Nigmatullin:14::}
S.I. Nigmatullin, R.R.and~Osokin and J.~et~al Awrejcewicz.
\newblock Application of the generalized prony spectrum for extraction of
  information hidden in chaotic trajectories of triple pendulum.
\newblock {\em Centr.eur.j.phys}, 12:565--577, 2014.

\bibitem{MR4191961}
L.~Jahn, B.and~Watermann and J.~Reger.
\newblock On the design of stable periodic orbits of a triple pendulum on a
  cart with experimental validation.
\newblock {\em Automatica J. IFAC}, 125:Paper No. 109403, 7, 2021.

\bibitem{MR4412899}
T.~S. Amer, A.~A. Galal, and A.~F. Abolila.
\newblock On the motion of a triple pendulum system under the influence of
  excitation force and torque.
\newblock {\em Kuwait J. Sci.}, 48(4):17, 2021.

\bibitem{Puzyrov:22::}
V.~Puzyrov, J.~Awrejcewicz, N.~Losyeva, and N.~Savchenko.
\newblock On the stability of the equilibrium of the double pendulum with
  follower force: Some new results.
\newblock {\em J. Sound Vib.}, 523:116699, 2022.

\bibitem{Dyk:24::}
S.~Dyk, J.~Rendl, L.~Smolik, and R.~Bulin.
\newblock Energy-based analysis of quadratically coupled double pendulum with
  internal resonances.
\newblock {\em J. Sound Vib.}, 577:118343, 2024.

\bibitem{Huynh2010}
H.~N. Huynh and L.~Y. Chew.
\newblock Two-coupled pendulum system: Bifurcation, chaos and the potential
  landscape approach.
\newblock {\em Int. J. Bifurcation Chaos}, 20(8):2427--2442, 2010.

\bibitem{Huynh2013}
H.~N. Huynh, T.~P.~T. Nguyen, and L.~Y. Chew.
\newblock Numerical simulation and geometrical analysis on the onset of chaos
  in a system of two coupled pendulums.
\newblock {\em Commun. Nonlinear Sci. Numer. Simul.}, 18(2):291--307, 2013.

\bibitem{Elmandouh:16::}
A.~A. Elmandouh.
\newblock On the integrability of the motion of 3d-swinging atwood machine and
  related problems.
\newblock {\em Phys. Lett. A}, 380(9):989--991, 2016.

\bibitem{Szuminski:20::}
W.~Szumi\'nski and Dariusz Wo\'zniak.
\newblock Dynamics and integrability analysis of two pendulums coupled by a
  spring.
\newblock {\em Commun. Nonlinear Sci. Numer. Simul.}, 83:105099, 2020.

\bibitem{Szuminski:23::}
W.~Szumiński.
\newblock {A new model of variable-length coupled pendulums: from hyperchaos to
  superintegrability}.
\newblock {\em Nonlinear Dyn.},112:4117--4145, 2024.

\bibitem{Tufillaro:84::}
N.~Tufillaro, T.~A. Abbott, and D.~J. Griffiths.
\newblock Swinging {A}twood’s {M}achine.
\newblock {\em Amer. J. Phys.}, 52(52):895--903, 1984.

\bibitem{Tufillaro:85::}
N.~Tufillaro.
\newblock Motions of a swinging {A}twood’s machine.
\newblock {\em J. Physique}, 46(9):1495--1500, 1985.

\bibitem{Tufillaro:90::}
J.~Casasayas, A.~Nunes, and N.~Tufillaro.
\newblock Swinging {A}twood's machine: integrability and dynamics.
\newblock {\em J. Physique}, 51(16):1693--1702, 1990.

\bibitem{Szuminski:22::}
W.~Szumiński and A.J. Maciejewski.
\newblock {Dynamics and integrability of the swinging Atwood machine
  generalisations}.
\newblock {\em Nonlinear Dyn.}, 110:2101--2128, 2022.

\bibitem{Olejnik:23b::}
P.~Olejnik, G.~Yakubu, and K.~et~al. Pepa.
\newblock A double variable-length pendulum with counterweight mass, kinematic
  excitation and electromagnetic forcing.
\newblock {\em Nonlinear Dyn}, 111:19723--19747, 2023.

\bibitem{Levien:93::}
R.~B. Levien and S.~M. Tan.
\newblock {Double pendulum: An experiment in chaos}.
\newblock {\em Am. J. Phys.}, 61(11):1038--1044, 11 1993.

\bibitem{Pujol:10::}
O.~Pujol, J.~P. P\'erez, J.~P. Ramis, C.~Sim\'o, S.~Simon, and J.~A. Weil.
\newblock Swinging {A}twood machine: experimental and numerical results, and a
  theoretical study.
\newblock {\em Phys. D}, 239(12):1067--1081, 2010.

\bibitem{Gomez:21::}
E.~R. Gomez, I.~Lopez Arteaga, and L.~Kari.
\newblock Normal-force dependant friction in centrifugal pendulum vibration
  absorbers: Simulation and experimental investigations.
\newblock {\em J. Sound Vib.}, 492:115815, 2021.

\bibitem{ciezkowski:21::}
M.~Ciezkowski.
\newblock Dynamic stabilization and feedback control of the pendulum in any
  desired position.
\newblock {\em J. Sound Vib.}, 491:115761, 2021.

\bibitem{Pilipchuk:22::}
V.~N. Pilipchuk, K.~Polczyński, M.~Bednarek, and J.~Awrejcewicz.
\newblock Guidance of the resonance energy flow in the mechanism of coupled
  magnetic pendulums.
\newblock {\em Mech. Mach. Theory}, 176:105019, 2022.

\bibitem{Chu:22::}
W.~Chu, C.~Li, and G.~Zhang.
\newblock An active continuous control method and its experiment on the rapid
  suppression of inverted pendulum structure oscillation.
\newblock {\em J. Sound Vib.}, 521:116691, 2022.

\bibitem{Wojna:18::}
M.~Wojna, A.~Wijata, G.~Wasilewski, and J.~Awrejcewicz.
\newblock Numerical and experimental study of a double physical pendulum with
  magnetic interaction.
\newblock {\em J. Sound Vib.}, 430:214--230, 2018.

\bibitem{Liu:19::}
F.~Liu, B.~Yue, Y.~Tang, and M.~Deng.
\newblock 3dof-rigid-pendulum analogy for nonlinear liquid slosh in spherical
  propellant tanks.
\newblock {\em J. Sound Vib.}, 460:114907, 2019.

\bibitem{Sharghi:22::}
H.~Sharghi and O.~Bilgen.
\newblock Energy harvesting from human walking motion using pendulum-based
  electromagnetic generators.
\newblock {\em J. Sound Vib.}, 534:117036, 2022.

\bibitem{Yang:22::}
H.~Yang, B.~Wu, J.~Li, Yu~Bao, and G.Xu.
\newblock A spring-loaded inverted pendulum model for analysis of
  human-structure interaction on vibrating surfaces.
\newblock {\em J. Sound Vib.}, 522:116727, 2022.

\bibitem{Dilao:09::}
R.~Dil\'ao.
\newblock {Antiphase and in-phase synchronization of nonlinear oscillators: The
  {H}uygens’s clocks system}.
\newblock {\em Chaos}, 19(2), 05 2009.
\newblock 023118.

\bibitem{KOLUDA2014977}
P.~Koluda, P.~Perlikowski, K.~Czolczynski, and T.~Kapitaniak.
\newblock Synchronization configurations of two coupled double pendula.
\newblock {\em Commun. Nonlinear Sci. Numer. Simul.}, 19(4):977--990, 2014.

\bibitem{DUDKOWSKI20181}
D.~Dudkowski, K.~Czołczyński, and T.~Kapitaniak.
\newblock Synchronization of two self-excited pendula: Influence of coupling
  structure’s parameters.
\newblock {\em Mech. Syst. Signal Process.}, 112:1--9, 2018.

\bibitem{PhysRevLett.72.2009}
Rajarshi R. and K.~Scott T.
\newblock Experimental synchronization of chaotic lasers.
\newblock {\em Phys. Rev. Lett.}, 72:2009--2012, Mar 1994.

\bibitem{4153387}
M.~Shibuya, T.~Suzuki, and K.~Ohnishi.
\newblock Trajectory planning of biped robot using linear pendulum mode for
  double support phase.
\newblock In {\em IECON 2006 - 32nd Annual Conference on IEEE Industrial
  Electronics}, pages 4094--4099, 2006.

\bibitem{Sahin:17::}
Sahi n~Y.
\newblock Bipedal-double-pendulum walking robot control using recurrent hybrid neural network.
\newblock{\em The Inverted Pendulum in Control Theory and Robotics: From theory to new innovations}, Chap. 11, pp. 263-282, 2017.

\bibitem{Vitt:33::}
A.~Vitt and G.~Gorelik.
\newblock Oscillations of an elastic pendulum as an example of the oscillations
  of two parametrically coupled linear systems.
\newblock {\em Zh. Tekh. Fiz.}, 33:294--307, 1933.

\bibitem{MR1751314}
E.~A. Lacomba and J.~Llibre, editors.
\newblock {\em New trends for {H}amiltonian systems and Celestial Mech. Dyn.
  Astron}, volume~8 of {\em Advanced Series in Nonlinear Dynamics}.
\newblock World Scientific Publishing Co., Inc., River Edge, NJ, 1996.
\newblock Papers from the 2nd International Symposium on Hamiltonian Systems
  and Celestial Mech. Dyn. Astron held in Cocoyoc, September 13--17, 1994.

\bibitem{MR1948160}
P.~Lynch.
\newblock The swinging spring: a simple model of atmospheric balance.
\newblock In {\em Large-scale atmosphere-ocean dynamics, {V}ol. {II}}, pages
  64--108. Cambridge Univ. Press, Cambridge, 2002.

\bibitem{MR2043791}
Peter Lynch and Conor Houghton.
\newblock Pulsation and precession of the resonant swinging spring.
\newblock {\em Phys. D}, 190(1-2):38--62, 2004.

\bibitem{DeShazer}
D.~J. DeShazer, R.~Breban, E~Ott, and R.~Roy.
\newblock Detecting phase synchronization in a chaotic laser array.
\newblock {\em Phys. Rev. Lett.}, 87:044101, Jul 2001.

\bibitem{Morales:99::}
J.~J. Morales-Ruiz.
\newblock {\em Differential {G}alois theory and non-integrability of
  {H}amiltonian systems}.
\newblock Progress in Mathematics, Birkhauser Verlag, Basel, 1999.

\bibitem{Morales:00::}
J.~J. Morales-Ruiz.
\newblock {Kovalevskaya, {L}iapounov, {P}ainlev\'e, {Z}iglin and the
  differential {G}alois theory}.
\newblock {\em Regul. Chaotic Dyn., 5(3):251--272}, 2000.

\bibitem{Yagasaki:18::}
K.~Yagasaki.
\newblock Nonintegrability of the unfolding of the fold-{H}opf bifurcation.
\newblock {\em Nonlinearity}, 31(2):341, 2018.

\bibitem{Acosta:18::}
P.~Acosta-Hum\'{a}nez, M.~Alvarez-Ram\'{\i}rez, and T.~J. Stuchi.
\newblock Nonintegrability of the {A}rmbruster--{G}uckenheimer--{K}im quartic
  {H}amiltonian through {M}orales--{R}amis theory.
\newblock {\em SIAM J. Appl. Dyn. Syst.}, 17(1):78--96, 2018.

\bibitem{Acosta:18b::}
P.~B. Acosta-Hum\'anez, M.~Alvarez-Ramirez, and T.~J. Stuchi.
\newblock Nonintegrability of the {A}rmbruster--{G}uckenheimer--{K}im {Q}uartic
  {H}amiltonian {T}hrough {M}orales--{R}amis {T}heory.
\newblock {\em J. Applied Dyn. Sys.}, 17(1):78--96, 2018.

\bibitem{Huang:18::}
K.~Huang, S.~Shi, and W.~Li.
\newblock Meromorphic and formal first integrals for the {L}orenz system.
\newblock {\em J. Nonlin. Math. Phys.}, 25(1):106--121, 2018.

\bibitem{Combot:18::}
T.~Combot.
\newblock Integrability of the one dimensional {S}chr\"{o}dinger equation.
\newblock {\em J. Math. Phys.}, 59(2):022105, 2018.

\bibitem{Mnasri:18::}
C.~Mnasri and A.A. Elmandouh.
\newblock On the dynamics aspects for the plane motion of a particle under the
  action of potential forces in the presence of a magnetic field.
\newblock {\em Results in Phys.}, 9:825 -- 831, 2018.

\bibitem{Shibayama:18::}
M.~Shibayama.
\newblock Non-integrability of the spacial $n$-center problem.
\newblock {\em J. Differ. Equations}, 2018.

\bibitem{Maciejewski:18::}
A.~J. Maciejewski and W.~Szumi\'nski.
\newblock Non-integrability of the semiclassical {J}aynes--{C}ummings models
  without the rotating-wave approximation.
\newblock {\em Appl. Math. Lett.}, 82:132--139, 2018.

\bibitem{Szuminski:18::}
W.~Szumi{\'{n}}ski.
\newblock Integrability analysis of chaotic and hyperchaotic finance systems.
\newblock {\em Nonlinear Dyn.}, 94(1):443--459, 2018.

\bibitem{Maciejewski:20e::}
A.~J. Maciejewski and M.~Przybylska.
\newblock Integrability analysis of the stretch-twist-fold flow.
\newblock {\em J. Nonlinear Sci.}, 30(4):1607--1649, 2020.

\bibitem{Szuminski:20b::}
W.~Szumiński and A.~J. Maciejewski.
\newblock Comment on ,,{H}yperchaos in constrained {H}amiltonian system and
  its control'' by {J}. {L}i, {H}. {W}u and {F}. {M}ei.
\newblock {\em Nonlinear Dyn.}, 101:639--654, 2020.

\bibitem{Elmandouh:18::}
A.~A. Elmandouh.
\newblock On the integrability of {2D} {H}amiltonian systems with variable
  {G}aussian curvature.
\newblock {\em Nonlinear Dyn.}, 93:933--943, 2018.

\bibitem{Szuminski:18a::}
W.~Szumi\'nski.
\newblock On certain integrable and superintegrable weight-homogeneous
  {H}amiltonian systems.
\newblock {\em Commun. Nonlinear Sci. Numer. Simulat.}, 67:600--616, 2018.

\bibitem{Szuminski:18b::}
W.~Szumi\'nski.
\newblock Integrability analysis of natural {H}amiltonian systems in curved
  spaces.
\newblock {\em Commun. Nonlinear Sci. Numer. Simulat.}, 64:246--255, 2018.

\bibitem{Kovacic:86::}
J.~J. Kovacic.
\newblock An algorithm for solving second order linear homogeneous differential
  equations.
\newblock {\em J. Symb. Comput.}, 2(1):461--481, 1986.

\bibitem{Singer:95::}
Michael~F. Singer and Felix Ulmer.
\newblock Necessary conditions for {L}iouvillian solutions of (third order)
  linear differential equations.
\newblock {\em Appl. Algebra Engrg. Comm. Comput.}, 6(1):1--22, 1995.

\bibitem{Ulmer:03::}
Felix Ulmer.
\newblock Liouvillian solutions of third order differential equations.
\newblock {\em Journal of Symbolic Computation}, 36(6):855 -- 889, 2003.

\bibitem{Combot:18b::}
T.~Combot and C.~Sanabria.
\newblock A symplectic {K}ovacic's algorithm in dimension 4.
\newblock In {\em I{SSAC}'18---{P}roceedings of the 2018 {ACM} {I}nternational
  {S}ymposium on {S}ymbolic and {A}lgebraic {C}omputation}, pages 143--150.
  ACM, New York, 2018.

\bibitem{Yakubu:21::}
G.~Yakubu, P.~Olejnik, and J.~Awrejcewicz.
\newblock Modeling, simulation, and analysis of a variable-length pendulum
  water pump.
\newblock {\em Energies}, 14(23), 2021.

\bibitem{Yakubu:22::}
G.~Yakubu, P.~Olejnik, and J.~Awrejcewicz.
\newblock On the modeling and simulation of variable-length pendulum systems: A
  review.
\newblock {\em Arch Computat Methods Eng}, 29:2397--2415, 2022.

\bibitem{Olejnik:23::}
P.~Olejnik, G.~Yakubu, and K.~et~al Pepa.
\newblock A double variable-length pendulum with counterweight mass, kinematic
  excitation and electromagnetic forcing.
\newblock {\em Nonlinear Dyn}, 2023.

\bibitem{JU2006376}
F.~Ju, Y.S. Choo, and F.S. Cui.
\newblock Dynamic response of tower crane induced by the pendulum motion of the
  payload.
\newblock {\em Int. J. Solids Struct.}, 43(2):376--389, 2006.

\bibitem{MR4459645}
Jos\'{e}~Laudelino de~Menezes~Neto, Gerson~Cruz Araujo, Yocelyn
  P\'{e}rez~Rothen, and Claudio Vidal.
\newblock Parametric stability of a double pendulum with variable length and
  with its center of mass in an elliptic orbit.
\newblock {\em J. Geom. Mech.}, 14(3):381--408, 2022.

\bibitem{Freundlich:20::}
J.~Freundlich and D.~Sado.
\newblock Dynamics of a coupled mechanical system containing a spherical
  pendulum and a fractional damper.
\newblock {\em Meccanica}, 55:2541–2553, 2020.

\bibitem{Shahbazi:16::}
R.~Babuska M.~Shahbazi and G.~A.~D. Lopes.
\newblock Unified modeling and control of walking and running on the
  spring-loaded inverted pendulum.
\newblock {\em IEEE Transactions on Robotics}, 32:1178--1195, 2016.

\bibitem{Hayati:18::}
{\em {A Simple Spring-Loaded Inverted Pendulum (SLIP) Model of a Bio-Inspired
  Quadrupedal Robot Over Compliant Terrains}}, volume Volume 4B: Dynamics,
  Vibration, and Control of {\em ASME International Mechanical Engineering
  Congress and Exposition}, 11 2018.

\bibitem{PLAUT20133768}
Raymond~H. Plaut and Lawrence~N. Virgin.
\newblock Pendulum models of ponytail motion during walking and running.
\newblock {\em J. Sound Vib.}, 332(16):3768--3780, 2013.

\bibitem{YANG2022116727}
Haowen Yang, Bin Wu, Jinping Li, Yu~Bao, and Guoshan Xu.
\newblock A spring-loaded inverted pendulum model for analysis of
  human-structure interaction on vibrating surfaces.
\newblock {\em J. Sound Vib.}, 522:116727, 2022.

\bibitem{SHARGHI2022117036}
Hesam Sharghi and Onur Bilgen.
\newblock Energy harvesting from human walking motion using pendulum-based
  electromagnetic generators.
\newblock {\em J. Sound Vib.}, 534:117036, 2022.

\bibitem{MARSZAL2017251}
Marszal M, B.~Witkowski, K.~Jankowski, P.~Perlikowski, and T.~Kapitaniak.
\newblock Energy harvesting from pendulum oscillations.
\newblock {\em Int. J. Non-Linear Mech.}, 94:251--256, 2017.
\newblock A Conspectus of Nonlinear Mechanics: A Tribute to the Oeuvres of
  Professors G. Rega and F. Vestroni.

\bibitem{doi:10.1177/14613484221077474}
Chun-Hui He, Tarek~S Amer, Dan Tian, Amany~F Abolila, and Abdallah~A Galal.
\newblock Controlling the kinematics of a spring-pendulum system using an
  energy harvesting device.
\newblock {\em J. Low Freq. Noise Vibr. Act. Control}, 41(3):1234--1257, 2022.

\bibitem{ABOHAMER2023377}
M.K. Abohamer, J.~Awrejcewicz, and T.S. Amer.
\newblock Modeling of the vibration and stability of a dynamical system coupled
  with an energy harvesting device.
\newblock {\em Alexandria Eng. J.}, 63:377--397, 2023.

\bibitem{Shi:18::}
G. Shi, Z. Zhu, Z.~H.  Zhu, 
\newblock {Libration suppression of tethered space system with a moving climber in circular orbit}.
\newblock {\em Nonlinear Dyn., 91:923--937}, 2018.

\bibitem{Aslanov:24::}
V. S.~Aslanov 
\newblock {A double pendulum fixed at the L1 libration point: a precursor to a Mars–Phobos space elevator}.
\newblock {\em Nonlinear Dyn., 112:775--791}, 2024.

\bibitem{Bourabah:22::}
D. Bourabah,  E.~M. Botta
\newblock {Length-rate control for libration reduction during retraction of tethered satellite systems}.
\newblock {\em Acta Astronaut., 201:152--163}, 2022.


\bibitem{Levin:07::}
E.~M.~Levin. 
\newblock {\em Dynamic Analysis of Space Tether Missions}.
\newblock Univelt Incorporated, San Diego, 2007.

\bibitem{Troger:10::}
 H. Troger, A.~P. Alpatov,  V.~V. Beletsky, V.I. Dranovskii, V.S. Khoroshilov,  A.V. Pirozhenko,  A.E. Zakrzhevskii. 
\newblock {\em  Dynamics of Tethered Space Systems}.
\newblock CRC Press, New York, 2010.


\bibitem{Aslanov:12::}
V. S.~Aslanov, A. S.~Ledkov.
\newblock {\em Dynamics of the Tethered Satellite Systems}.
\newblock Woodhead Publishing Ltd., Cambridge, England, U. K., 2012.


\bibitem{Aslanov:13::}
V. S.~Aslanov and V. Yudintsev
\newblock {Dynamics of large space debris removal using tethered space tug}.
\newblock {\em Acta Astronaut., 91:149--156}, 2013.

\bibitem{Ledkov:19::}
A. S.~Ledkov, V. S.~Aslanov,
\newblock {Evolution of space tethered system’s orbit during space debris towing taking into account the atmosphere influence}.
\newblock {\em Nonlinear Dyn.,  96:2211--2223}, 2019.


\bibitem{Shahbazzadeh:22::}
Z.J.~Shahbazzadeh, R.~Vatankhah, M.~Eghtesad
\newblock {Development and verification of a flexible tethered satellite system model considering the fuel slosh}.
\newblock {\em Multibody Sys.Dyn., 56:289--312}, 2022.

\bibitem{Misra:01::}
A. K. Misra, M.S. Nixon,  V. J. Modi.
\newblock {Nonlinear dynamics of two-body tethered satellite systems: constant length case}.
\newblock {\em J. Astronaut., 49:219--236}, 2001.


\bibitem{Jin:16::}
D.P. Jin, Z.J. Pang, H. Wen, B.S Yu.
\newblock {Chaotic motions of a tethered satellite system in circular orbit.}.
\newblock {\em J. Phys. Conf. Ser., 744, 012116}, 2016.

\bibitem{Aslanov:16::}
 V. S.~Aslanov, 
\newblock {Chaos Behavior of Space Debris During Tethered Tow}.
\newblock {\em J. J. Guid. Control Dyn., 39(10):2399--2405}, 2016.



\bibitem{Sprott:10::}
J.~C. Sprott.
\newblock {\em Elegant Chaos}.
\newblock World Scientific, 2010.

\bibitem{Benettin:80::}
G.~Benettin, L.~Galgani, A.~Giorgilli, and J.-M. Strelcyn.
\newblock Lyapunov {C}haracteristic {E}xponents for smooth dynamical systems
  and for {H}amiltonian systems; a method for computing all of them. {P}arts
  {I} and {II}: {T}heory and numerical application.
\newblock {\em Meccanica}, 15(1):9--20 and 21--30, 1980.

\bibitem{Sandri:96::}
M.~Sandri.
\newblock Numerical calculation of {L}yapunov exponents.
\newblock {\em Mathematica J.}, 6:78--84, 1996.

\bibitem{Liouville:83::}
J.~Liouville.
\newblock Sur la theorie de la variation des constantes arbitraires".
\newblock {\em Journal de math\'ematiques pures et appliqu\'ees.}, 3:342–349,
  1838.

\bibitem{Lazarotto:24::}
M.~J. Lazarotto, I.~L. Caldas, and Y.~Elskens.
\newblock {Island myriads in periodic potentials}.
\newblock {\em Chaos: An Interdisciplinary Journal of Nonlinear Science},
  34(3):033115, 03 2024.

\end{thebibliography}
\end{document}